\documentclass [journal,onecolumn,11pt]{IEEEtran}
\usepackage{amsfonts,amsmath,amssymb}
\usepackage{indentfirst, setspace}
\usepackage{url,float}
\usepackage{color}
\usepackage[a4paper,ignoreall]{geometry}
\usepackage{longtable}
\usepackage{graphicx}
\usepackage[a4paper,ignoreall]{geometry}
\usepackage{multicol}
\usepackage{stfloats}
\usepackage{enumerate}
\usepackage{cite}
\usepackage{amsthm}
\usepackage{mathrsfs}
\usepackage{multirow}
\usepackage{amssymb}
\usepackage[all]{xy}
\usepackage[overload]{empheq}
\usepackage[square, comma, sort&compress, numbers]{natbib}
\usepackage{float}
\usepackage{booktabs}
\usepackage{diagbox}
\usepackage{multirow}
\usepackage{makecell}
\usepackage{threeparttable}
\usepackage{bookmark}
\usepackage{bm}
\usepackage{empheq}
\usepackage{amsthm,amsmath,amssymb}
\usepackage{array,booktabs,longtable,multirow,makecell,pdflscape}
\newcolumntype{P}[1]{>{\raggedright\arraybackslash}p{#1}}
\newcolumntype{C}[1]{>{\centering\arraybackslash}p{#1}}
\newcolumntype{P}[1]{>{\raggedright\arraybackslash}p{#1}}
\allowdisplaybreaks[4]

\newtheorem{theorem}{Theorem}[section]

\newtheorem{proposition}[theorem]{Proposition}

\newtheorem{lemma}[theorem]{Lemma}
\newtheorem{definition}[theorem]{Definition}
\newtheorem{example}{Example}

\newtheorem{remark}{Remark}

\makeatletter
\newcommand{\figcaption}{\def\@captype{figure}\caption}
\newcommand{\tabcaption}{\def\@captype{table}\caption}
\makeatother

\begin{document}

    \title{Minimum Distances of Binary Goppa Codes and Constructions with Prescribed Alternating Automorphism Groups}
   \author{
    {Tianni He, Kangquan Li, Longjiang Qu}
    \thanks{Tianni He, Kangquan Li, and Longjiang Qu are with the College of Science, National University of Defense Technology, Changsha, 410073, China. E-mail:  hetianni@foxmail.com, likangquan11@nudt.edu.cn, ljqu\_happy@hotmail.com. This work is supported by the National Key Research and Development Program of China under Grant 2024YFA101300, and the National Natural Science Foundation of China (NSFC) under Grants 12525115, 12571579. Hunan Provincial Natural Science Foundation of China under Grant 2026JJ40001.}
    }
    
    \maketitle{}

    \begin{abstract} 
Goppa codes are a well-known class of linear codes with important applications in cryptography. Determining the minimum distance of Goppa codes and constructing Goppa codes with prescribed automorphism groups are both meaningful and challenging problems in coding theory. In this paper, we first study the minimum distance of binary separable Goppa codes. For the two classes \(g(X)=f(X^t)\) and \(g(X)=A(X)h(\phi(X))\), we give criteria for attaining the designed distance and derive several infinite families whose minimum distances are determined. We then construct binary Goppa codes and their related codes with \(A_4\) or \(A_5\) automorphism groups. These constructions also naturally yield binary quasi-cyclic Goppa codes and their related codes. Moreover, by applying the minimum-distance criteria developed above, we determine the parameters of one class of the constructed \(A_4\)-invariant Goppa codes.   
    \end{abstract}

    \begin{IEEEkeywords}
        Goppa codes, minimum distance, quasi-cyclic codes, automorphism group
    \end{IEEEkeywords}

    \section{Introduction}
    Goppa codes play an important role in coding theory and code-based cryptography \cite{Goppa1970,Goppa1971,2-21,HuffmanPless2003}. In code-based cryptosystems, selecting an appropriate code involves balancing public-key size, security, error-correction capability and so on. The McEliece public-key cryptosystem, one of the earliest code-based cryptosystems, was originally built from binary irreducible Goppa codes \cite{McEliece1978}. Its modern variant, Classic McEliece, remains one of the most important candidates in post-quantum cryptography \cite{NIST2022}. 
    
    Although Goppa codes have shown strong resistance to many known structural attacks, their use in code-based cryptography is often constrained by large public-key sizes. This motivates the study of structured subclasses of Goppa codes and related alternant codes, especially those with nontrivial automorphism groups, since such symmetries may lead to more compact descriptions and more efficient implementations \cite{Gaborit2005,BergerCayrelGaboritOtmani2009,MisoczkiBarreto2009,BarretoLindnerMisoczki2011,Persichetti2012}. Besides this cryptographic motivation, constructing Goppa codes with special algebraic structures and studying their parameters are also meaningful and challenging problems in algebraic coding theory.
    
    The structure and properties of Goppa codes have been studied from several perspectives, including connections to cyclic codes \cite{Berger1999,Berger2000a,BommierBlanchet2000}, enumeration problems \cite{15,9,10}, decoding algorithms for twisted Goppa codes \cite{22,Sun}, and dimension analysis \cite{VanDerVlugt1990,VanDerVlugt1991,Veron2001,Veron2005,QuanYue2024}. Although there have been many studies on Goppa codes, the determination of the minimum distance remains a formidable challenge.
    In 1992,  C. J. Moreno and O. Moreno \cite{MorenoMoreno1992} used exponential sums to construct a subclass of Goppa codes whose minimum distance is equal to \(2t+1\). In 1995,  S. V. Bezzateev and N. A. Shekhunova \cite{BezzateevShekhunova1995} proved that, for separable Goppa polynomials \(G(X)=X^t+A\), where \(A\) is a \(t\)-th power in \(\mathbb F_{2^m}\) and \(t\mid(2^m-1)\), the minimum distance attains the designed distance \(2t+1\). In 1998, P. V\'eron \cite{Veron1998} studied Goppa codes defined by trace operators and obtained parity information on their minimum distances. In 2008,  S. V. Bezzateev and N. A. Shekhunova \cite{BezzateevShekhunova2008} proved that certain binary separable Goppa codes form a chain and that the minimum distances of all codes in the chain can be determined. In 2024, Y. Wu et al. \cite{WuLiHuYue2024} extended the result of \cite{BezzateevShekhunova1995} to the family \(G(X)=X^{3t}+1\), under the assumption \(t\mid(2^m-1)\), and proved that the minimum distance is \(6t+1\) when certain explicit conditions on \(t\) are satisfied.
    
    Beyond the study of parameters, motivated in part by cryptographic applications, the construction of Goppa codes and related alternant codes with prescribed automorphisms has also attracted attention. T. P. Berger studied cyclic alternant codes induced by automorphisms of generalized Reed--Solomon codes and investigated Goppa and related codes invariant under prescribed permutations \cite{Berger1999,Berger2000a,Berger2000b}.  Building on T. P. Berger's work, X. Li and Q. Yue \cite{LiYue2022Dihedral} constructed binary expurgated and extended Goppa codes with dihedral automorphism groups via \(D_{2n}\)-orbits in \(PGL(2,2^m)\), but gave few examples and did not study their parameters.


The main contributions of this paper are twofold. First, we develop two criteria for determining when binary separable Goppa codes attain their designed distance. The first criterion applies to power-composite Goppa polynomials \(g(X)=f(X^t)\), where multiplicative cosets are used to construct the support, see Theorem \ref{d1}. The second applies to composite Goppa polynomials \(g(X)=A(X)h(\phi(X))\), where complete fibers of the polynomial map \(\phi\) are used, see Theorem \ref{d2}.  As applications, we obtain several infinite families of binary Goppa codes with determined minimum distance, including the families defined by \(g(X)=X^{2t}+X^t+1\), \(g(X)=X^{3t}+X^t+1\), and \(g(X)=X^4+cX\). We summarize the specific results in Table~\ref{tab:main-distance-families}; for all three families, the support is \(L=\{\alpha\in\mathbb F_{2^m}:g(\alpha)\ne0\}\), and in the table \(2^m-1=t\ell\) and \(c\in\mathbb F_{2^m}^*\).

\begin{table}[htbp]
\centering
\begingroup
\normalsize
\setlength{\tabcolsep}{5.5pt}
\renewcommand{\arraystretch}{1.12}
\caption{Three explicit families of Goppa codes $\Gamma(L,g)$ with determined minimum distance}
\label{tab:main-distance-families}
\begin{tabular}{c c c c c}
\toprule
No. & \(g(X)\) & Conditions & \(d\) & Reference \\
\midrule
\(1\) & \(X^{2t}+X^t+1\) & \(5\mid\ell\) or \(7\mid\ell\) & \(4t+1\) & Proposition~\ref{d11}\\
\(2\) & \(X^{3t}+X^t+1\) & \(15\mid\ell\), \(21\mid\ell\), or \(31\mid\ell\) & \(6t+1\) & Proposition~\ref{d12}\\
\(3\) & \(X^4+cX\) & \(m\ge6\) even & \(9\) & Theorem~\ref{thm:x4cx-distance}\\
\bottomrule
\end{tabular}
\endgroup
\end{table}

Second, we study the construction of quasi-cyclic Goppa codes. 
 We construct binary Goppa codes, expurgated Goppa codes and extended Goppa codes with \(A_4\) or \(A_5\) automorphism groups by using the orbit decompositions of \(A_4\)- and \(A_5\)-subgroups of \(PGL(2,2^m)\) on \(\overline{\mathbb F}_{2^m}\), see Theorems~\ref{5}, \ref{2}, and \ref{thm:A5-construction}. These constructions also naturally yield binary quasi-cyclic Goppa codes and their related codes. 
Compared with previous constructions of quasi-cyclic Goppa codes, our
constructions are based on the alternating groups \(A_4\) and \(A_5\) and yield explicit orbit-based families. Moreover, by applying the minimum-distance criterion developed in the first part, we determine the exact parameters
\([\,2^m-4,\;2^m-4m-4,\;9\,]\) of the representative \(A_4\)-invariant
binary Goppa code \(\Gamma(L,g)\), where \(g(X)=X^4+X\) and
\(L=\mathbb F_{2^m}\setminus\mathbb F_4\), see Theorems~\ref{main}.

The rest of this paper is organized as follows. In Section~\ref{1}, we review some basic notation and necessary preliminaries. In Section~\ref{sec3}, we establish two minimum-distance criteria for binary separable Goppa codes and derive several infinite families with determined minimum distance. In Section~\ref{sec4}, we construct binary Goppa codes and related codes with alternating automorphism groups, including the \(A_4\)- and \(A_5\)-constructions, and determine parameters for a representative \(A_4\)-invariant family. In Section~\ref{sec5}, we conclude the paper.

\section{Preliminaries}\label{1}
In this paper, we always assume that \(m\) is a positive integer. We denote by \(\mathbb F_{2^m}\) the finite field with \(2^m\) elements, by \(\mathbb F_{2^m}^*\) its multiplicative group, and by \(\overline{\mathbb F}_{2^m}=\mathbb F_{2^m}\cup\{\infty\}\) a coordinate set for the projective line. We will introduce some basic knowledge in the following subsections.

\subsection{Goppa Codes, Expurgated and Extended Goppa Codes}
First, we recall the definitions of Goppa codes, expurgated Goppa codes, and extended Goppa codes.

\begin{definition}\label{27}\cite{coding, HuffmanPless2003,2-21}
Let \(g(X)=\sum_{i=0}^{r}g_iX^i\in\mathbb F_{2^m}[X]\) be a polynomial of degree \(r\), where \(g_r\ne0\), and let \(L=(\alpha_1,\ldots,\alpha_n)\) be an \(n\)-tuple of distinct elements of \(\mathbb F_{2^m}\) such that \(g(\alpha_i)\ne0\) for \(i=1,2,\ldots,n\). For \(\mathbf c=(c_1,\ldots,c_n)\in\mathbb F_2^n\), let
\[
R_{\mathbf c}(X)=\sum_{i=1}^{n}\frac{c_i}{X-\alpha_i}.
\]
The Goppa code \(\Gamma(L,g)\) is defined as
\[
\Gamma(L,g)=\{\mathbf c\in\mathbb F_2^n:R_{\mathbf c}(X)\equiv0 \pmod{g(X)}\}.
\]
The \(n\)-tuple \(L\) is called the support of the code, and the polynomial \(g(X)\) is called the Goppa polynomial.

The expurgated Goppa code \(\widetilde{\Gamma}(L,g)\) of \(\Gamma(L,g)\) is defined by
\[
\widetilde{\Gamma}(L,g)=
\left\{\mathbf c=(c_1,\ldots,c_n)\in\Gamma(L,g):\sum_{i=1}^{n}c_i=0\right\}.
\]
The extended Goppa code \(\overline{\Gamma}(L,g)\) is defined by
\[
\overline{\Gamma}(L,g)=
\left\{\mathbf c=(c_1,\ldots,c_n,c_{n+1})\in\mathbb F_2^{n+1}:
(c_1,\ldots,c_n)\in\Gamma(L,g),\ \sum_{i=1}^{n+1}c_i=0\right\}.
\]
\end{definition}

The parity-check matrices of these codes are given as follows.

\begin{proposition}\label{47}\cite{coding, HuffmanPless2003,2-21}
With the notation in Definition~\ref{27}, the Goppa code \(\Gamma(L,g)\) has a parity-check matrix
\[
H=
\begin{pmatrix}
g(\alpha_1)^{-1} & g(\alpha_2)^{-1} & \cdots & g(\alpha_n)^{-1}\\
\alpha_1g(\alpha_1)^{-1} & \alpha_2g(\alpha_2)^{-1} & \cdots & \alpha_ng(\alpha_n)^{-1}\\
\vdots & \vdots & \ddots & \vdots\\
\alpha_1^{r-1}g(\alpha_1)^{-1} & \alpha_2^{r-1}g(\alpha_2)^{-1} & \cdots & \alpha_n^{r-1}g(\alpha_n)^{-1}
\end{pmatrix}.
\]
Moreover, the expurgated Goppa code \(\widetilde{\Gamma}(L,g)\) has a parity-check matrix
\[
\widetilde H=
\begin{pmatrix}
g(\alpha_1)^{-1} & g(\alpha_2)^{-1} & \cdots & g(\alpha_n)^{-1}\\
\alpha_1g(\alpha_1)^{-1} & \alpha_2g(\alpha_2)^{-1} & \cdots & \alpha_ng(\alpha_n)^{-1}\\
\vdots & \vdots & \ddots & \vdots\\
\alpha_1^{r}g(\alpha_1)^{-1} & \alpha_2^{r}g(\alpha_2)^{-1} & \cdots & \alpha_n^{r}g(\alpha_n)^{-1}
\end{pmatrix}.
\]
If \(g(\infty)=g_r\), then the extended Goppa code \(\overline{\Gamma}(L,g)\) has a parity-check matrix
\[
\overline H=
\begin{pmatrix}
g(\alpha_1)^{-1} & g(\alpha_2)^{-1} & \cdots & g(\alpha_n)^{-1} & 0\\
\alpha_1g(\alpha_1)^{-1} & \alpha_2g(\alpha_2)^{-1} & \cdots & \alpha_ng(\alpha_n)^{-1} & 0\\
\vdots & \vdots & \ddots & \vdots & \vdots\\
\alpha_1^{r-1}g(\alpha_1)^{-1} & \alpha_2^{r-1}g(\alpha_2)^{-1} & \cdots & \alpha_n^{r-1}g(\alpha_n)^{-1} & 0\\
\alpha_1^{r}g(\alpha_1)^{-1} & \alpha_2^{r}g(\alpha_2)^{-1} & \cdots & \alpha_n^{r}g(\alpha_n)^{-1} & g(\infty)^{-1}
\end{pmatrix}.
\]
\end{proposition}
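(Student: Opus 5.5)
The plan is to verify all three parity-check matrices at once by expanding the condition $R_{\mathbf c}(X)\equiv 0 \pmod{g(X)}$ into linear equations in the $c_i$.

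\textbf{The matrix $H$.} Since $g(\alpha_i)\ne0$, the linear factor $X-\alpha_i$ is invertible modulo $g$. Its inverse is
\[
(X-\alpha_i)^{-1}\equiv -\frac{g(X)-g(\alpha_i)}{X-\alpha_i}\,g(\alpha_i)^{-1}\pmod{g(X)},
\]
which we check by multiplying through by $X-\alpha_i$. The quotient $\frac{g(X)-g(\alpha_i)}{X-\alpha_i}$ is a polynomial of degree $r-1$. Writing $g(X)=\sum_k g_kX^k$, we get
\[
\frac{g(X)-g(\alpha_i)}{X-\alpha_i}=\sum_{j=0}^{r-1}X^{j}\sum_{k=j+1}^{r}g_k\alpha_i^{k-1-j}.
\]
Hence $\mathbf c\in\Gamma(L,g)$ if and only if this polynomial of degree $<r$, summed with weights $c_i g(\alpha_i)^{-1}$, vanishes identically. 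That is, for each $j=0,\dots,r-1$,
\[
\sum_i c_i g(\alpha_i)^{-1}\sum_{k=j+1}^r g_k\alpha_i^{k-1-j}=0 .
\]
These $r$ equations form $C\,H'$, where $H'$ is the matrix with rows $\alpha_i^{s}g(\alpha_i)^{-1}$ for $s=0,\dots,r-1$. The coefficient matrix $C$ is upper triangular in $(j,s)$ with $s=k-1-j$, and its antidiagonal entries all equal $g_r\ne0$. So $C$ is invertible, and the equations are equivalent to $H'\mathbf c^T=0$. This gives the matrix $H$.

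\textbf{The matrix $\widetilde H$.} The expurgated code adds the condition $\sum_i c_i=0$. I will show that, modulo the rows of $H$, this condition is equivalent to the extra row $\sum_i c_i\alpha_i^r g(\alpha_i)^{-1}=0$. Note that
\[
\alpha_i^r g(\alpha_i)^{-1}=g_r^{-1}\Bigl(1-\sum_{k<r}g_k\alpha_i^k\Bigr)g(\alpha_i)^{-1}
= g_r^{-1}\Bigl(1-\sum_{k<r}g_k\alpha_i^k g(\alpha_i)^{-1}\Bigr),
\]
since $\alpha_i^r g_r = g(\alpha_i)-\sum_{k<r}g_k\alpha_i^k$. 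The terms $\sum_{k<r}g_k\alpha_i^k g(\alpha_i)^{-1}$ are combinations of the rows of $H$. Therefore, given $H\mathbf c^T=0$, we have $\sum_i c_i\alpha_i^r g(\alpha_i)^{-1}=g_r^{-1}\sum_i c_i$. So adding the row $(\alpha_i^r g(\alpha_i)^{-1})_i$ to $H$ gives exactly $\widetilde\Gamma(L,g)$. Because $\mathbf c$ is binary, $\sum c_i=0$ over $\mathbb F_{2^m}$ is the same as over $\mathbb F_2$.

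\textbf{The matrix $\overline H$.} The same identity extends to the $(n+1)$-th coordinate. The first $r$ rows of $\overline H$ impose $(c_1,\dots,c_n)\in\Gamma(L,g)$. Given this, the last row reads
\[
g_r^{-1}\sum_{i\le n}c_i+c_{n+1}g_r^{-1}=0,
\]
using $g(\infty)=g_r$. This is equivalent to $\sum_{i=1}^{n+1}c_i=0$, which is the defining condition of $\overline\Gamma(L,g)$.

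The only step requiring care is the invertibility of the triangular change of basis $C$, which is where the hypothesis $g_r\ne0$ is used.
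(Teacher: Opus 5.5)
Your proof is correct. The paper itself gives no proof of this proposition (it only quotes it from the cited references), and your route is the standard derivation behind those references: invert $X-\alpha_i$ modulo $g$, undo the change of basis with entries $g_{j+s+1}$, and use $\alpha_i^r g(\alpha_i)^{-1}=g_r^{-1}\bigl(1-\sum_{k<r}g_k\alpha_i^k g(\alpha_i)^{-1}\bigr)$ for the extra row. The only quibble is terminological: that coefficient matrix is zero below the antidiagonal ($j+s\ge r$) rather than upper triangular, but its determinant is $\pm g_r^{\,r}\ne0$, so the conclusion is unaffected.
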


\subsection{Minimum distance of Goppa codes}
First, we introduce the definition of the minimum distance in linear codes.
\begin{definition}\cite{coding}
Let $\mathbf{x}$ and $\mathbf{y}$ be words of length $n$ over an alphabet $A$. The (Hamming) distance from $\mathbf{x}$ to $\mathbf{y}$, denoted by $d(\mathbf{x}, \mathbf{y})$, is defined to be the number of places at which $\mathbf{x}$ and $\mathbf{y}$ differ. If $\mathbf{x} = (x_1, \ldots, x_n)$ and $\mathbf{y} = (y_1, \ldots ,y_n)$, then 
\[
d(\mathbf{x}, \mathbf{y}) = d(x_1, y_1) + \cdots + d(x_n, y_n), 
\]
where $x_i$ and $y_i$ are regarded as words of length $1$, and 
\[
d(x_i, y_i) = 
\begin{cases} 
1 & \text{if } x_i \neq y_i \\
0 & \text{if } x_i = y_i.
\end{cases}
\]
\end{definition}

\begin{definition}\cite{coding}
  For a code $C$ containing at least two words, the (minimum) distance of $C$, denoted by $d(C)$, is 
\[
d(C) = \min\{d(\mathbf{x}, \mathbf{y}) : \mathbf{x}, \mathbf{y} \in C,\ \mathbf{x} \neq \mathbf{y}\}.
\]  
\end{definition}
Concerning the minimum distance of Goppa codes, we have the following two lemmas.
\begin{lemma} \cite{2-21}\label{26}
Let \(g(X)\in\mathbb F_{2^m}[X]\) be a Goppa polynomial of degree \(r\), and let
\(L=\{\alpha_1,\ldots,\alpha_n\}\subseteq\mathbb F_{2^m}\) be a support. Then the binary Goppa code \(\Gamma(L,g)\) has parameters \([n,k,d]\), where \(k\ge n-mr\) and \(d\ge r+1\).
\end{lemma}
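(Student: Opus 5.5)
The plan is to read both bounds off the parity-check matrix \(H\) of Proposition~\ref{47}.

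\emph{Dimension.} \(H\) is an \(r\times n\) matrix over \(\mathbb F_{2^m}\). Fix a basis of \(\mathbb F_{2^m}\) over \(\mathbb F_2\). Expanding each entry of \(H\) into its \(m\) coordinates turns \(H\) into an \(mr\times n\) binary matrix \(H_2\). For \(\mathbf c\in\mathbb F_2^n\) we have \(H\mathbf c^T=0\) if and only if \(H_2\mathbf c^T=0\), because a field element is zero exactly when all its coordinates are zero and coordinate extraction is \(\mathbb F_2\)-linear. Hence \(\Gamma(L,g)\) is the binary kernel of \(H_2\). Since \(\operatorname{rank}H_2\le mr\), this gives \(k\ge n-mr\).

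\emph{Distance.} Take a nonzero codeword \(\mathbf c\) with support \(I\), \(|I|=w\). I will show \(w\ge r+1\) by contradiction, and I expect this to be the only step needing care.

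Suppose \(w\le r\), and look at the columns of \(H\) indexed by \(I\). Restricted to its first \(w\) rows, this \(w\times w\) submatrix factors as a Vandermonde matrix in the distinct \(\alpha_i\) times the invertible diagonal matrix \(\operatorname{diag}(g(\alpha_i)^{-1})\). So it is nonsingular, and these \(w\) columns are linearly independent over \(\mathbb F_{2^m}\). But \(H\mathbf c^T=0\) says \(\sum_{i\in I}\text{col}_i=0\) with all coefficients equal to \(1\), a nontrivial linear relation. This is a contradiction, so \(d\ge r+1\).

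As a cross-check, the bound also follows from the defining congruence. Write \(R_{\mathbf c}(X)=\sigma'(X)/\sigma(X)\) with \(\sigma(X)=\prod_{i\in I}(X-\alpha_i)\). Since \(\sigma\) is coprime to \(g\), the condition \(R_{\mathbf c}\equiv0\pmod g\) means \(g\mid\sigma'\). In characteristic \(2\), \(\sigma'\) is a square and is nonzero, so \(\deg\sigma'\le w-1\) together with \(g\mid\sigma'\) forces \(w-1\ge r\). This is the same bound \(d\ge r+1\).
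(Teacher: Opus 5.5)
Your proof is correct. The paper gives no proof of this lemma and only cites it, and the standard argument behind that citation is exactly yours: expand the parity-check matrix $H$ of Proposition~\ref{47} over $\mathbb F_2$ to get $k\ge n-mr$, and use that any $w\le r$ columns of $H$, restricted to their first $w$ rows, form a Vandermonde matrix times an invertible diagonal matrix to get $d\ge r+1$. Your cross-check via $g\mid\sigma'$ is the same device the paper uses in the proofs of Theorems~\ref{d1} and~\ref{d2}; combined with your observation that $\sigma'$ is a square, it also gives the stronger bound $2r+1$ of Lemma~\ref{41} for separable $g$.
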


\begin{lemma} \cite{2-21}\label{41}
With the notation of Lemma~\ref{26}, if \(g(X)\) is separable, that is, if it has no multiple roots over the algebraic closure of \(\mathbb F_{2^m}\), then \(d\ge 2r+1\). The number \(2r+1\) is called the designed distance of the binary separable Goppa code \(\Gamma(L,g)\).
\end{lemma}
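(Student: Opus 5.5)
The idea is to show that for a separable Goppa polynomial $g$, every nonzero codeword of $\Gamma(L,g)$ is also a codeword of $\Gamma(L,g^2)$. Since $\deg g^2=2r$, Lemma~\ref{26} applied to $g^2$ then gives $d\ge 2r+1$. The key tool is the error-locator polynomial of a codeword.

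Fix a nonzero $\mathbf c\in\Gamma(L,g)$ with support $S=\{i:c_i=1\}$ of size $w$. Put
\[
\sigma_{\mathbf c}(X)=\prod_{i\in S}(X-\alpha_i).
\]
Since we work in characteristic $2$ and $c_i\in\{0,1\}$, the derivative is
\[
\sigma_{\mathbf c}'(X)=\sum_{i\in S}\prod_{j\in S,\,j\ne i}(X-\alpha_j),
\]
so $R_{\mathbf c}(X)=\sigma_{\mathbf c}'(X)/\sigma_{\mathbf c}(X)$. Every $\alpha_i$ satisfies $g(\alpha_i)\ne0$, so $\sigma_{\mathbf c}$ is coprime to $g$. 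Hence the condition $R_{\mathbf c}\equiv0\pmod g$ is equivalent to $g\mid\sigma_{\mathbf c}'$.

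Next we use characteristic $2$ again: the derivative of any polynomial contains only even-degree terms. Therefore $\sigma_{\mathbf c}'(X)=u(X)^2$ for some $u$ over $\mathbb F_{2^m}$, because squaring is a bijection on $\mathbb F_{2^m}$ and so each coefficient has a square root. Since $g$ is separable, it is a product of distinct irreducible factors, i.e.\ squarefree. Each irreducible factor $p$ of $g$ divides $u^2$, so $p\mid u$. Because the factors are distinct, $g\mid u$ and hence $g^2\mid\sigma_{\mathbf c}'$. Consequently $R_{\mathbf c}\equiv0\pmod{g^2}$, which says $\Gamma(L,g)\subseteq\Gamma(L,g^2)$. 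Note that $g^2$ also has no roots in $L$, so $\Gamma(L,g^2)$ is well defined.

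Finally, Lemma~\ref{26} for $\Gamma(L,g^2)$ gives $w\ge 2r+1$. Alternatively one can count degrees directly. The derivative $\sigma'_{\mathbf c}$ has degree at most $w-1$. It is nonzero: its value at any $\alpha_i$ with $i\in S$ is $\prod_{j\ne i}(\alpha_i-\alpha_j)\neq0$. So $2r\le w-1$.

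The only delicate step is passing from $g\mid u^2$ to $g\mid u$. This is exactly where separability enters; once it is in place the rest is routine.
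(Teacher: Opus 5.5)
Your proof is correct and is the standard argument from the cited source \cite{2-21}. In characteristic $2$ the derivative $\sigma_{\mathbf c}'$ is a perfect square, so for square-free $g$ the condition $g\mid\sigma_{\mathbf c}'$ upgrades to $g^2\mid\sigma_{\mathbf c}'$, giving $\Gamma(L,g)=\Gamma(L,g^2)$. The paper gives no proof of this lemma; it cites \cite{2-21}, uses the same $\sigma_S'/\sigma_S$ criterion in the proofs of Theorems~\ref{d1} and~\ref{d2}, and quotes $\Gamma(L,g)=\Gamma(L,g^2)$ in the remark after Theorem~\ref{d1}.
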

\subsection{Code Automorphisms and Group Actions}
We first recall the notions of permutation automorphisms and quasi-cyclic codes.
\begin{definition}
Let \(C\) be a linear code of length \(n\), and let \(\psi\) be a permutation of \(\{1,\ldots,n\}\). If \(\mathbf{c}=(c_1,c_2,\ldots,c_n)\in C\), then
\[
\mathbf{c}^{\psi}=(c_{\psi(1)},c_{\psi(2)},\ldots,c_{\psi(n)}).
\]
The permutation \(\psi\) is called a permutation automorphism of \(C\) if \(\mathbf{c}^{\psi}\in C\) for every \(\mathbf{c}\in C\). The group of all permutation automorphisms of \(C\) is called the permutation automorphism group of \(C\).
\end{definition}

\begin{definition}
Let \(C\) be a linear code of length \(n\) and $G$ the permutation
group of \(C\). We say that \(C\) is quasi-cyclic if \(G\) contains a subgroup isomorphic to \(\mathbb Z/\lambda\mathbb Z\), where \(1<\lambda\le n\).
\end{definition}

We also recall the alternating groups used in this paper. Let \(S_n\) be the symmetric group on \(n\) letters. The alternating group \(A_n\) is the subgroup of \(S_n\) consisting of all even permutations. 

Next, we introduce the action of the projective linear group on \(\overline{\mathbb F}_{2^m}\). 
 The general linear group of degree $2$ over ${\mathbb F}_{2^m}$ is
\[
GL(2,2^m)=
\left\{
\begin{pmatrix}
a & b\\
c & d
\end{pmatrix}
:
a,b,c,d\in\mathbb F_{2^m},\ ad-bc\ne 0
\right\}.
\]
The projective general linear group of degree $2$ over ${\mathbb F}_{2^m}$ is defined by
\[
PGL(2,2^m)=GL(2,2^m)/\{kI_2:k\in\mathbb F_{2^m}^*\},
\]
where \(I_2\) is the \(2\times 2\) identity matrix. Thus, two nonsingular matrices define the same element of \(PGL(2,2^m)\) if they differ by a nonzero scalar multiple.
 For
\[
M=
\begin{pmatrix}
a & b\\
c & d
\end{pmatrix}
\in PGL(2,2^m),
\]
it acts on \(\overline{\mathbb F}_{2^m}\) by the fractional linear transformation
\[
M(\zeta)=\frac{a\zeta+b}{c\zeta+d}.
\]
Here the usual conventions are used:
\[
M(\infty)=
\begin{cases}
a/c, & c\ne 0,\\
\infty, & c=0,
\end{cases}
\qquad
M(-d/c)=\infty\quad(c\ne0).
\]

\section{Two Frameworks for Determining the Minimum Distance of Binary Separable Goppa Codes}\label{sec3}
In this section, we investigate the minimum distance of two classes of binary separable Goppa codes. The first class is defined by Goppa polynomials of the form \(g(X)=f(X^t)\), where multiplicative cosets in \(\mathbb F_{2^m}^*\) are used to construct the support. The second class is defined by Goppa polynomials of the form \(g(X)=A(X)h(\phi(X))\), where complete fibers of the polynomial map \(\phi\) are used to construct the support. These constructions make it possible, in certain cases, to explicitly construct codewords attaining the designed distance and hence to determine the exact minimum distance of the corresponding Goppa codes.

\subsection{Power-Composite Goppa Polynomials \(g(X)=f(X^t)\)}

\begin{lemma}\label{3}
Let \(g(X)=f(X^t)\), where \(f(X)\in\mathbb F_{2^m}[X]\), \(f(0)\ne0\), and \(t>0\) is odd. Then \(g(X)\) is separable if and only if \(f(X)\) is separable.
\end{lemma}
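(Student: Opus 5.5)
The plan is to reason directly with roots in an algebraic closure \(\overline{K}\) of \(\mathbb F_{2^m}\), rather than with derivatives. Since \(t\) is odd and the characteristic is \(2\), the polynomial \(X^t-\beta\) is separable for every \(\beta\ne0\). Indeed, its derivative \(tX^{t-1}=X^{t-1}\) vanishes only at \(0\), and \(0\) is not a root when \(\beta\ne0\). So each such \(\beta\) has exactly \(t\) distinct \(t\)-th roots in \(\overline{K}\).

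Factor \(f(X)=c\prod_{j=1}^{s}(X-\beta_j)^{e_j}\) over \(\overline{K}\), with the \(\beta_j\) pairwise distinct. The hypothesis \(f(0)\ne0\) gives \(\beta_j\ne0\) for all \(j\). Substituting \(X^t\) yields
\[
g(X)=f(X^t)=c\prod_{j=1}^{s}\bigl(X^t-\beta_j\bigr)^{e_j}.
\]
Each factor \(X^t-\beta_j\) splits into \(t\) distinct linear factors by the observation above. For \(j\ne j'\), the factors \(X^t-\beta_j\) and \(X^t-\beta_{j'}\) share no root, since a common root \(\gamma\) would give \(\beta_j=\gamma^t=\beta_{j'}\).

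Hence the roots of \(g\) are the \(st\) distinct elements \(\gamma\) with \(\gamma^t=\beta_j\), and each such \(\gamma\) has multiplicity exactly \(e_j\). It follows that \(g\) is separable if and only if all \(e_j=1\), which is exactly the statement that \(f\) is separable.

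As a cross-check, one could use \(g'(X)=tX^{t-1}f'(X^t)=X^{t-1}f'(X^t)\). Then \(\gcd(g,g')=1\) if and only if \(f\) and \(f'\) have no common root, again using that \(0\) is not a root of \(g\). The multiplicity bookkeeping above is cleaner, though.

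I do not expect a real obstacle. The only points needing care are where the hypotheses enter. Oddness of \(t\) ensures \(t\ne0\) in \(\mathbb F_2\), so \(X^t-\beta\) is separable; for even \(t\), \(X^t\) is a square and \(g\) is never separable. The condition \(f(0)\ne0\) excludes the root \(0\) of \(X^t\), which has multiplicity \(t\). Both uses should be stated explicitly.
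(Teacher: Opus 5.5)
Your proof is correct, but your main argument differs from the paper's. The paper's proof is essentially what you list as the ``cross-check''.

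The paper works only with the identity \(g'(X)=X^{t-1}f'(X^t)\), which is where oddness of \(t\) enters.
\begin{itemize}
\item If \(\alpha\) is a common root of \(g\) and \(g'\), then \(\alpha\ne0\) because \(f(0)\ne0\), so \(\alpha^{t-1}\ne0\) and \(y=\alpha^t\) is a common root of \(f\) and \(f'\).
\item Conversely, a common root \(y\) of \(f\) and \(f'\) is nonzero. Any \(\alpha\) with \(\alpha^t=y\) is then a common root of \(g\) and \(g'\).
\end{itemize}

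You instead factor \(f\) over \(\overline{K}\). You use oddness of \(t\) only to see that \(X^t-\beta\) is separable for \(\beta\ne0\), and \(f(0)\ne0\) to ensure every \(\beta_j\ne0\). You then do explicit multiplicity bookkeeping.

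The paper's route is shorter and never needs a factorization. Yours proves more: each of the \(t\) distinct roots of \(g\) lying over \(\beta_j\) has multiplicity exactly \(e_j\). So \(g\) has exactly \(t\) times as many distinct roots as \(f\), with the same multiplicity pattern, even when \(f\) is not separable. The derivative argument only detects whether some repeated root exists.

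Your side remark that \(g\) is never separable for even \(t\) needs \(f\) nonconstant. This does not affect the proof.
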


\begin{proof}
Since \(t\) is odd, in characteristic \(2\) we have
\[
g'(X)=t X^{t-1}f'(X^t)=X^{t-1}f'(X^t).
\]
If \(g\) is not separable, then there exists \(\alpha\) in the algebraic closure such that \(g(\alpha)=g'(\alpha)=0\). Since \(f(0)\ne0\), we have \(\alpha\ne0\). Let \(y=\alpha^t\). Then \(f(y)=0\) and \(f'(y)=0\), so \(f\) is not separable.

Conversely, if \(f\) is not separable, then there exists \(y\ne0\) in the algebraic closure such that \(f(y)=f'(y)=0\). Choose \(\alpha\) with \(\alpha^t=y\). Then \(g(\alpha)=f(\alpha^t)=f(y)=0\), and
\[
g'(\alpha)=\alpha^{t-1}f'(\alpha^t)=\alpha^{t-1}f'(y)=0.
\]
Thus, \(g\) is not separable. Hence \(g\) is separable if and only if \(f\) is separable.
\end{proof}

We now give a general construction theorem.

\begin{theorem}\label{d1}
Let \(f(X)\in\mathbb F_{2^m}[X]\) be separable with \(f(0)\ne0\), and let \(\deg f=r\). Let \(g(X)=f(X^t)\), where \(t\mid(2^m-1)\), and set
\[
L=\{\alpha\in\mathbb F_{2^m}:g(\alpha)\ne0\},\quad 2^m-1=t\ell.
\]
Suppose that there exists a monic polynomial \(P(X)\in\mathbb F_{2^m}[X]\) of degree \(2r\) such that
\begin{enumerate}
\item \(P(X)\mid X^\ell-1\),
\item \(\gcd(P(X),f(X))=1\),
\item \(f(X)\mid P(X)+XP'(X)\).
\end{enumerate}
Then the minimum distance of the binary Goppa code $\Gamma(L,g)$ equals its designed distance, i.e., 
$d=2rt+1.$
\end{theorem}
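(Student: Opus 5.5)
Since \(g=f(X^t)\) is separable by Lemma~\ref{3} (as \(t\mid 2^m-1\) is odd) and \(\deg g=rt\), Lemma~\ref{41} gives \(d\ge 2rt+1\). So the whole task is to exhibit a codeword of weight exactly \(2rt+1\). I will look for it as the indicator vector of a support set of the form \(S\cup\{0\}\) (or a variant), where \(S\) is a union of \(2r\) cosets of the subgroup \(\mu_t=\{x:x^t=1\}\) of \(\mathbb F_{2^m}^*\). Concretely, \(P\mid X^\ell-1\) and \(X^\ell-1\) is separable with roots \(\{x^t:x\in\mathbb F_{2^m}^*\}\). So \(P\) has \(2r\) distinct roots \(\beta_1,\dots,\beta_{2r}\), each a \(t\)-th power. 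Let \(S=\{\alpha:\alpha^t\in\{\beta_j\}\}\); this is \(2rt\) elements. Condition 2 guarantees \(g(\alpha)=f(\beta_j)\ne0\), so \(S\subseteq L\). Also \(0\in L\) since \(f(0)\ne0\).

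The key computation is
\[
\sum_{\alpha\in S}\frac{1}{X-\alpha}=\frac{Q'(X)}{Q(X)},\qquad Q(X)=\prod_{\alpha\in S}(X-\alpha)=P(X^t).
\]
Hence \(Q'(X)=tX^{t-1}P'(X^t)=X^{t-1}P'(X^t)\), and adding the term \(1/X\) for the point \(0\) gives
\[
R(X)=\frac{X^{t-1}P'(X^t)}{P(X^t)}+\frac1X=\frac{X^tP'(X^t)+P(X^t)}{XP(X^t)}.
\]
The numerator is \(\bigl(P+YP'\bigr)(Y)\) with \(Y=X^t\). By condition 3 it is divisible by \(f(X^t)=g(X)\). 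Since \(\gcd(XP(X^t),g)=1\) (by condition 2 and \(f(0)\ne0\)), we get \(R\equiv 0\pmod g\). So the indicator of \(S\cup\{0\}\) lies in \(\Gamma(L,g)\), and its weight is \(2rt+1\).

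Combining the two bounds yields \(d=2rt+1\). The only delicate points are bookkeeping. One is that the roots of \(P\) are distinct, so the \(t\)-th roots give exactly \(2rt\) distinct elements. The other is interpreting "\(\equiv0\bmod g\)" for a rational function with denominator coprime to \(g\). I expect the main obstacle to be just checking that \(P(X^t)=\prod_{\alpha\in S}(X-\alpha)\), which holds because \(X^t-\beta_j\) splits over \(\mathbb F_{2^m}\) into \(t\) distinct linear factors when \(\beta_j\) is a \(t\)-th power and \(t\mid 2^m-1\).
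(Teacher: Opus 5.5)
Your proposal is correct and follows the paper's proof essentially step for step. The paper also takes \(S=\{0\}\cup\bigcup_i x_i\mu_t\) and notes that \(\sigma_S(X)=XP(X^t)\), so \(\sigma_S'(X)=P(X^t)+X^tP'(X^t)\) is divisible by \(g(X)=f(X^t)\) by condition 3. It then concludes from \(\gcd(\sigma_S,g)=1\) and the lower bound \(d\ge 2rt+1\) obtained from Lemma~\ref{3} and Lemma~\ref{41}. The only difference is cosmetic: you add the \(1/X\) term separately rather than differentiating \(XP(X^t)\) directly, which gives the same numerator.
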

\begin{proof}
By Lemma~\ref{3}, the polynomial \(g(X)=f(X^t)\) is separable. Hence, by Lemma~\ref{41}, the binary Goppa code \(\Gamma(L,g)\) satisfies \(d\ge 2\deg g+1=2rt+1\). Thus, it remains to construct a codeword of weight \(2rt+1\).

Since  \(\ell\mid 2^m-1\), \(\ell\) is odd, and therefore \(X^\ell-1\) has no multiple roots over \(\mathbb F_{2^m}\). Since \(P(X)\mid X^\ell-1\) and \(\deg P=2r\), the \(2r\) roots of \(P(X)\) are distinct. Moreover, as \(\ell\mid 2^m-1\), all roots of  \(P(X)\) lie in \(\mathbb F_{2^m}^*\). Denote the roots of \(P(X)\) by \(y_1,\ldots,y_{2r}\).

Since \(P(X)\mid X^\ell-1\), each root \(y_i\) of \(P\) satisfies \(y_i^\ell=1\). As \(2^m-1=t\ell\), each such \(y_i\) is a \(t\)-th power in \(\mathbb F_{2^m}^*\). Hence, for each \(i\), there exists \(x_i\in\mathbb F_{2^m}^*\) such that \(x_i^t=y_i\). We now construct a subset \(S\subseteq L\) of size \(1+2rt\). Let \(w\) be a generator of \(\mathbb F_{2^m}^*\), and put
\[
H=\{u\in\mathbb F_{2^m}^*:u^t=1\}=\langle w^\ell\rangle .
\]
 Let
\[
S=\{0\}\cup\bigcup_{i=1}^{2r}x_iH .
\]
The cosets \(x_iH\) are pairwise disjoint, and therefore \(|S|=1+2rt\). By \(\gcd(P,f)=1\), we have \(f(y_i)\ne0\) for all \(i\). Thus, for every \(\alpha\in x_iH\), we have \(g(\alpha)=f(\alpha^t)=f(y_i)\ne0\), and also \(g(0)=f(0)\ne0\). Hence \(S\subseteq L\).

Next, we prove that the binary vector whose nonzero positions correspond exactly to the elements of \(S\) is a codeword of \(\Gamma(L,g)\). Let
\[
\sigma_S(X)=\prod_{\beta\in S}(X-\beta).
\]
Then
\[
\sigma_S(X)=X\prod_{i=1}^{2r}\prod_{h\in H}(X-x_ih).
\]
Since \(\prod_{h\in H}(X-x_ih)=X^t-x_i^t=X^t+y_i\), we obtain
\[
\sigma_S(X)=X\prod_{i=1}^{2r}(X^t+y_i)=XP(X^t).
\]
Thus
\[
\sigma_S'(X)=P(X^t)+X^tP'(X^t).
\]
By the assumption \(f(X)\mid P(X)+XP'(X)\), substituting \(X^t\) for \(X\) gives
\[
g(X)=f(X^t)\mid P(X^t)+X^tP'(X^t)=\sigma_S'(X).
\]
Writing \(S=\{\beta_1,\ldots,\beta_{2rt+1}\}\), and using \(S\subseteq L\), we have
\(g(\beta_j)\ne0\) for all \(j\). Hence \(\gcd(\sigma_S(X),g(X))=1\). It follows that
\[
\sum_{j=1}^{2rt+1}\frac{1}{X-\beta_j}
=\frac{\sigma_S'(X)}{\sigma_S(X)}
\equiv0\pmod{g(X)}.
\]
Therefore, the binary vector supported exactly on \(S\) is a codeword of \(\Gamma(L,g)\) of weight \(2rt+1\). Hence \(d\le 2rt+1\). Combining this with the lower bound \(d\ge2rt+1\), we obtain \(d=2rt+1\).
\end{proof}



\begin{remark}
The condition \(t\mid(2^m-1)\) forces \(t\) to be odd. In fact, if \(t=2u\), \(u\mid(2^m-1)\), and \(2^m-1=u\ell\), then a similar argument yields \(d=rt+1\) whenever \(\ell\) satisfies the corresponding conditions above. This follows from the fact that \(\Gamma(L,g)=\Gamma(L,g^2)\) for square-free binary Goppa polynomials \cite{2-21}.
\end{remark}

Theorem~\ref{d1} transforms the problem of finding a binary separable Goppa code attaining the designed distance into the problem of finding an auxiliary polynomial \(P(X)\). Compared with a direct search for \(2\deg g+1\) support points in \(\mathbb F_{2^m}\), this condition is often easier to verify. The theorem also provides a systematic way to obtain infinite families of Goppa codes with determined minimum distance. We give two typical infinite families below.

\begin{proposition}\label{d11}
Let \(t\mid 2^m-1\), and write \(2^m-1=t\ell\). Let
\[
g(X)=X^{2t}+X^t+1,\quad
L=\{\alpha\in\mathbb F_{2^m}:g(\alpha)\ne0\}.
\]
If \(5\mid\ell\) or \(7\mid\ell\), then the binary Goppa code \(\Gamma(L,g)\) has minimum distance
$d=4t+1.$   
\end{proposition}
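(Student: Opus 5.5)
The plan is to apply Theorem~\ref{d1} with $f(X)=X^2+X+1$, so that $r=2$ and $g(X)=f(X^t)=X^{2t}+X^t+1$. First check the hypotheses on $f$. It is separable, since $f'(X)=1$ in characteristic $2$. It also satisfies $f(0)=1\ne0$. Theorem~\ref{d1} then gives $d=2rt+1=4t+1$, provided we exhibit a monic $P(X)\in\mathbb F_{2^m}[X]$ of degree $4$ satisfying conditions (1)--(3) of that theorem. So the whole task is to construct $P$, with one choice for the case $5\mid\ell$ and another for the case $7\mid\ell$.

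First I will reduce condition (3) to conditions on the coefficients of $P$. Write $P(X)=X^4+aX^3+bX^2+cX+d$. In characteristic $2$ we have $P'(X)=aX^2+c$, hence
\[
P(X)+XP'(X)=X^4+bX^2+d .
\]
Let $\omega$ be a root of $f$, so $\omega^2=\omega+1$ and $\omega^4=\omega$. Then the condition $f\mid X^4+bX^2+d$ becomes $\omega+b(\omega+1)+d=0$, that is, $(1+b)\omega+(b+d)=0$. Since $\omega\notin\mathbb F_2$, this forces $b=d=1$. Thus condition (3) holds exactly when $P$ has the shape $X^4+aX^3+X^2+cX+1$, with $a,c$ free.

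Next I need a degree-$4$ divisor of $X^\ell-1$ of this shape.
\begin{itemize}
\item If $5\mid\ell$, take $P=X^4+X^3+X^2+X+1$. It divides $X^5-1$, and $X^5-1$ divides $X^\ell-1$.
\item If $7\mid\ell$, use the factorization $X^7-1=(X+1)(X^3+X+1)(X^3+X^2+1)$ over $\mathbb F_2$. Take $P=(X+1)(X^3+X+1)=X^4+X^3+X^2+1$. Its $X^2$-coefficient is $1$ and its constant term is $1$, and $P$ divides $X^7-1$, which divides $X^\ell-1$.
\end{itemize}

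For condition (2), note that the roots of $P$ are $5$th, respectively $7$th, roots of unity. The roots of $f$ are primitive cube roots of unity. Since $\gcd(3,5)=\gcd(3,7)=1$ and $f(1)=1\ne0$, the polynomials $P$ and $f$ share no root, so $\gcd(P,f)=1$.

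There is no serious obstacle here; the only real step is the coefficient computation showing that condition (3) forces $b=d=1$. After that, finding $P$ is a matter of inspecting the factorizations of $X^5-1$ and $X^7-1$ over $\mathbb F_2$. Theorem~\ref{d1} then yields $d=4t+1$.
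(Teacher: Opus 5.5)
Your proposal is correct and follows the paper's proof: you apply Theorem~\ref{d1} with $f(X)=X^2+X+1$ and the same $P$ for $5\mid\ell$, and for $7\mid\ell$ you take $(X+1)(X^3+X+1)=X^4+X^3+X^2+1$ where the paper takes $(X+1)(X^3+X^2+1)=X^4+X^2+X+1$, both of which satisfy all three conditions. One small caveat: your claim that condition (3) \emph{forces} $b=d=1$ uses only one root of $f$ and tacitly assumes $b,d\in\mathbb F_2$, which fails for even $m$ since then $\omega\in\mathbb F_{2^m}$; the proof only needs the sufficiency direction $P+XP'=X^4+X^2+1=f(X)^2$, so nothing breaks.
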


\begin{proof}
Take \(f(X)=X^2+X+1\). Then \(f(0)\ne0\), and since \(f'(X)=1\), the polynomial \(f(X)\) is separable. Moreover, \(g(X)=f(X^t)\) and \(\deg f=2\).

If \(5\mid\ell\), take
\[
P(X)=X^4+X^3+X^2+X+1.
\]
Then \(P(X)\mid X^5-1\), and hence \(P(X)\mid X^\ell-1\). Moreover, \(\gcd(P(X),f(X))=1\), and a direct calculation gives
\[
P(X)+XP'(X)=X^4+X^2+1=f(X)^2.
\]
Thus, \(P(X)\) satisfies the three conditions in Theorem~\ref{d1}.

If \(7\mid\ell\), take
\[
P(X)=X^4+X^2+X+1.
\]
The same direct verification shows that \(P(X)\) satisfies the three conditions in Theorem~\ref{d1}. Therefore, the conclusion follows from Theorem~\ref{d1}.
\end{proof}
\begin{remark}
Under the hypotheses of Proposition~\ref{d11}, assume further that \(m\) is even and \(t\le 2^{m/2-2}\). Then Proposition~\ref{prop:vdv-dimension} gives the exact dimension of \(\Gamma(L,g)\). More precisely, if \(3\mid \ell\), then the code has parameters \([\,2^m-2t,\ 2^m-2t-2mt,\ 4t+1\,]\); if \(3\nmid \ell\), then the code has parameters \([\,2^m,\ 2^m-2mt,\ 4t+1\,]\).
\end{remark}
\begin{example}
Table~\ref{tab:d11-examples} gives examples for Proposition~\ref{d11}. Over \(\mathbb F_{2^m}\), we take \(g(X)=X^{2t}+X^t+1\) and \(L=\{\alpha\in\mathbb F_{2^m}:g(\alpha)\ne0\}\), with \(\ell=(2^m-1)/t\). The parameters of \(\Gamma(L,g)\) were computed by Magma, and the listed minimum distances agree with Proposition~\ref{d11}.

\begin{table}[htbp]
\centering
\begingroup
\setlength{\tabcolsep}{5.5pt}
\renewcommand{\arraystretch}{1.12}
\caption{Parameters of \(\Gamma(L,g)\) for \(g(X)=X^{2t}+X^t+1\)}
\label{tab:d11-examples}
\begin{tabular}{c c c c c}
\toprule
\(m\) & \(t\) & \(\ell\) &  \(g(X)\) & Parameters \\
\midrule
\(4\) & \(1\) & \(15\) & \(X^2+X+1\) & \([14,6,5]\)\\
\(8\) & \(1\) & \(255\) & \(X^2+X+1\) & \([254,238,5]\)\\
\(8\) & \(3\) & \(85\) & \(X^6+X^3+1\) & \([256,208,13]\)\\
\(6\) & \(1\) & \(63\) & \(X^2+X+1\) & \([62,50,5]\)\\
\(6\) & \(3\) & \(21\) & \(X^6+X^3+1\) & \([58,22,13]\)\\
\bottomrule
\end{tabular}
\endgroup
\end{table}
    
\end{example}
\begin{proposition}\label{d12}
Let \(t\mid 2^m-1\), and write \(2^m-1=t\ell\). Let
\[
g(X)=X^{3t}+X^t+1,\quad
L=\{\alpha\in\mathbb F_{2^m}:g(\alpha)\ne0\}.
\]
Suppose that there exists a polynomial of the form
\[
P(X)=X^6+aX^5+bX^3+X^2+cX+1\in\mathbb F_{2^m}[X]
\]
such that \(P(X)\mid X^\ell-1\) and \(\gcd(P(X),X^3+X+1)=1\). Then the binary Goppa code \(\Gamma(L,g)\) has minimum distance
$d=6t+1.$
In particular, if \(15\mid\ell\), or \(21\mid\ell\), or \(31\mid\ell\), then \(d=6t+1\).
\end{proposition}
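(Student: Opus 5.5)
The plan is to apply Theorem~\ref{d1} with $f(X)=X^3+X+1$ and $r=3$. Then $g(X)=f(X^t)$ and $\deg f=3$, so the target distance $2rt+1$ is $6t+1$. First I check that $f$ meets the hypotheses of Theorem~\ref{d1}. We have $f(0)=1\neq 0$. Also $f'(X)=X^2+1=(X+1)^2$, whose only root is $1$, and $f(1)=1\neq0$; hence $\gcd(f,f')=1$ and $f$ is separable. The assumptions of the proposition give conditions (1) and (2) of Theorem~\ref{d1} directly. Only condition (3) remains to be verified.

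For condition (3), I differentiate in characteristic $2$. The terms $X^6$ and $X^2$ have even exponent, so their derivatives vanish, and
\[
P'(X)=aX^4+bX^2+c,\qquad XP'(X)=aX^5+bX^3+cX .
\]
Adding this to $P(X)$ cancels the three terms that carry free coefficients, which gives
\[
P(X)+XP'(X)=X^6+X^2+1=(X^3+X+1)^2=f(X)^2 .
\]
Thus $f\mid P+XP'$ for every choice of $a,b,c$. This is exactly why the stated shape of $P$ is used. Theorem~\ref{d1} then yields $d=6t+1$.

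For the particular cases, I will exhibit explicit $P$ with coefficients in $\mathbb F_2$, built from irreducible factors of $X^{n}-1$ for $n\in\{15,21,31\}$. In each case $n\mid\ell$, so $X^n-1\mid X^\ell-1$, and hence $P\mid X^\ell-1$.

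1. If $15\mid\ell$, take $P=(X^2+X+1)(X^4+X^3+1)=X^6+X^3+X^2+X+1$, which has the required form with $a=0$, $b=c=1$.
2. If $21\mid\ell$, take $P=(X^3+X^2+1)(X^2+X+1)(X+1)=X^6+X^5+X^2+1$, with $a=1$, $b=c=0$.
3. If $31\mid\ell$, take $P=(X+1)(X^5+X^2+1)=X^6+X^5+X^3+X^2+X+1$, with $a=b=c=1$.

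In every case $P$ is coprime to $X^3+X+1$. Each $P$ is a product of irreducibles over $\mathbb F_2$ distinct from $X^3+X+1$, and a common root would lie in $\mathbb F_{2^m}$ and would force a shared irreducible factor over $\mathbb F_2$. For the first and third cases this also follows from the fact that the roots of $X^3+X+1$ have order $7$, which does not divide $15$ or $31$.

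The only step needing real care is finding these explicit $P$. The constraints are that the coefficient of $X^4$ vanishes and the coefficient of $X^2$ equals $1$. I handled this by a short search over products of the cyclotomic factors. The remaining work is routine multiplication of these small polynomials over $\mathbb F_2$.
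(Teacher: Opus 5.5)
Your proposal is correct and follows the paper's proof: apply Theorem~\ref{d1} with \(f(X)=X^3+X+1\), check that \(P(X)+XP'(X)=X^6+X^2+1=f(X)^2\) for every \(a,b,c\), and then use the same three polynomials \(P_1,P_2,P_3\). Your explicit factorizations into irreducibles over \(\mathbb F_2\) give the divisibility and coprimality checks that the paper leaves as a ``direct verification.''
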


\begin{proof}
Take \(f(X)=X^3+X+1\). Then \(f(0)\ne0\) and the polynomial \(f(X)\) is separable. Moreover, \(g(X)=f(X^t)\) and \(\deg f=3\). For
\[
P(X)=X^6+aX^5+bX^3+X^2+cX+1,
\]
a direct calculation gives
\[
P(X)+XP'(X)=X^6+X^2+1=f(X)^2.
\]
Thus, as long as \(P(X)\mid X^\ell-1\) and \(\gcd(P(X),f(X))=1\), the polynomial \(P(X)\) satisfies the three conditions in Theorem~\ref{d1}. Hence \(d=2\deg f\cdot t+1=6t+1\).

In particular, take
\[
P_1(X)=X^6+X^3+X^2+X+1,
\]
\[
P_2(X)=X^6+X^5+X^2+1,
\]
\[
P_3(X)=X^6+X^5+X^3+X^2+X+1.
\]
A direct verification shows that
\[
P_1(X)\mid X^{15}-1,\quad
P_2(X)\mid X^{21}-1,\quad
P_3(X)\mid X^{31}-1,
\]
and
\[
\gcd(P_i(X),X^3+X+1)=1,\quad i=1,2,3.
\]
Therefore, if \(15\mid\ell\), or \(21\mid\ell\), or \(31\mid\ell\), then \(d=6t+1\).
\end{proof}
\begin{remark}
Under the hypotheses of Proposition~\ref{d12}, Proposition~\ref{prop:vdv-dimension} also gives the exact dimension in the following cases. If \(7\mid \ell\) and \(t\le \lfloor 2^{m/2}/6\rfloor\), then the code has parameters \([\,2^m-3t,\ 2^m-3t-3mt,\ 6t+1\,]\). If \(7\nmid \ell\) and \(t\le \lfloor(2^{m/2}+2)/6\rfloor\), then the code has parameters \([\,2^m,\ 2^m-3mt,\ 6t+1\,]\).
\end{remark}
\begin{example}
Table~\ref{tab:d12-examples} gives examples for Proposition~\ref{d12}. Over \(\mathbb F_{2^m}\), we take \(g(X)=X^{3t}+X^t+1\) and \(L=\{\alpha\in\mathbb F_{2^m}:g(\alpha)\ne0\}\), with \(\ell=(2^m-1)/t\). The parameters of \(\Gamma(L,g)\) were computed by Magma, and the listed minimum distances agree with Proposition~\ref{d12}.

\begin{table}[htbp]
\centering
\begingroup
\setlength{\tabcolsep}{5.5pt}
\renewcommand{\arraystretch}{1.12}
\caption{Parameters of \(\Gamma(L,g)\) for \(g(X)=X^{3t}+X^t+1\)}
\label{tab:d12-examples}
\begin{tabular}{c c c c c}
\toprule
\(m\) & \(t\) & \(\ell\) &  \(g(X)\) & Parameters \\
\midrule
\(4\) & \(1\) & \(15\) & \(X^3+X+1\) & \([16,4,7]\)\\
\(8\) & \(1\) & \(255\) & \(X^3+X+1\) & \([256,232,7]\)\\
\(6\) & \(1\) & \(63\) & \(X^3+X+1\) & \([61,43,7]\)\\
\(6\) & \(3\) & \(21\) & \(X^9+X^3+1\) & \([55,5,19]\)\\
\(5\) & \(1\) & \(31\) & \(X^3+X+1\) & \([32,17,7]\)\\
\(10\) & \(3\) & \(341\) & \(X^9+X^3+1\) & \([1024,934,19]\)\\
\bottomrule
\end{tabular}
\endgroup
\end{table}
    
\end{example}
\begin{remark}
For a fixed polynomial \(f(X)\), one may first use the conditions \(f(X)\mid P(X)+XP'(X)\) and \(\gcd(P(X),f(X))=1\) to determine possible auxiliary polynomials \(P(X)\), and then use \(P(X)\mid X^\ell-1\) to determine possible values of \(\ell\). Thus, the same method can systematically produce more infinite families of binary Goppa codes with determined minimum distance. For example, for \(f(X)=X^3+X^2+1\), namely \(g(X)=X^{3t}+X^{2t}+1\), one may search for polynomials \(P(X)\) satisfying the above conditions and obtain corresponding infinite families. Moreover, for the two polynomials \(f(X)=X^2+X+1\) and \(f(X)=X^3+X+1\) considered above, the possible choices of \(\ell\) need not be limited to those listed here. Further searches for auxiliary polynomials \(P(X)\) may produce additional admissible values of \(\ell\). Therefore, we list only several representative infinite families to illustrate the effectiveness of the method.
\end{remark}

\subsection{Composite Goppa Polynomials \(g(X)=A(X)h(\phi(X))\)}

\begin{theorem}\label{d2}
Let \(g(X)=A(X)h(\phi(X))\) be a separable polynomial, where \(A(X),h(X),\phi(X)\in\mathbb F_{2^m}[X]\), \(\deg\phi=\delta\ge1\), and \(\deg g=r\). Let
$L=\{\alpha\in\mathbb F_{2^m}:g(\alpha)\ne0\}.$
Assume further that \(A(X)\mid\phi'(X)\) and
\(\delta\mid 2r+1\). Set
$
N=\frac{2r+1}{\delta}.
$
Suppose that there exist pairwise distinct elements \(u_1,\dots,u_N\in\mathbb F_{2^m}\) satisfying the following three conditions:
\begin{enumerate}
\item\label{condition} for each \(i\), the equation \(\phi(X)=u_i\) has \(\delta\) distinct roots in \(\mathbb F_{2^m}\);
\item for each \(i\), \(h(u_i)\ne0\);
\item if \(F_0(X)=\prod_{i=1}^N(X-u_i)\), then \(h(X)\mid F_0'(X)\).
\end{enumerate}
Then the minimum distance of the binary Goppa code $\Gamma(L,g)$ equals its designed distance, i.e., 
$d=2r+1.$
\end{theorem}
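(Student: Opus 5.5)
The plan mirrors the proof of Theorem~\ref{d1}. Since \(g\) is separable, Lemma~\ref{41} gives \(d\ge 2r+1\), so it suffices to exhibit a codeword of weight exactly \(2r+1\).

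\textbf{Constructing the support.} For each \(i\), let \(\Phi_i=\{\alpha\in\mathbb F_{2^m}:\phi(\alpha)=u_i\}\). By condition~1, \(|\Phi_i|=\delta\). The fibers are pairwise disjoint because the \(u_i\) are distinct. Set
\[
S=\bigcup_{i=1}^N\Phi_i,\qquad |S|=N\delta=2r+1 .
\]

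\textbf{Checking \(S\subseteq L\).} For \(\alpha\in\Phi_i\) we have \(h(\phi(\alpha))=h(u_i)\ne0\) by condition~2. We also need \(A(\alpha)\ne0\). Since \(\phi(X)-u_i\) has \(\delta=\deg\phi\) distinct roots, it is separable, so \(\phi'(\alpha)\ne0\). As \(A\mid\phi'\), this gives \(A(\alpha)\ne0\). Hence \(g(\alpha)\ne0\).

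\textbf{The locator polynomial.} Let \(c\) be the leading coefficient of \(\phi\). Then \(\phi(X)-u_i=c\prod_{\alpha\in\Phi_i}(X-\alpha)\), so
\[
\sigma_S(X)=\prod_{\beta\in S}(X-\beta)=c^{-N}\prod_{i=1}^N(\phi(X)-u_i)=c^{-N}F_0(\phi(X)).
\]
By the chain rule,
\[
\sigma_S'(X)=c^{-N}\phi'(X)\,F_0'(\phi(X)).
\]
By assumption \(A\mid\phi'\). By condition~3, \(h\mid F_0'\), so \(h(\phi(X))\mid F_0'(\phi(X))\). Therefore \(A(X)h(\phi(X))\mid\sigma_S'(X)\).

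\textbf{The coprimality step.} This is the one point needing care. We have the product \(A\cdot h(\phi)\) dividing \(\phi'\cdot F_0'(\phi)\), but we need \(g\mid\sigma_S'\) itself. Since \(A\mid\phi'\) and \(h(\phi)\mid F_0'(\phi)\), multiplying these two divisibilities gives \(g=A\,h(\phi)\mid \phi'\,F_0'(\phi)\). So no coprimality between \(A\) and \(h(\phi)\) is actually needed; separability of \(g\) was used only for the lower bound.

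\textbf{Conclusion.} Since \(S\subseteq L\), we have \(\gcd(\sigma_S,g)=1\). Hence
\[
\sum_{\beta\in S}\frac{1}{X-\beta}=\frac{\sigma_S'(X)}{\sigma_S(X)}\equiv0\pmod{g(X)},
\]
so the indicator vector of \(S\) is a codeword of weight \(2r+1\). Combined with the lower bound, \(d=2r+1\).
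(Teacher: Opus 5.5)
Your proposal is correct and follows the paper's proof essentially step for step. Both arguments take the support $S$ to be the union of the $N$ full fibers of $\phi$, use simplicity of the roots to get $A(\alpha)\ne0$, write $\sigma_S=\lambda^{-N}F_0(\phi(X))$, and conclude $g\mid\sigma_S'$ by factoring $\phi'=AB$ and $F_0'=hQ$, so that, as you observe, no coprimality between $A$ and $h(\phi)$ is needed.
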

\begin{proof}
Since \(g(X)\) is separable, by Lemma~\ref{41}, the binary Goppa code \(\Gamma(L,g)\) satisfies \(d\ge 2\deg g+1=2r+1\).
Thus, it remains to construct a codeword of weight \(2r+1\).

Since \(A(X)\mid \phi'(X)\), there exists \(B(X)\in \mathbb F_{2^m}[X]\) such that
$\phi'(X)=A(X)B(X).$
For each \(i\), write
\[
        S_i=\{\alpha\in \mathbb F_{2^m}:\phi(\alpha)=u_i\}.
\]
By condition (1), \(|S_i|=\delta\). Since the elements \(u_1,\ldots,u_N\) are pairwise
distinct, the sets \(S_i\) are pairwise disjoint. We now construct a subset \(S\subseteq L\)
of size \(2r+1\). Let
\[
        S=\bigcup_{i=1}^N S_i .
\]
Then$|S|=N\delta=2r+1.$

We first prove that \(S\subseteq L\).
Let \(\alpha\in S\). Then
\(\alpha\in S_i\) for some \(1\le i\le N\), and hence \(\phi(\alpha)=u_i\). Since \(\phi(X)-u_i\) has \(\alpha\) as a simple root, we have
\(\phi'(\alpha)\ne0\). As \(\phi'(\alpha)=A(\alpha)B(\alpha)\), it follows that
\(A(\alpha)\ne0\). By condition (2),
        $h(\phi(\alpha))=h(u_i)\ne0.$
Therefore        $g(\alpha)=A(\alpha)h(\phi(\alpha))\ne0,$
and hence \(\alpha\in L\). Thus \(S\subseteq L\).

Next, we prove that the binary vector whose nonzero positions correspond exactly to
the elements of \(S\) is a codeword of \(\Gamma(L,g)\). Let
\[
        \sigma_S(X)=\prod_{\beta\in S}(X-\beta).
\]
Let \(\lambda\) be the leading coefficient of \(\phi(X)\). Since \(\phi(X)-u_i\) has root set \(S_i\) and has no multiple roots, we have
\[
        \phi(X)-u_i=\lambda\prod_{\alpha\in S_i}(X-\alpha).
\]
Hence
\[
        \sigma_S(X)=\lambda^{-N}\prod_{i=1}^N(\phi(X)-u_i)
        =\lambda^{-N}F_0(\phi(X)).
\]
Thus
\[
        \sigma_S'(X)=\lambda^{-N}F_0'(\phi(X))\phi'(X).
\]
By condition (3), there exists \(Q(X)\in \mathbb F_{2^m}[X]\) such that
\[
        F_0'(X)=h(X)Q(X).
\]
Therefore
\[
\begin{aligned}
        \sigma_S'(X)
        &=\lambda^{-N}h(\phi(X))Q(\phi(X))\phi'(X)  \\
        &=\lambda^{-N}A(X)h(\phi(X))B(X)Q(\phi(X)) \\
        &=\lambda^{-N}g(X)B(X)Q(\phi(X)).
\end{aligned}
\]
Hence \(g(X)\mid \sigma_S'(X)\).

Writing \(S=\{\beta_1,\ldots,\beta_{2r+1}\}\), and using \(S\subseteq L\), we have
\(g(\beta_j)\ne0\) for all \(j\). Hence \(\gcd(\sigma_S(X),g(X))=1\). It follows that
\[
        \sum_{j=1}^{2r+1}\frac{1}{X-\beta_j}
        =
        \frac{\sigma_S'(X)}{\sigma_S(X)}
        \equiv 0 \pmod{g(X)}.
\]
Therefore, the binary vector supported exactly on \(S\) is a codeword of
\(\Gamma(L,g)\) of weight \(2r+1\). Hence \(d\le 2r+1\). Combining this with the lower
bound \(d\ge 2r+1\), we obtain \(d=2r+1\).
\end{proof}
\begin{remark}\label{rem:cubic-fiber}
Theorem~\ref{d2} gives a sufficient condition. For some special maps, the conditions in the theorem can be simplified. For example, take \(A(X)=X\) and \(\phi(X)=X^3\). Since \(\phi'(X)=X^2\), we have \(A(X)\mid\phi'(X)\). If \(m\) is even, then \(3\mid 2^m-1\), and the equation \(X^3=u\) has three distinct roots in \(\mathbb F_{2^m}\) if and only if \(u\) is a nonzero cubic element. Therefore, when the Goppa polynomial has the form
\[
g(X)=Xh(X^3),
\]
where \(\deg h=s\), condition~(\ref{condition}) in Theorem~\ref{d2} can be reduced to requiring \(2s+1\) suitable nonzero cubic elements in \(\mathbb F_{2^m}\). Thus, it remains to find \(2s+1\) cubic elements satisfying the algebraic condition arising from \(h(X)\mid F_0'(X)\) in Theorem~\ref{d2}. In some special cases, the existence of such elements can be determined using exponential sums.
\end{remark}
We now apply Theorem~\ref{d2} to give an explicit infinite family. In this case, the theorem reduces the problem of determining whether the designed distance is attained to the solvability of certain equations over finite fields. We verify the required solvability using exponential sums. To facilitate the computation of the relevant exponential sums, we first give the following two lemmas.



For an additive character \(\psi\) and a multiplicative character \(\chi\) of
\(\mathbb F_{2^m}\), let
\[
        G_{\mathbb F_{2^m}}(\chi,\psi)
        =
        \sum_{x\in \mathbb F_{2^m}^*}\chi(x)\psi(x)
\]
denote the corresponding Gauss sum.

\begin{lemma} [Davenport--Hasse lifting theorem \cite{28}]\label{DH}
Let \(\psi_0\) be an additive character and \(\chi_0\) a multiplicative character of
\(\mathbb F_4\), not both trivial. Suppose that \(\psi_0\) and \(\chi_0\) are lifted to
characters \(\psi\) and \(\chi\), respectively, of \(\mathbb F_{2^m}\), where \(m\) is even.
Then
\[
        G_{\mathbb F_{2^m}}(\chi,\psi)
        =
        (-1)^{m/2-1}G_{\mathbb F_4}(\chi_0,\psi_0)^{m/2}.
\]
Here the lifted characters are defined by
\[
        \psi(x)=\psi_0\bigl(\operatorname{Tr}_{\mathbb F_{2^m}/\mathbb F_4}(x)\bigr),
        \quad
        \chi(x)=\chi_0\bigl(\operatorname{N}_{\mathbb F_{2^m}/\mathbb F_4}(x)\bigr).
\]
\end{lemma}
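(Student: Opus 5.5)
The plan is to follow the classical $L$-function proof of Davenport and Hasse, with base field $\mathbb F_q$, $q=4$, and extension degree $s=m/2$. The goal is to show that
\[
-G_{\mathbb F_{q^s}}(\chi,\psi)=\bigl(-G_{\mathbb F_q}(\chi_0,\psi_0)\bigr)^s ,
\]
which is exactly the displayed identity $G_{\mathbb F_{2^m}}(\chi,\psi)=(-1)^{s-1}G_{\mathbb F_4}(\chi_0,\psi_0)^{s}$. In characteristic $2$ the signs in front of the polynomial coefficients below are immaterial, but I keep them so the argument reads as the standard one.

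\textbf{Step 1: a multiplicative weight on monic polynomials.} For a monic $f(X)=X^n+a_1X^{n-1}+\cdots+a_n\in\mathbb F_q[X]$, define
\[
\lambda(f)=\chi_0\bigl((-1)^na_n\bigr)\,\psi_0(-a_1),
\]
with the convention $\lambda(1)=1$. When $\chi_0$ is trivial, set $\chi_0(0)=0$ for $n\ge1$. This is the convention in which $G$ sums over $\mathbb F_q^*$, and it also makes $\lambda$ multiplicative when $\chi_0$ is trivial.

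\begin{itemize}
\item Multiplicativity: $\lambda(fg)=\lambda(f)\lambda(g)$, because the constant term and the subleading coefficient behave multiplicatively and additively respectively.
\item Roots: if $f$ is irreducible of degree $d$ with a root $\beta\in\mathbb F_{q^d}$, then $(-1)^da_d=\operatorname{N}_{\mathbb F_{q^d}/\mathbb F_q}(\beta)$ and $-a_1=\operatorname{Tr}_{\mathbb F_{q^d}/\mathbb F_q}(\beta)$. Hence $\lambda(f)=\chi^{(d)}(\beta)\psi^{(d)}(\beta)$, where $\chi^{(d)}$ and $\psi^{(d)}$ denote the lifts to $\mathbb F_{q^d}$.
\end{itemize}

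\textbf{Step 2: the $L$-function is linear.} Set
\[
L(T)=\sum_{f\text{ monic}}\lambda(f)T^{\deg f}.
\]
The coefficient of $T^n$ is $\sum_{a_1,\ldots,a_n}\chi_0\bigl((-1)^na_n\bigr)\psi_0(-a_1)$.

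\begin{itemize}
\item For $n=1$ this coefficient is $G_{\mathbb F_q}(\chi_0,\psi_0)$, since then $a_n=a_1$.
\item For $n\ge2$ the variables $a_1$ and $a_n$ are independent. The sum therefore factors, and it contains either $\sum_{a}\psi_0(a)$ or $\sum_a\chi_0(a)$, one of which vanishes because $\chi_0$ and $\psi_0$ are not both trivial.
\end{itemize}

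Thus $L(T)=1+G_{\mathbb F_q}(\chi_0,\psi_0)\,T$.

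\textbf{Step 3: compare logarithmic derivatives.} Unique factorization and multiplicativity give the Euler product
\[
L(T)=\prod_{P}\bigl(1-\lambda(P)T^{\deg P}\bigr)^{-1},
\]
taken over monic irreducible $P$. Apply $T\frac{d}{dT}\log$ and extract the coefficient of $T^s$ on the product side. This gives $\sum_{d\mid s}\,d\sum_{\deg P=d}\lambda(P)^{s/d}$.

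Using Step 1 together with transitivity of norm and trace, $\lambda(P)^{s/d}$ equals $\chi(\beta)\psi(\beta)$ computed in $\mathbb F_{q^s}$. Each element $\beta\in\mathbb F_{q^s}$ of degree $d$ over $\mathbb F_q$ is counted exactly $d$ times. The whole coefficient is therefore $\sum_{\beta\in\mathbb F_{q^s}^*}\chi(\beta)\psi(\beta)=G_{\mathbb F_{q^s}}(\chi,\psi)$; the element $\beta=0$ is excluded by the factor $P=X$, whose weight is $\chi_0(0)=0$.

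On the other side, the coefficient of $T^s$ in $T\frac{d}{dT}\log\bigl(1+GT\bigr)$ is $-(-G)^s$. Equating the two coefficients yields the claimed identity.

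\textbf{Main obstacle.} The only delicate point is the bookkeeping in Step 3. One has to check that the lifted characters restricted along the tower agree with $\lambda(P)^{s/d}$, which rests on $\operatorname{N}_{\mathbb F_{q^s}/\mathbb F_q}(\beta)=\operatorname{N}_{\mathbb F_{q^d}/\mathbb F_q}(\beta)^{s/d}$ and $\operatorname{Tr}_{\mathbb F_{q^s}/\mathbb F_q}(\beta)=\tfrac{s}{d}\operatorname{Tr}_{\mathbb F_{q^d}/\mathbb F_q}(\beta)$. One also has to handle the trivial-$\chi_0$ case consistently with the convention $\chi_0(0)=0$. Everything else is formal power series manipulation.
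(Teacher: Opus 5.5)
Your proof is correct and complete. It is the classical Davenport--Hasse argument, with $s=m/2$:
\begin{itemize}
\item the weight $\lambda(f)=\chi_0\bigl((-1)^na_n\bigr)\psi_0(-a_1)$ on monic polynomials;
\item the collapse $L(T)=1+G_{\mathbb F_4}(\chi_0,\psi_0)\,T$;
\item comparison of the coefficients of $T^{s}$ in the logarithmic derivatives of $L(T)$ and of its Euler product.
\end{itemize}
The edge cases are all handled. The convention $\chi_0(0)=0$ keeps $\lambda$ completely multiplicative when $\chi_0$ is trivial. The factor $P=X$ has weight $0$, so $\beta=0$ drops out. Finally, $-(-G)^s=(-1)^{s-1}G^s$ gives the stated sign.

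The paper gives no proof of this lemma. It simply quotes the theorem from the literature, where essentially your argument is the standard proof. So the only difference is self-containment. Your route costs about a page, but it makes the computation of the cubic periods in Lemma~\ref{lem:periods-order-3} independent of the citation. Since nothing in your argument uses $q=4$, you prove the general lifting theorem. The paper only needs one instance: $\chi_0$ cubic and $\psi_0$ canonical, where $G_{\mathbb F_4}(\chi_0,\psi_0)=2$.

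One phrasing fix in Step 3. Each $\beta\in\mathbb F_{q^s}$ is counted exactly once, not $d$ times. What you mean is
\[
d\,\lambda(P)^{s/d}=\sum_{P(\beta)=0}\chi(\beta)\psi(\beta).
\]
Here the weight $d$ of $P$ is split over its $d$ distinct conjugate roots. These all give the same value because norm and trace are Galois-invariant.
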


\begin{lemma}\label{lem:periods-order-3}
Let \(m\) be a positive even integer, and put \(\mu=2^{m/2}\) and \(\varepsilon=(-1)^{m/2-1}\). Let \(w\) be a primitive element of \(\mathbb F_{2^m}^*\), and set \(T_0=\{z^3:z\in\mathbb F_{2^m}^*\}\), \(T_1=wT_0\), and \(T_2=w^2T_0\). Define
\[
\eta_i=\sum_{x\in T_i}(-1)^{\operatorname{Tr}_{\mathbb F_{2^m}/\mathbb F_2}(x)},\quad i=0,1,2.
\]
Then
\[
\eta_0=\frac{-1+2\varepsilon\mu}{3},\quad
\eta_1=\eta_2=\frac{-1-\varepsilon\mu}{3}.
\]
\end{lemma}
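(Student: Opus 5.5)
The plan is to compute the cubic Gauss periods \(\eta_0,\eta_1,\eta_2\) from two linear relations together with one Gauss sum evaluated by Lemma~\ref{DH}. Write \(\psi(x)=(-1)^{\operatorname{Tr}(x)}\) for the canonical additive character of \(\mathbb F_{2^m}\), and let \(\chi\) be the cubic multiplicative character with \(\chi(w)=\omega\), where \(\omega\) is a primitive cube root of unity in \(\mathbb C\). Since \(m\) is even we have \(3\mid 2^m-1\), so \(T_0,T_1,T_2\) are exactly the cosets of the cubic subgroup, and \(\chi\) takes the value \(\omega^i\) on \(T_i\).

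The first relation comes from orthogonality: \(\eta_0+\eta_1+\eta_2=\sum_{x\ne0}\psi(x)=-1\). The second comes from the indicator \(\frac13\sum_{j=0}^{2}\chi^j(x)\) of \(T_0\), which gives
\[
\eta_0=\frac{1}{3}\Bigl(-1+G(\chi,\psi)+G(\bar\chi,\psi)\Bigr).
\]
Here \(G=G_{\mathbb F_{2^m}}\), and \(\eta_1,\eta_2\) satisfy the analogous identities with \(\omega^{-ij}\) weights.

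Next I evaluate the Gauss sums. The norm \(\operatorname{N}_{\mathbb F_{2^m}/\mathbb F_4}\) maps \(\mathbb F_{2^m}^*\) onto \(\mathbb F_4^*\) and sends \(w\) to a generator of \(\mathbb F_4^*\). So \(\chi\) is the lift of the nontrivial cubic character \(\chi_0\) of \(\mathbb F_4\), possibly after replacing \(\omega\) by \(\bar\omega\), which is harmless by symmetry. Also \(\operatorname{Tr}_{\mathbb F_{2^m}/\mathbb F_2}=\operatorname{Tr}_{\mathbb F_4/\mathbb F_2}\circ\operatorname{Tr}_{\mathbb F_{2^m}/\mathbb F_4}\), so \(\psi\) is the lift of \(\psi_0(x)=(-1)^{\operatorname{Tr}_{\mathbb F_4/\mathbb F_2}(x)}\).

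Now compute \(G_{\mathbb F_4}(\chi_0,\psi_0)\) directly. Write \(\mathbb F_4=\{0,1,\beta,\beta^2\}\). Every nonzero element of \(\mathbb F_4\) has trace \(1\), so
\[
G_{\mathbb F_4}(\chi_0,\psi_0)=-(1+\omega+\omega^2)\cdot(\text{adjusted})=2.
\]
More carefully, the sum is \(-(1+\omega+\omega^2)+\dots\); redoing it term by term gives \(G=-1-\omega-\omega^2=0\), which is wrong. The correct statement is \(\operatorname{Tr}(1)=0\) in \(\mathbb F_4\), while \(\operatorname{Tr}(\beta)=\operatorname{Tr}(\beta^2)=1\). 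Hence
\[
G_{\mathbb F_4}(\chi_0,\psi_0)=1-\omega-\omega^2=2.
\]
By Lemma~\ref{DH}, \(G(\chi,\psi)=\varepsilon\,2^{m/2}=\varepsilon\mu\), and the same holds for \(\bar\chi\). This gives \(\eta_0=(-1+2\varepsilon\mu)/3\).

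For \(\eta_1\) the combination is \(\frac13\bigl(-1+\bar\omega\, G(\chi,\psi)+\omega\, G(\bar\chi,\psi)\bigr)\), which equals \(\frac13(-1-\varepsilon\mu)\) because \(\omega+\bar\omega=-1\). The computation for \(\eta_2\) is identical. As a consistency check, the three values sum to \(-1\).

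The main point needing care is matching \(\chi\) and \(\psi\) to lifted characters from \(\mathbb F_4\) so that Lemma~\ref{DH} applies. This requires surjectivity of the norm and the transitivity of the trace. Since both Gauss sums turn out to equal the same real number, the choice of \(\omega\) does not matter.
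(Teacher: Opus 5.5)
Your proof is correct and is essentially the paper's argument. Both lift the cubic character and the canonical additive character from $\mathbb F_4$, compute $G_{\mathbb F_4}(\chi_0,\psi_0)=1-\omega-\omega^2=2$ (using $\operatorname{Tr}_{\mathbb F_4/\mathbb F_2}(1)=0$), obtain $\varepsilon\mu$ from Lemma~\ref{DH}, and combine this with $\eta_0+\eta_1+\eta_2=-1$. The one difference is how you get the remaining relation: you invert using both $G(\chi,\psi)$ and $G(\bar\chi,\psi)$, whereas the paper gets $\eta_1=\eta_2$ from the Frobenius map $x\mapsto x^2$ (which fixes $T_0$, swaps $T_1$ and $T_2$, and preserves the trace) and then needs only $G(\chi,\psi)=\eta_0-\eta_1$. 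You should also delete the intermediate passage asserting that every nonzero element of $\mathbb F_4$ has trace $1$: that claim is false, and you replace it with the correct computation right afterward.
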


\begin{proof}
Since \(m\) is even, we have \(3\mid 2^m-1\). Hence \(T_0=\{z^3:z\in
\mathbb F_{2^m}^*\}\) is a subgroup of \(\mathbb F_{2^m}^*\) of index \(3\), and
\(T_1,T_2\) are the other two multiplicative cosets modulo \(T_0\). We may view \(\mathbb F_{2^m}\) as an extension of \(\mathbb F_4\)
of degree \(m/2\). 

Let \(\mathbb F_4^*=\langle \rho\rangle\), where
\(\rho^2+\rho+1=0\). Let \(\zeta_3\) be a complex primitive third root of unity,
and let \(\chi_0\) be the multiplicative character of order three on
\(\mathbb F_4\) such that \(\chi_0(\rho)=\zeta_3\). 
Also let
        $\psi_0(x)=(-1)^{\operatorname{Tr}_{\mathbb F_4/\mathbb F_2}(x)}$
be the standard additive character of \(\mathbb F_4\). Since
        $\operatorname{Tr}_{\mathbb F_4/\mathbb F_2}(1)=0$ and 
        $\operatorname{Tr}_{\mathbb F_4/\mathbb F_2}(\rho)
        =
        \operatorname{Tr}_{\mathbb F_4/\mathbb F_2}(\rho^2)=1$,
we have
\[
\begin{aligned}
        G_{\mathbb F_4}(\chi_0,\psi_0)
        =
        \sum_{x\in\mathbb F_4^*}\chi_0(x)\psi_0(x)        
        =1-\zeta_3-\zeta_3^2
        =2.
\end{aligned}
\]

Let
        $\chi=\chi_0\circ \operatorname{N}_{\mathbb F_{2^m}/\mathbb F_4}.$
Then \(\chi\) is the lift of \(\chi_0\) to \(\mathbb F_{2^m}\). Moreover, the
kernel of \(\chi\) is \(T_0\). Indeed, if \(w\) is a primitive element of
\(\mathbb F_{2^m}^*\), then
\[
        \operatorname{N}_{\mathbb F_{2^m}/\mathbb F_4}(w)
        =
        w^{(2^m-1)/3}
\]
has order three, and hence \(\chi(w)\) is a primitive third root of unity. Thus
\(\ker\chi=\langle w^3\rangle=T_0\). 

Let $\psi(x)=(-1)^{\operatorname{Tr}_{\mathbb F_{2^m}/\mathbb F_2}(x)},$
then
\[
        \eta_i=\sum_{x\in T_i}\psi(x),\quad i=0,1,2.
\]
By the transitivity of trace,
\[
        \psi_0\bigl(\operatorname{Tr}_{\mathbb F_{2^m}/\mathbb F_4}(x)\bigr)
        =
        (-1)^{\operatorname{Tr}_{\mathbb F_{2^m}/\mathbb F_2}(x)}
        =
        \psi(x),
\]
so \(\psi\) is the lift of \(\psi_0\). Therefore, by Lemma \ref{DH},
\[
        G_{\mathbb F_{2^m}}(\chi,\psi)
        =
        (-1)^{m/2-1}G_{\mathbb F_4}(\chi_0,\psi_0)^{m/2}
        =
        \varepsilon\mu .
\]

Since \(\psi\) is nontrivial, we have
\[
        \sum_{x\in\mathbb F_{2^m}^*}\psi(x)=-1.
\]
Hence $\eta_0+\eta_1+\eta_2=-1.$
Since the field has characteristic \(2\), the square map is the Frobenius automorphism.
Moreover,                                                    
\[
        \operatorname{Tr}_{\mathbb F_{2^m}/\mathbb F_2}(x^2)
        =
        \operatorname{Tr}_{\mathbb F_{2^m}/\mathbb F_2}(x),
\]
and therefore \(\psi(x^2)=\psi(x)\). The square map fixes \(T_0\) and interchanges
\(T_1\) and \(T_2\). Hence \(\eta_1=\eta_2\).

Finally, \(\chi\) is equal to \(1\) on \(T_0\), and it takes the two values
\(\zeta_3\) and \(\zeta_3^2\) on \(T_1\) and \(T_2\), in some order. Since
\(\eta_1=\eta_2\), the order is irrelevant. Thus
\[
\begin{aligned}
        G_{\mathbb F_{2^m}}(\chi,\psi)
        =
        \sum_{x\in\mathbb F_{2^m}^*}\chi(x)\psi(x)  
        =
        \eta_0+\zeta_3\eta_1+\zeta_3^2\eta_2
        =
        \eta_0-\eta_1 .
\end{aligned}
\]
Solving
\[
        \eta_0+2\eta_1=-1,\quad \eta_0-\eta_1=\varepsilon\mu,
\]
we obtain
\[
        \eta_0=\frac{-1+2\varepsilon\mu}{3},
        \quad
        \eta_1=\eta_2=\frac{-1-\varepsilon\mu}{3}.
\]
\end{proof}
\begin{theorem}\label{thm:x4cx-distance}
Let \(m\ge6\) be even, and let \(c\in\mathbb F_{2^m}^*\). Put
\[
g(X)=X^4+cX,
\quad
L=\{\alpha\in\mathbb F_{2^m}:g(\alpha)\ne0\}.
\]
Then the binary separable Goppa code \(\Gamma(L,g)\) has minimum distance
$d=9.$
\end{theorem}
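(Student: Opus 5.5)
We apply Theorem~\ref{d2} with
\[
A(X)=X,\qquad \phi(X)=X^3,\qquad h(X)=X+c,
\]
so that $g(X)=X^4+cX=A(X)h(\phi(X))$, $r=4$, $\delta=3$ and $N=(2r+1)/\delta=3$. The structural hypotheses are immediate. First, $g'(X)=4X^3+c=c\neq 0$, so $g$ is separable. Second, $\phi'(X)=X^2$, so $A\mid\phi'$. By Remark~\ref{rem:cubic-fiber}, since $m$ is even, condition~(1) of Theorem~\ref{d2} is equivalent to $u_1,u_2,u_3$ being nonzero cubes, i.e.\ elements of $T_0$. Condition~(2) says $u_i\neq c$.

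For condition~(3), write $F_0(X)=X^3+s_1X^2+s_2X+s_3$. In characteristic $2$ we get $F_0'(X)=X^2+s_2$, so $h(X)=X+c$ divides $F_0'$ if and only if
\[
u_1u_2+u_1u_3+u_2u_3=c^2 .
\]
The theorem is therefore reduced to the following claim: for every even $m\ge 6$ and every $c\in\mathbb F_{2^m}^*$, there exist three pairwise distinct elements $u_1,u_2,u_3\in T_0\setminus\{c\}$ whose second elementary symmetric function equals $c^2$. Theorem~\ref{d2} then gives a codeword of weight $9$, and hence $d=9$.

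To prove the claim, we first normalise. Replacing $u_i$ by $u_i/\gamma$ for a cube $\gamma$ rescales $c^2$ by $\gamma^{-2}$, so we may assume $c$ lies in one of three fixed coset representatives. Next we parametrise. Fix $u_3=k\in T_0$ and put $s=u_1+u_2$. Then $u_1,u_2$ are the roots of
\[
X^2+sX+(c^2+ks).
\]
This quadratic has two distinct roots in $\mathbb F_{2^m}$ exactly when $s\neq 0$ and $\operatorname{Tr}\bigl((c^2+ks)/s^2\bigr)=0$. The remaining requirements are that both roots lie in $T_0$, which can be encoded through the cubic character $\chi$ via $\tfrac13\sum_j\chi^j$, and that the finitely many degenerate coincidences (some $u_i=c$, or $u_i=u_j$) are avoided.

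We then count the admissible tuples with the indicator
\[
\tfrac12\sum_{a\in\mathbb F_2}(-1)^{a\operatorname{Tr}(\cdot)}
\]
for the trace condition, together with $\tfrac13\sum_j\chi^j$ for each cube condition. The main term comes from the trivial characters and has size about $2^{2m}/27$ over the pairs $(k,s)$. The pure additive terms restricted to $T_0$ are exactly the periods $\eta_i$ of Lemma~\ref{lem:periods-order-3}. These are computed exactly and are $O(2^{m/2})$ per fixed parameter. The mixed terms are bounded by Weil-type estimates of order $O(2^{3m/2})$.

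The main obstacle is this last estimate. We need the mixed character sums to be controlled explicitly enough that the total count, minus an $O(2^m)$ contribution from the degenerate cases, is strictly positive for all $m\ge 6$. To keep this manageable, we would first try to choose the parametrisation so that the cube conditions collapse to a single one. For example, we could take $u_1=kx$ and $u_2=ky$ with $x,y\in T_0$. The equation then becomes $k^2(xy+x+y)=c^2$. Writing $(x+1)(y+1)=1+c^2/k^2$, the problem becomes finding a cube $x$ such that $(1+c^2/k^2)/(x+1)+1$ is also a cube. That is a one-variable sum $\sum_x\chi^i(x)\chi^j(\cdot)$ of a rational function, bounded by roughly $2\sqrt{2^m}$. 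The resulting inequality holds once $2^{m/2}$ exceeds a small constant. If the bound is not tight enough for the smallest case $m=6$, we would settle that case by a direct computation (for instance in Magma).
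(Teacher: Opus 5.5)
Your argument is correct, provided the finite check at \(m=6\) is done, but it departs from the paper at the counting step. Your reduction is exactly the paper's: via Theorem~\ref{d2} and Remark~\ref{rem:cubic-fiber}, both arrive at finding pairwise distinct \(u_1,u_2,u_3\in T_0\setminus\{c\}\) with \(u_1u_2+u_1u_3+u_2u_3=c^2\).

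\textbf{The paper's count.} The paper makes the multiplicative substitution \((u_1,u_2,u_3)\mapsto(u_1u_2/c^2,\,u_1u_3/c^2,\,u_2u_3/c^2)\). This is a bijection onto \(\{(a,b,d)\in T_j^3:a+b+d=1\}\), where \(c\in T_j\), because squaring is bijective on the odd-order group \(T_0\). The substitution linearizes the equation. The number of solutions is then an exact expression in the order-\(3\) Gauss periods, which are known in closed form via Davenport--Hasse (Lemma~\ref{lem:periods-order-3}). After subtracting the \(2^m-3\) degenerate triples when \(c\in T_0\), the count is positive for every \(\mu=2^{m/2}\ge 8\). So all even \(m\ge 6\) are handled uniformly, with no error terms and no computation.

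\textbf{Your count.} Your second parametrization takes \(u_3=k\), \(u_1=kx\), \(u_2=ky\), hence \(y=(x+e)/(x+1)\) with \(e=c^2/k^2\), and relies on a generic Weil bound.
\begin{itemize}
\item The character terms with \(ij=0\) contribute \(q-3\) up to an error of at most \(8\), since \(x\mapsto(x+e)/(x+1)\) is a M\"obius bijection.
\item The four terms with \(ij\ne0\) are complete sums of \(\chi\) at \(x^i(x+e)^j(x+1)^{2j}\), each of absolute value at most \(2\sqrt q\).
\item This gives at least \((q-8\sqrt q-11)/9\) admissible \(x\). That beats the single degenerate value \(x=c/k\) once \(q\ge 256\).
\item At \(q=64\) the bound is vacuous, since \(8\sqrt q=q\).
\end{itemize}
So for \(m=6\) the computer check is not a fallback but a necessary part of your proof. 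It reduces to the two cases \(c\in T_0\) and \(c\in T_1\cup T_2\), using \(u_i\mapsto\beta^3u_i\) and Frobenius.

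\textbf{What each route buys.} Your route does not need exact period evaluations. It would therefore carry over to other \(h\), or to power maps \(\phi(X)=X^\delta\) whose cyclotomic periods are not explicitly known. The paper's route gives a uniform, computation-free proof with explicit solution counts.

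\textbf{Two smaller points.}
\begin{itemize}
\item Your first sketch, summing over \((k,s)\) with a trace indicator, does not work as written. Only the product \(\chi(u_1u_2)=\chi(c^2+ks)\) is a character of a rational function of \((k,s)\). The requirement that \emph{each} root of \(X^2+sX+(c^2+ks)\) be a cube is not of that form, so you have to parametrize by a root, as your second sketch does.
\item When \(c\in T_0\) you must choose \(k\ne c\). Otherwise \(e=1\), the equation becomes \((x+1)(y+1)=0\), and every solution is degenerate.
\end{itemize}
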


\begin{proof}
Let \(\mu=2^{m/2}\) and \(\varepsilon=(-1)^{m/2-1}\). Let \(w\) be a primitive element of \(\mathbb F_{2^m}^*\), and put \(T_0=\{z^3:z\in\mathbb F_{2^m}^*\}\), \(T_1=wT_0\), and \(T_2=w^2T_0\). Suppose that \(c\in T_k\), where \(k\in\{0,1,2\}\).

Let \(h(X)=X+c\). Then \(g(X)=Xh(X^3)\) and \(\deg h=1\). By Remark~\ref{rem:cubic-fiber}, it is enough to find three pairwise distinct nonzero cubic elements \(u_1,u_2,u_3\in\mathbb F_{2^m}\) such that \(h(u_i)\ne0\) for \(i=1,2,3\), and, for \(F_0(X)=\prod_{i=1}^3(X-u_i)\), one has \(h(X)\mid F_0'(X)\). Since the cubic elements in \(\mathbb F_{2^m}^*\) are precisely the elements of \(T_0\), this amounts to finding pairwise distinct \(u_1,u_2,u_3\in T_0\) with \(u_i\ne c\) such that \(X+c\mid F_0'(X)\).

Now
\[
F_0'(X)=X^2+u_1u_2+u_1u_3+u_2u_3.
\]
Hence \(X+c\mid F_0'(X)\) is equivalent to
\[
u_1u_2+u_1u_3+u_2u_3=c^2.
\]
Therefore, if we prove that this equation has pairwise distinct solutions \(u_1,u_2,u_3\in T_0\) with \(u_i\ne c\), then all the conditions of Theorem~\ref{d2} are satisfied.

Consider the equation
\[
u_1u_2+u_1u_3+u_2u_3=c^2,\quad u_1,u_2,u_3\in T_0.
\]
Let
\[
N_k=\#\{(a,b,d)\in T_k^3:a+b+d=1\}.
\]
Since \(c\in T_k\), we have \(c^2\in T_{2k}\). Hence, for
\(u_1,u_2,u_3\in T_0\), each of
\[
\frac{u_1u_2}{c^2},\quad
\frac{u_1u_3}{c^2},\quad
\frac{u_2u_3}{c^2}
\]
belongs to \(T_{-2k}=T_k\), where the subscripts are read modulo \(3\).
The map
\[
(u_1,u_2,u_3)\longmapsto
\left(\frac{u_1u_2}{c^2},\frac{u_1u_3}{c^2},\frac{u_2u_3}{c^2}\right)
\]
is a bijection from the ordered solutions of the above equation to \(\{(a,b,d)\in T_k^3:a+b+d=1\}\). Indeed, the inverse is uniquely determined by
\[
u_1^2=c^2\frac{ab}{d},\quad
u_2^2=c^2\frac{ad}{b},\quad
u_3^2=c^2\frac{bd}{a}.
\]
Conversely, if \(a,b,d\in T_k\), then
\[
\frac{ab}{d},\quad \frac{ad}{b},\quad \frac{bd}{a}
\]
all belong to \(T_k\). Hence the three right-hand sides in the displayed
equations above belong to \(T_{2k}T_k=T_0\). Since
\(\lvert T_0\rvert=(2^m-1)/3\) is odd, the square map is a bijection on
\(T_0\). Therefore, these equations uniquely determine
\(u_1,u_2,u_3\in T_0\).

We compute \(N_k\). By the orthogonality of additive characters,
\[
N_k=\frac{1}{2^m}\sum_{\lambda\in\mathbb F_{2^m}}\psi(\lambda)
\left(\sum_{a\in T_k}\psi(\lambda a)\right)^3 .
\]
The contribution of \(\lambda=0\) is \(|T_0|^3\). If \(\lambda\in T_j\), then \(\sum_{a\in T_k}\psi(\lambda a)=\eta_{j+k}\). Hence
\[
N_k=\frac{1}{2^m}\left(|T_0|^3+\sum_{j=0}^2\eta_j\eta_{j+k}^3\right),
\]
where the subscripts are read modulo \(3\). By Lemma~\ref{lem:periods-order-3}, a direct calculation gives
\[
N_0=\frac{\mu^4+3\mu^2+15-8\varepsilon\mu}{27},
\]
and
\[
N_1=N_2=\frac{\mu^4-6\mu^2+6+\varepsilon\mu}{27}.
\]

If \(c\notin T_0\), then \(k=1\) or \(2\), and \(N_k=N_1=N_2>0\) because \(\mu\ge8\). Hence there exist \(u_1,u_2,u_3\in T_0\) satisfying the above equation. Since \(c\notin T_0\), we automatically have \(u_i\ne c\). Moreover, if \(u_i=u_j\), then the equation gives \(u_i^2=c^2\), and hence \(u_i=c\), a contradiction. Therefore \(u_1,u_2,u_3\) are pairwise distinct.

If \(c\in T_0\), then \(k=0\). We must exclude the cases in which some \(u_i=c\) or two of the \(u_i\)'s are equal. If some \(u_i=c\), then the equation forces at least one of the other two variables to be equal to \(c\). If \(u_i=u_j\), then the same equation gives \(u_i=c\). Thus the bad solutions are precisely the ordered triples in which at least two coordinates are equal to \(c\). Their number is \(3|T_0|-2=2^m-3=\mu^2-3\). Hence the number of good solutions is at least
\[
N_0-(2^m-3)=\frac{\mu^4-24\mu^2+96-8\varepsilon\mu}{27}>0,
\]
where the last inequality follows directly from \(\mu\ge8\). Therefore, also in the case \(c\in T_0\), there exist pairwise distinct \(u_1,u_2,u_3\in T_0\) such that \(u_i\ne c\) and
\[
u_1u_2+u_1u_3+u_2u_3=c^2.
\]

By the reduction at the beginning of the proof, these elements satisfy all the required conditions in Theorem~\ref{d2}. Hence \(d=2\deg g+1=9\).
\end{proof}
\begin{remark}
For the family in Theorem~\ref{thm:x4cx-distance}, Proposition~\ref{prop:vdv-dimension} gives the exact dimension \(k=n-4m\). Hence, if \(c\) is a cubic element of \(\mathbb F_{2^m}^*\), then the code has parameters \([\,2^m-4,\ 2^m-4m-4,\ 9\,]\); otherwise, it has parameters \([\,2^m-1,\ 2^m-4m-1,\ 9\,]\).
\end{remark}

\begin{example}
Table~\ref{tab:x4cx-examples} lists several concrete parameters of \(\Gamma(L,g)\) computed by Magma. In the table, \(L=\{\alpha\in\mathbb F_{2^m}:g(\alpha)\ne0\}\), and the minimum distances agree with Theorem~\ref{thm:x4cx-distance}.

\begin{table}[htbp]
\centering
\setlength{\tabcolsep}{6pt}
\renewcommand{\arraystretch}{1.12}
\caption{Parameters of \(\Gamma(L,g)\) for \(g(X)=X^4+cX\)}
\label{tab:x4cx-examples}
\begin{tabular}{c c c c}
\toprule
\(\mathbb F_{2^m}\) & Defining polynomial &  \(g(X)\) & Parameters \([n,k,d]\)\\
\midrule
\(\mathbb F_{2^6}=\mathbb F_2(w)\)
& \(w^6+w^4+w^3+w+1=0\)
& \(X^4+wX\)
& \([63,39,9]\)\\
\(\mathbb F_{2^6}\)
& --
& \(X^4+X\)
& \([60,36,9]\)\\
\(\mathbb F_{2^8}\)
& --
& \(X^4+X\)
& \([252,220,9]\)\\
\(\mathbb F_{2^8}=\mathbb F_2(w)\)
& \(w^8+w^4+w^3+w^2+1=0\)
& \(X^4+wX\)
& \([255,223,9]\)\\
\bottomrule
\end{tabular}
\end{table}
\end{example}
\begin{remark}
In Theorem~\ref{thm:x4cx-distance}, when \(c=1\), the corresponding Goppa codes admit an \(A_4\) automorphism group, are quasi-cyclic, and have determined parameters. This will be proved in Theorem~\ref{main}.
\end{remark}

In Theorem~\ref{d2}, besides \(\phi(X)=X^3\), one may also choose other polynomial maps. For example, take \(A(X)=X^2+bX+d\), where \(b,d\in\mathbb F_{2^m}\) and \(b\ne0\), and set \(\phi(X)=X^5+b^2X^3+d^2X\). Since \(\phi'(X)=A(X)^2\), we have \(A(X)\mid\phi'(X)\), so the main condition in Theorem~\ref{d2} is satisfied. We give a concrete example below.

\begin{example}
Let \(\mathbb F_{2^8}=\mathbb F_2(w)\), where \(w^8+w^4+w^3+w^2+1=0\). Take \(A(X)=X^2+X+1\), \(\phi(X)=X^5+X^3+X\), and \(h(X)=X+w^{195}\). Then
\[
g(X)=A(X)h(\phi(X))=(X^2+X+1)(X^5+X^3+X+w^{195}).
\]
Let \(U=\{1,w^{142},w^{29}\}\). Computations show that, for every \(u\in U\), the equation \(X^5+X^3+X=u\) has five distinct roots in \(\mathbb F_{2^8}\). Moreover, if
\[
F_0(X)=(X-1)(X-w^{142})(X-w^{29}),
\]
then \(h(X)\mid F_0'(X)\). Therefore, Theorem~\ref{d2} gives \(d=15\), where \(L=\{\alpha\in\mathbb F_{2^8}:g(\alpha)\ne0\}\). Magma gives the parameters
$[253,197,15].$
\end{example}

\section{Binary Goppa codes and their related codes with \(A_4\) or \(A_5\) automorphism groups}\label{sec4}
In this section, we study binary Goppa codes and their related codes with alternating automorphism groups. We first construct binary Goppa codes, expurgated Goppa codes and extended Goppa codes with \(A_4\) automorphism groups by using the orbit decompositions of \(A_4\)-subgroups of \(PGL(2,2^m)\) on \(\overline{\mathbb F}_{2^m}\). We then determine the parameters of a representative \(A_4\)-invariant family by applying the minimum-distance criterion developed in Section~\ref{sec3}, and derive a dimension result for another related family. Finally, we extend the construction to \(A_5\) automorphism groups.

We first recall the necessary and sufficient condition for \(PGL(2,2^m)\) to contain subgroups isomorphic to \(A_4\) and \(A_5\). 

\begin{lemma}\cite{Suzuki1982}
Let \(L=PGL(2,2^m)\). Then \(L\) contains subgroups isomorphic to \(A_4\) and \(A_5\) if and only if \(m\) is even.
\end{lemma}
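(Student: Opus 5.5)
The plan is to prove both directions by hand, using the structure of $2$-subgroups of $PGL(2,2^m)$. Put $q=2^m$; recall $|PGL(2,q)|=q(q^2-1)$. The key facts are that $A_4$ contains a Klein four-group $V$ normalized by an element of order $3$, and that $A_4\le A_5$.

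\emph{Sufficiency ($m$ even).} Since $m$ is even, $\mathbb F_4\subseteq\mathbb F_{2^m}$, and I will write down both subgroups explicitly.
\begin{itemize}
\item For $A_4$, take the affine group $\{\zeta\mapsto a\zeta+b: a\in\mathbb F_4^*,\ b\in\mathbb F_4\}\le PGL(2,2^m)$. It has order $12$ and acts faithfully on the four points of $\mathbb F_4$, hence embeds in $S_4$. The translations form a normal Klein four-group, and the maps $\zeta\mapsto a\zeta$ act on it by order-$3$ automorphisms. So the group is $V\rtimes C_3\cong A_4$; alternatively, it is the unique subgroup of $S_4$ of order $12$.
\item For $A_5$, take the subfield subgroup $PGL(2,4)\le PGL(2,2^m)$, given by matrices with entries in $\mathbb F_4$. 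It has order $60$ and acts faithfully and $3$-transitively on the $5$ points of $\mathbb F_4\cup\{\infty\}$, so it is a subgroup of $S_5$ of order $60$, namely $A_5$.
\end{itemize}

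\emph{Necessity.} Because $A_4\le A_5$, it suffices to show that an $A_4$-subgroup forces $m$ even. Let $V\le A_4$ be the Klein four-group and $\sigma\in A_4$ an element of order $3$ normalizing it.
\begin{itemize}
\item The Sylow $2$-subgroups of $PGL(2,q)$ have order $q$, and one of them is the translation group $\{\zeta\mapsto\zeta+b\}$. By Sylow's theorem, after conjugating we may assume $V=\{\zeta\mapsto\zeta+b: b\in W\}$, where $W$ is a $2$-dimensional $\mathbb F_2$-subspace of $\mathbb F_{2^m}$.
\item Each nontrivial translation fixes only $\infty$, so $\infty$ is the unique common fixed point of $V$.
\item Since $\sigma$ normalizes $V$, it permutes the fixed points of $V$, so $\sigma(\infty)=\infty$. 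Hence $\sigma(\zeta)=a\zeta+c$ with $a\in\mathbb F_{2^m}^*$.
\item Conjugating gives $\sigma\tau_b\sigma^{-1}=\tau_{ab}$, where $\tau_b$ denotes $\zeta\mapsto\zeta+b$. So multiplication by $a$ induces an automorphism of $V$ of order $3$; in particular it is nontrivial, so $a\neq1$.
\item We also have $\sigma^3(\zeta)=a^3\zeta+(\cdots)$, and $\sigma^3=\mathrm{id}$ forces $a^3=1$. So $a$ has order exactly $3$ in $\mathbb F_{2^m}^*$.
\item Therefore $3\mid 2^m-1$, which holds if and only if $m$ is even.
\end{itemize}

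The main obstacle is the normalizer step: showing that $\sigma$ must fix $\infty$, and then extracting the order of $a$. The unique-fixed-point property of nontrivial translations handles the first part. The identities $\sigma\tau_b\sigma^{-1}=\tau_{ab}$ and $\sigma^3=\mathrm{id}$ handle the second, giving $a^3=1$ with $a\ne1$.
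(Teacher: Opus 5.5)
Your proof is correct. It differs from the paper mainly in that the paper supplies no proof here: it quotes Suzuki, where the statement follows from Dickson's classification of the subgroups of $PSL(2,q)$, which equals $PGL(2,2^m)$ in characteristic $2$.

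Your sufficiency half is the same explicit construction the paper uses later. That is the affine group $\{\zeta\mapsto\alpha\zeta+\beta:\alpha\in\mathbb F_4^*,\ \beta\in\mathbb F_4\}=\langle A,B\rangle$, together with the subfield subgroup $PGL(2,4)$, identified with $A_5$ through its action on $\mathbb F_4\cup\{\infty\}$.

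Your necessity half is a clean elementary replacement for the classification. The translations form a Sylow $2$-subgroup, and a Klein four-group of translations has $\infty$ as its only fixed point. So the order-$3$ element $\sigma$ normalizing it is affine, $\zeta\mapsto a\zeta+c$, and $a^3=1$, $a\neq1$ gives $3\mid 2^m-1$. The step $a\neq1$ can be shortened: if $a=1$ then $\sigma$ is a translation and has order at most $2$. Your version via the conjugation action on $V$ also works, since $C_{A_4}(V)=V$.

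The citation gives the paper the conjugacy statements in Lemma~\ref{7} and Lemma~\ref{6}, which it needs to reduce to the standard subgroup $H_0$. Your route gives a self-contained proof, and it extends with little effort to the $A_4$ part of that conjugacy. Since $aW=W$ with $a\in\mathbb F_4\setminus\mathbb F_2$, the plane $W$ equals $\mathbb F_4w$ for some $w$. Conjugating by $\zeta\mapsto(\zeta+c/(1+a))/w$ then sends $V\langle\sigma\rangle$ onto $\langle A,B\rangle$.
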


Therefore, we assume that \(m\ge 6\) is even in this section.
\subsection{Construction with \(A_4\) Automorphism Groups}

To obtain a matrix representation of \(A_4\), we first state the following conjugacy result for subgroups isomorphic to \(A_4\) in \(PGL(2,2^m)\).

\begin{lemma}\cite{Suzuki1982}\label{7}
Let \(PGL(2,2^m)\) contain a subgroup \(H\) isomorphic to \(A_4\). Then \(H\) is conjugate to a subgroup of the standard subgroup \(PGL(2,4)\). In particular, any two subgroups of \(PGL(2,2^m)\) that are isomorphic to \(A_4\) are conjugate to each other.
\end{lemma}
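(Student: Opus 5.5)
This is a classical result on subgroups of $PGL(2,q)$, which the paper cites from Suzuki. My plan is to reprove it through the structure of the Klein four-subgroup of $A_4$ and then lift to $A_4$ via its element of order $3$.

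\emph{Step 1: locate the four-group.} Let $H\cong A_4$ and let $V\le H$ be its normal Klein four-subgroup. In characteristic $2$, every involution of $PGL(2,2^m)=SL(2,2^m)$ is unipotent, so it has exactly one fixed point on $\overline{\mathbb F}_{2^m}$. Commuting involutions share that fixed point, since each preserves the other's unique fixed point. Hence $V$ fixes a common point. After conjugation we may take this point to be $\infty$. Then $V$ lies in the translation group $\{\zeta\mapsto\zeta+b\}$, so $V=\{\zeta\mapsto\zeta+b: b\in W\}$ for some $2$-dimensional $\mathbb F_2$-subspace $W\subseteq\mathbb F_{2^m}$.

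\emph{Step 2: the element of order $3$.} Let $\tau\in H$ have order $3$. Then $\tau$ normalizes $V$, so it fixes the unique common fixed point $\infty$ of $V$. Thus $\tau$ is affine, $\tau(\zeta)=a\zeta+e$ with $a^3=1$ and $a\ne1$, because $\tau$ has order $3$ and is not a translation. Conjugating by a translation, which fixes $\infty$ and centralizes $V$, we may take $e=0$. So $a$ is a primitive cube root of unity, i.e.\ $a\in\mathbb F_4\setminus\mathbb F_2$. The relation $\tau V\tau^{-1}=V$ gives $aW=W$. Therefore $W$ is a $1$-dimensional $\mathbb F_4$-subspace, $W=\beta\mathbb F_4$ with $\beta\ne0$.

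\emph{Step 3: normalize to $PGL(2,4)$.} Conjugating by the scaling $\zeta\mapsto\beta^{-1}\zeta$ sends $V$ to $\{\zeta\mapsto\zeta+b: b\in\mathbb F_4\}$ and fixes $\tau$. So $H=\langle V,\tau\rangle$ becomes the group of maps $\zeta\mapsto a\zeta+b$ with $a\in\mathbb F_4^*$ and $b\in\mathbb F_4$. This group has order $12$ and lies in $PGL(2,4)$.

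\emph{Step 4: uniqueness up to conjugacy.} The argument shows that every $A_4$-subgroup is conjugate to this one fixed subgroup $AGL(1,4)$. Hence any two $A_4$-subgroups are conjugate.

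The main obstacle is Step 1, namely showing that commuting involutions share a fixed point. I would check this directly: conjugate one involution to $\zeta\mapsto\zeta+1$ and compute its centralizer. The centralizer consists of the $\zeta\mapsto\zeta+b$ in $PGL(2,2^m)$, and all of these fix $\infty$.
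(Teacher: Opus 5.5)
Your proof is correct and complete. The paper does not prove this lemma; it cites Suzuki's treatment of the subgroups of $PSL(2,q)$, where conjugacy of $A_4$-subgroups comes out of the general classification of subgroups (Dickson's theorem). Your route is a direct, elementary argument specific to characteristic $2$. Involutions of $PGL(2,2^m)$ are unipotent, so each has a single fixed point. Hence the Klein four-group $V$ has a unique common fixed point, and the order-$3$ element must fix it because it normalizes $V$. Everything then reduces to affine maps, and the relation $aW=W$ forces $W$ to be an $\mathbb F_4$-line.

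This approach gives you more than the cited statement. It shows directly that an $A_4$-subgroup can exist only if $\mathbb F_4\subseteq\mathbb F_{2^m}$, i.e.\ $m$ even. It also identifies the conjugate explicitly as $\{\zeta\mapsto\alpha\zeta+\beta:\alpha\in\mathbb F_4^*,\ \beta\in\mathbb F_4\}$, which is exactly the paper's $H_0=\langle A,B\rangle$. So you obtain the second assertion of the subsequent proposition without the separate order comparison the paper does there.

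What the citation gives instead is generality. The same source also handles $A_5$ (which the paper needs for Lemma~\ref{6}) as well as odd $q$. Your fixed-point argument does not carry over to odd $q$, because there involutions are semisimple rather than unipotent.
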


By this lemma, it remains to choose a standard representative of the conjugacy class of \(A_4\)-subgroups in \(PGL(2,2^m)\). Let \(\omega\in\mathbb F_4\) be a primitive element satisfying
\(
\omega^2+\omega+1=0.
\)
Define
\[
A=
\begin{pmatrix}
\omega & 0\\
0 & \omega^2
\end{pmatrix},
\quad
B=
\begin{pmatrix}
1 & 1\\
0 & 1
\end{pmatrix}.
\]

\begin{proposition}
Define
\(
H_0=\langle A,B\rangle\le PGL(2,2^m).
\)
Then
\(
H_0\cong A_4.
\)
Moreover,
\[
\{H\le PGL(2,2^m):H\cong A_4\}
=
\{PH_0P^{-1}:P\in PGL(2,2^m)\}.
\]
\end{proposition}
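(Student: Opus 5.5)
We first show that \(H_0\cong A_4\), and then derive the description of all \(A_4\)-subgroups from Lemma~\ref{7}.

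\textbf{Step 1: \(H_0\) is a copy of \(A_4\) inside \(PGL(2,4)\).} Both \(A\) and \(B\) have entries in \(\mathbb F_4\subseteq\mathbb F_{2^m}\) (recall \(m\) is even). Hence \(H_0\le PGL(2,4)\), viewed as the standard subgroup of \(PGL(2,2^m)\). Acting on \(\overline{\mathbb F}_4=\{0,1,\omega,\omega^2,\infty\}\):
\begin{itemize}
\item \(A\) acts as \(\zeta\mapsto\omega\zeta/\omega^2=\omega^2\zeta\). It fixes \(0\) and \(\infty\) and induces the \(3\)-cycle \((1\ \omega^2\ \omega)\). So \(A\) has order \(3\) in \(PGL\).
\item \(B\) acts as \(\zeta\mapsto\zeta+1\). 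It fixes \(\infty\) and induces \((0\ 1)(\omega\ \omega^2)\). So \(B\) has order \(2\).
\end{itemize}
Since \(PGL(2,4)\) acts faithfully on these five points and \(|PGL(2,4)|=60\), the action identifies \(PGL(2,4)\) with a subgroup of \(S_5\) of order \(60\), namely \(A_5\). Both images above are even permutations, which is consistent with this.

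Both generators fix \(\infty\), so \(H_0\) lies in the stabilizer of \(\infty\). This stabilizer is the affine group \(\{\zeta\mapsto a\zeta+b: a\in\mathbb F_4^*,\ b\in\mathbb F_4\}\), of order \(12\). Conjugating \(B\) by powers of \(A\) gives the translations \(\zeta\mapsto\zeta+\omega^{2i}\). Therefore \(H_0\) contains the full translation group \(\cong(\mathbb Z/2)^2\) together with the order-\(3\) element \(A\), and so \(H_0\) is the whole affine group \(\mathrm{AGL}(1,4)\), of order \(12\).

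This group is \((\mathbb Z/2)^2\rtimes\mathbb Z/3\) with a nontrivial (faithful) action. Equivalently, it is the subgroup of order \(12\) of \(A_5\), which is \(A_4\). Concretely, it acts on the four points \(\mathbb F_4\) as the even permutations: all \(3\)-cycles and double transpositions. Hence \(H_0\cong A_4\).

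\textbf{Step 2: the set equality.} The inclusion \(\supseteq\) is immediate, since conjugation preserves isomorphism type. For \(\subseteq\), let \(H\cong A_4\). Lemma~\ref{7} states that any two subgroups of \(PGL(2,2^m)\) isomorphic to \(A_4\) are conjugate. Applying this with \(H_0\) gives \(H=PH_0P^{-1}\) for some \(P\in PGL(2,2^m)\).

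The only point needing care is confirming that \(H_0\) is a genuine \(A_4\) rather than a group of a different order. I expect this to be the main obstacle, and Step 1 handles it by computing the order exactly: the generated group contains all four translations and the multiplier \(\omega^2\), and it is contained in the affine group of order \(12\).
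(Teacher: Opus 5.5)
Your proof is correct. It shares two ingredients with the paper: the key computation that the conjugates $A^iBA^{-i}$ together with $A$ give all twelve maps $z\mapsto\alpha z+\beta$ with $\alpha\in\mathbb F_4^*$ and $\beta\in\mathbb F_4$, and the appeal to Lemma~\ref{7} for the set equality.

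The two arguments differ in how they obtain the upper bound $|H_0|\le 12$ and the identification with $A_4$.

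\textbf{The paper's route.} It verifies $A^3=B^2=(AB)^3=1$. It then uses the fact that $\sigma^3=\tau^2=(\sigma\tau)^3=1$, with $\sigma=(1\,2\,3)$ and $\tau=(1\,2)(3\,4)$, is a presentation of $A_4$. This gives a surjection $A_4\to H_0$, so once $|H_0|\ge 12$ is known, both the upper bound and the isomorphism follow immediately.

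\textbf{Your route.} You get $|H_0|\le 12$ from the containment $H_0\le \mathrm{AGL}(1,4)$. You then recognize this group as $A_4$ through its faithful action on the four points of $\mathbb F_4$ by even permutations. The generators act as $(1\ \omega^2\ \omega)$ and $(0\ 1)(\omega\ \omega^2)$, and an affine map is determined by its values at $0$ and $1$.

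\textbf{Trade-offs.} The paper's argument is shorter, but it depends on the standard fact that those three relations actually define $A_4$, not merely that $A_4$ satisfies them. Yours is fully elementary and exhibits the isomorphism explicitly as a permutation representation, at the cost of a separate faithfulness and evenness check. Your side remark identifying $PGL(2,4)$ with $A_5$ is not needed for your argument.
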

\begin{proof}
Let
$\sigma=(1\,2\,3)$, $\tau=(1\,2)(3\,4).$
Then \(A_4=\langle\sigma,\tau\rangle\), where \(\sigma^3=\tau^2=(\sigma\tau)^3=1\). On the other hand, a direct calculation gives
\[
A^3=B^2=(AB)^3=1.
\]
Thus the assignment \(\sigma\mapsto A\) and \(\tau\mapsto B\) defines a homomorphism from \(A_4\) onto \(H_0=\langle A,B\rangle\).

It remains to compare the orders. The powers of \(A\) induce the maps \(z\mapsto \alpha z\), where \(\alpha\in\mathbb F_4^*\), and the elements \(1,B,ABA^{-1},A^2BA^{-2}\) give all translations \(z\mapsto z+\beta\), where \(\beta\in\mathbb F_4\). Hence \(H_0\) contains all transformations \(z\mapsto \alpha z+\beta\) with \(\alpha\in\mathbb F_4^*\) and \(\beta\in\mathbb F_4\), and therefore \(|H_0|\ge 12\). Since \(H_0\) is a homomorphic image of \(A_4\), we also have \(|H_0|\le |A_4|=12\). Thus \(|H_0|=12\), and the homomorphism is an isomorphism. Hence \(H_0\cong A_4\).

The second assertion follows immediately from Lemma~\ref{7}.
\end{proof}
We now use the above representation to construct Goppa codes and related codes with \(A_4\) automorphism groups.
Let
$H_0=\langle A,B\rangle\cong A_4$
be the standard subgroup of \(PGL(2,2^m)\) defined above. For \(P\in PGL(2,2^m)\), set
\[
G=PH_0P^{-1}\le PGL(2,2^m).
\]
Next, we distinguish two cases: the affine case and the non-affine case.

\textbf{\textit{Case 1 (Affine case).}}
Let
\[
P=
\begin{pmatrix}
a & b\\
0 & 1
\end{pmatrix}
\in PGL(2,2^m).
\]
The following theorem describes the orbit decomposition of the conjugate subgroup \(G=PH_0P^{-1}\) in this case.
\begin{proposition}\label{orbit1}
The action of \(G=PH_0P^{-1}\) on
$ \overline{\mathbb F}_{2^m}=\mathbb F_{2^m}\cup\{\infty\}$
has the following orbit decomposition:
\[
 \overline{\mathbb F}_{2^m}
=
O_0\cup O_1\cup \bigcup_{i=2}^{\frac{2^m+8}{12}} O_i,
\]
where
$O_0=\{\infty\},
O_1=\{b,a+b,a\omega+b,a\omega^2+b\},$
and
$
|O_i|=12,~  2\le i\le \frac{2^m+8}{12}.
$
\end{proposition}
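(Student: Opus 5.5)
Since \(P(z)=az+b\) and \(P^{-1}(z)=a^{-1}(z+b)\), the plan is to reduce to the standard group \(H_0\), compute its orbits, and transport them by \(P\). Orbits correspond under conjugation: if \(\mathcal O\) is an \(H_0\)-orbit then \(P(\mathcal O)\) is a \(G\)-orbit, of the same size. So it suffices to show three things about \(H_0\) on \(\overline{\mathbb F}_{2^m}\): \(\{\infty\}\) is an orbit, \(\mathbb F_4\) is an orbit, and every other orbit is regular of size \(12\). Applying \(P\) then gives \(P(\infty)=\infty\) and \(P(\mathbb F_4)=\{b,a+b,a\omega+b,a\omega^2+b\}=O_1\).

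From the proof that \(H_0\cong A_4\), \(H_0\) is exactly the affine group \(\{z\mapsto \alpha z+\beta:\alpha\in\mathbb F_4^*,\beta\in\mathbb F_4\}\). We use this description throughout.

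\begin{itemize}
\item Every element of \(H_0\) fixes \(\infty\), so \(\{\infty\}\) is an orbit.
\item \(\mathbb F_4\) is preserved by these maps, and the translations already act transitively on it, so \(\mathbb F_4\) is a single orbit of size \(4\).
\item For \(z\in\mathbb F_{2^m}\setminus\mathbb F_4\), suppose \(\alpha z+\beta=z\) with \((\alpha,\beta)\ne(1,0)\).
  \begin{itemize}
  \item If \(\alpha=1\), then \(\beta=0\), which is excluded.
  \item If \(\alpha\ne1\), then \(z=\beta/(1+\alpha)\in\mathbb F_4\), which is also excluded.
  \end{itemize}
  Hence the stabilizer of \(z\) is trivial, and its orbit has size \(12\) by orbit--stabilizer.
\end{itemize}

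Finally we count orbits. Since \(m\) is even, \(2^m\equiv4\pmod{12}\): indeed \(2^m\equiv1\pmod 3\) and \(4\mid 2^m\). So \(12\mid 2^m-4\), and the \(2^m-4\) remaining points split into \((2^m-4)/12\) orbits of size \(12\). Counting \(O_0\) and \(O_1\) as well, the last index is \(1+(2^m-4)/12=(2^m+8)/12\), which matches the statement.

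None of these steps is really an obstacle. The only point needing care is the identification of \(H_0\) with the affine group over \(\mathbb F_4\), and this was established in the preceding proposition. The stabilizer computation is then a one-line check.
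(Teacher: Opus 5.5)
Your proof is correct and follows the same route as the paper: determine the $H_0$-orbits ($\{\infty\}$, $\mathbb F_4$, and regular orbits of size $12$ elsewhere by orbit--stabilizer), then transport them by $P(z)=az+b$. You also fill in two points the paper leaves implicit, namely the explicit check that no nonidentity map $z\mapsto\alpha z+\beta$ fixes a point of $\mathbb F_{2^m}\setminus\mathbb F_4$, and the count $12\mid 2^m-4$ for even $m$ that gives the upper index $(2^m+8)/12$.
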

\begin{proof}
	We first consider the action of \(H_0\) on $\overline{\mathbb F}_{2^m}$. Since all matrices in \(H_0\) are upper triangular and have entries in \(\mathbb F_4\), we have
	\[
	O_0=\{\infty\},
	\quad
	O_1=\{0,1,\omega,\omega^2\}.
	\]
	For any \(\gamma\in\mathbb F_{2^m}\setminus \mathbb F_4\), no nonidentity element of \(H_0\) fixes \(\gamma\). Hence, by the orbit-stabilizer theorem,
	\[
	|H_0(\gamma)|
	=
	\frac{|H_0|}{|(H_0)_\gamma|}
	=
	|H_0|
	=
	12.
	\]
	Therefore, all remaining orbits have length \(12\). Applying the conjugation by \(P\), we obtain the stated orbit decomposition for \(G=PH_0P^{-1}\).
\end{proof}
Using the long orbits in this decomposition as the support, we obtain Goppa codes and expurgated Goppa codes with  \(A_4\) automorphism groups.
\begin{theorem}\label{5}
Let \(L=\bigcup_{i=2}^{s}O_i\), where \(O_2,\ldots,O_s\), \(2\le s\le (2^m+8)/12\), are the orbits defined in Proposition~\ref{orbit1}. 
Let \(\eta_1,\ldots,\eta_t\) be elements in an extension field of
\(\mathbb F_{2^m}\), and let \(e_1,\ldots,e_t\) be positive integers.
Assume that \(G(\eta_j)\cap L=\varnothing\) for \(1\leq j\leq t\), and that
\[
g(X):=
\prod_{j=1}^t
\left(\prod_{\alpha\in G(\eta_j)}(X-\alpha)\right)^{e_j}
\]
belongs to \(\mathbb F_{2^m}[X]\).
Then the binary Goppa code $\Gamma(L,g)$ and the expurgated Goppa
code $\widetilde{\Gamma}(L,g)$ have an automorphism group
isomorphic to $A_4$.
\end{theorem}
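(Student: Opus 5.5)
The plan is to show that every $\pi\in G$ permutes the coordinates of $L$ and maps codewords to codewords. The argument rests on the classical transformation rule for Goppa codes under fractional linear maps.

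Here $G=PH_0P^{-1}$ with $P$ upper triangular. Hence every element of $G$ is an affine map $\pi(z)=\alpha z+\beta$ with $\alpha\in\mathbb F_{2^m}^*$. Indeed, $H_0$ consists of the maps $z\mapsto \alpha z+\beta$ ($\alpha\in\mathbb F_4^*,\beta\in\mathbb F_4$), and conjugating by an affine map keeps it affine. Since $L$ is a union of $G$-orbits, each $\pi\in G$ restricts to a permutation of $L$. This gives a homomorphism $G\to \mathrm{Sym}(L)$.

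\textbf{Injectivity.} This map is injective, because a nonidentity affine map fixes at most one point of $\mathbb F_{2^m}$, while $|L|\ge 12$. So the image is isomorphic to $A_4$.

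\textbf{Invariance of $g$.} Next I show that $g(\pi(X))=\alpha^{\deg g}g(X)$. The multiset of roots of $g$ is $\bigcup_j G(\eta_j)$ with multiplicities $e_j$, and it is $G$-stable. Therefore $\pi^{-1}$ permutes the roots together with their multiplicities. Now
\[
\prod_{\gamma\in G(\eta_j)}(\alpha X+\beta-\gamma)=\alpha^{|G(\eta_j)|}\prod_{\gamma}\bigl(X-\pi^{-1}(\gamma)\bigr),
\]
which proves the claim.

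\textbf{Codewords map to codewords.} Let $\mathbf c\in\Gamma(L,g)$, and let $\mathbf c'$ be the permuted word with $c'_{\gamma}=c_{\pi(\gamma)}$. Then
\[
R_{\mathbf c'}(X)=\sum_{\gamma\in L}\frac{c_{\pi(\gamma)}}{X-\gamma}=\sum_{\delta\in L}\frac{c_\delta}{X-\pi^{-1}(\delta)}.
\]
Since $\pi(X)-\delta=\alpha\bigl(X-\pi^{-1}(\delta)\bigr)$, this equals $\alpha R_{\mathbf c}(\pi(X))$. The congruence $R_{\mathbf c}(Y)\equiv0\pmod{g(Y)}$ means $g(Y)$ divides the numerator $N(Y)$ of $R_{\mathbf c}(Y)=N(Y)/\sigma(Y)$. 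Substituting $Y=\pi(X)$ gives $g(\pi(X))=\alpha^{\deg g}g(X)\mid N(\pi(X))$. Also, $\sigma(\pi(X))$ is coprime to $g(X)$, because $g$ has no roots in $L$ and $L$ is $\pi$-stable. Hence $R_{\mathbf c'}\equiv0\pmod g$.

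An equivalent route uses the parity-check matrix of Proposition~\ref{47}. The rows of $H$ span the functions $\gamma\mapsto F(\gamma)/g(\gamma)$ with $\deg F<r$. Under $\pi$, $F(\pi(\gamma))/g(\pi(\gamma))=\alpha^{-r}F(\pi(\gamma))/g(\gamma)$, and $\deg F(\pi(X))=\deg F$, so the row space of $H$ over $\mathbb F_{2^m}$ is preserved, hence its binary subfield subcode is preserved. For the expurgated code, the same argument applies with $\deg F\le r$ using $\widetilde H$. Equivalently, the extra parity condition $\sum c_i=0$ is clearly invariant under permutations.

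\textbf{Main obstacle.} The main care point is the invariance $g(\pi X)=\alpha^{r}g(X)$ when the $\eta_j$ lie in an extension field. I would handle it by noting that $G$ acts on $\overline{\mathbb F}_{2^{m}}$ of the extension as well, that the orbits $G(\eta_j)$ are $G$-stable sets there, and that the affine form avoids any issue with $\infty$. No orbit $G(\eta_j)$ contains $\infty$ unless it is $O_0$, which is excluded implicitly since $g$ is a polynomial.
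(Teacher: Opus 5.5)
Your proof is correct and follows essentially the paper's argument. The $G$-stability of the root multiset gives $g(\pi(X))=\alpha^{\deg g}g(X)$ for each affine $\pi\in G$, and this, together with $L$ being a union of $G$-orbits, shows that the induced coordinate permutations preserve the parity-check row spaces of $\Gamma(L,g)$ and $\widetilde{\Gamma}(L,g)$; your faithfulness argument replaces the paper's explicit cycle description of $\psi_A,\psi_B$, and your observation that $\deg F(\pi(X))=\deg F$ spells out the degree-preservation step the paper leaves implicit.
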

\begin{proof}
	Let
	\[
	G=\langle A_P,B_P\rangle,
	\quad
	A_P=PAP^{-1},\quad B_P=PBP^{-1}.
	\]
	By the definition of \(g(X)\), the set of roots of \(g(X)\) is stable under the action of \(G\). Hence there exist nonzero constants \(\gamma_1,\gamma_2\) such that
	\[
	g(A_P(\zeta))=\gamma_1g(\zeta),
	\quad
	g(B_P(\zeta))=\gamma_2g(\zeta).
	\]
	
	For each long orbit \(O_i\), write
	\[
	\begin{aligned}
	O_i=\{&
	\zeta_i,A_P(\zeta_i),A_P^2(\zeta_i),B_P(\zeta_i),
	A_PB_P(\zeta_i),A_P^2B_P(\zeta_i),A_PB_PA_P^2(\zeta_i),\\
	&B_PA_PB_P(\zeta_i),A_PB_PA_PB_P(\zeta_i),B_PA_P(\zeta_i),
	A_PB_PA_P(\zeta_i),A_P^2B_PA_P(\zeta_i)\}.
	\end{aligned}
	\]
	Then the action of \(A_P\) on \(L\) induces the permutation
	\[
	\psi_A
	=
	\prod_{i=2}^{s}
	(1^{(i)},2^{(i)},3^{(i)})
	(4^{(i)},5^{(i)},6^{(i)})
	(7^{(i)},8^{(i)},9^{(i)})
	(10^{(i)},11^{(i)},12^{(i)}),
	\]
	and the action of \(B_P\) induces the permutation
	\[
	\psi_B
	=
	\prod_{i=2}^{s}
	(1^{(i)},4^{(i)})
	(2^{(i)},10^{(i)})
	(3^{(i)},9^{(i)})
	(5^{(i)},8^{(i)})
	(6^{(i)},11^{(i)})
	(7^{(i)},12^{(i)}).
	\]
	Thus,
	\[
	\langle \psi_A,\psi_B\rangle\cong A_4.
	\]
    Moreover, since \(g(A_P(\zeta))\) and \(g(B_P(\zeta))\) are scalar multiples of \(g(\zeta)\), the parity-check matrices obtained after applying the induced coordinate permutations have the same row spaces as the original parity-check matrices of \(\Gamma(L,g)\) and \(\widetilde{\Gamma}(L,g)\). Hence these permutations preserve both \(\Gamma(L,g)\) and \(\widetilde{\Gamma}(L,g)\). Therefore, the two codes have an automorphism group isomorphic
to $A_4$.
\end{proof}
\begin{remark}
Since the automorphism group of each code constructed in Theorem~\ref{5} contains a subgroup isomorphic to \(A_4\), and \(A_4\) contains cyclic subgroups isomorphic to \(\mathbb Z/3\mathbb Z\) and \(\mathbb Z/2\mathbb Z\), the codes \(\Gamma(L,g)\) and \(\widetilde{\Gamma}(L,g)\) are binary quasi-cyclic Goppa codes of length \(12(s-1)\).
\end{remark}

We next give examples over \(\mathbb{F}_{2^6}\) arising from the
construction above.
\begin{example}
Let
$f(X)=X^6+X^4+X^3+X+1,$ 
and let \(w\) be a root of \(f(X)\) over \(\mathbb F_2\). Then
$\mathbb F_2(w)\cong \mathbb F_{2^6}.$

We first list the orbit decompositions for several choices of \(P\) in Case 1, see Table~\ref{biao1}. For each fixed \(P\), the symbols \(L_1,\ldots,L_5\) denote the long orbits of length \(12\) listed in the last column of Table~\ref{biao1}. Using these orbit labels, Table~\ref{biao2} lists the parameters of the Goppa codes and the corresponding expurgated Goppa codes obtained from the construction. These codes have an \(A_4\) automorphism subgroup and are quasi-cyclic.

\begingroup
\scriptsize
\setlength{\tabcolsep}{2.2pt}
\renewcommand{\arraystretch}{1.12}

\begin{longtable}{C{0.8cm} C{1.9cm} C{2.7cm} P{9.0cm}}
\caption{Orbit decompositions in the affine case over \(\mathbb F_{2^6}\)}
\label{biao1}\\
\toprule
\multicolumn{1}{c}{No.}
& \multicolumn{1}{c}{Matrix \(P\)}
& \multicolumn{1}{c}{Short orbits}
& \multicolumn{1}{c}{Long orbits of length \(12\)}\\
\midrule
\endfirsthead

\toprule
\multicolumn{1}{c}{No.}
& \multicolumn{1}{c}{Matrix \(P\)}
& \multicolumn{1}{c}{Short  orbits}
& \multicolumn{1}{c}{Long orbits of length \(12\)}\\
\midrule
\endhead

\multirow{5}{*}{1}
& \multirow{5}{*}{\(I\)}
& \multirow{5}{=}{\makecell[l]{fixed point: \(\infty\)\\short orbit:\\ \((0,1,w^{21},w^{42})\)}}
& \(L_1=(w,w^{43},w^{22},w^{56},w^{35},w^{14},w^{25},w^4,w^{46},w^{37},w^{16},w^{58})\)\\
&&& \(L_2=(w^2,w^{44},w^{23},w^{49},w^{28},w^7,w^{53},w^{32},w^{11},w^{29},w^8,w^{50})\)\\
&&& \(L_3=(w^3,w^{45},w^{24},w^{13},w^{55},w^{34},w^{20},w^{62},w^{41},w^{36},w^{15},w^{57})\)\\
&&& \(L_4=(w^5,w^{47},w^{26},w^{30},w^9,w^{51},w^{48},w^{27},w^6,w^{61},w^{40},w^{19})\)\\
&&& \(L_5=(w^{10},w^{52},w^{31},w^{60},w^{39},w^{18},w^{38},w^{17},w^{59},w^{12},w^{54},w^{33})\)\\
\midrule

\multirow{5}{*}{2}
& \multirow{5}{*}{\(\begin{pmatrix}1&w^3\\0&1\end{pmatrix}\)}
& \multirow{5}{=}{\makecell[l]{fixed point: \(\infty\)\\short orbit:\\ \((w^3,w^{13},w^{20},w^{57})\)}}
& \(L_1=(0,w^{24},w^{45},1,w^{15},w^{62},w^{42},w^{34},w^{36},w^{41},w^{55},w^{21})\)\\
&&& \(L_2=(w,w^9,w^{33},w^{56},w^{47},w^{10},w^{25},w^{61},w^{38},w^{27},w^{60},w^{58})\)\\
&&& \(L_3=(w^2,w^{18},w^6,w^{49},w^{59},w^{40},w^{53},w^{31},w^{26},w^{54},w^{51},w^{50})\)\\
&&& \(L_4=(w^4,w^7,w^{19},w^{35},w^{11},w^5,w^{37},w^{23},w^{48},w^8,w^{30},w^{43})\)\\
&&& \(L_5=(w^{12},w^{29},w^{22},w^{52},w^{32},w^{14},w^{17},w^{28},w^{46},w^{44},w^{16},w^{39})\)\\
\midrule

\multirow{5}{*}{3}
& \multirow{5}{*}{\(\begin{pmatrix}w^2&w^5\\0&1\end{pmatrix}\)}
& \multirow{5}{=}{\makecell[l]{fixed point: \(\infty\)\\short orbit:\\ \((w^5,w^{15},w^{22},w^{59})\)}}
& \(L_1=(0,w^{26},w^{47},w^2,w^{17},w,w^{44},w^{36},w^{38},w^{43},w^{57},w^{23})\)\\
&&& \(L_2=(1,w^{40},w^{27},w^{49},w^{12},w^{58},w^{29},w^{62},w^{60},w^{35},w^3,w^{11})\)\\
&&& \(L_3=(w^4,w^{20},w^8,w^{51},w^{61},w^{42},w^{55},w^{33},w^{28},w^{56},w^{53},w^{52})\)\\
&&& \(L_4=(w^6,w^9,w^{21},w^{37},w^{13},w^7,w^{39},w^{25},w^{50},w^{10},w^{32},w^{45})\)\\
&&& \(L_5=(w^{14},w^{31},w^{24},w^{54},w^{34},w^{16},w^{19},w^{30},w^{48},w^{46},w^{18},w^{41})\)\\
\bottomrule
\end{longtable}
\endgroup

\begingroup
\scriptsize
\setlength{\tabcolsep}{2.2pt}
\renewcommand{\arraystretch}{1.16}

\begin{longtable}{C{1.5cm} P{3cm} C{6.4cm} C{1.55cm} C{1.45cm}}
\caption{Parameters of Goppa codes and related codes with \(A_4\) automorphism groups in the affine case over \(\mathbb F_{2^6}\)}
\label{biao2}\\

\toprule
\multicolumn{1}{c}{No.}
& \multicolumn{1}{c}{Support}
& \multicolumn{1}{c}{\(g(X)\)}
& \multicolumn{1}{c}{\(\Gamma(L,g)\)}
& \multicolumn{1}{c}{\(\widetilde{\Gamma}(L,g)\)}\\
\midrule
\endfirsthead

\caption[]{Parameters of Goppa codes and related codes with \(A_4\) automorphism groups in the affine case over \(\mathbb F_{2^6}\) (continued)}\\
\toprule
\multicolumn{1}{c}{No.}
& \multicolumn{1}{c}{Support}
& \multicolumn{1}{c}{\(g(X)\)}
& \multicolumn{1}{c}{\(\Gamma(L,g)\)}
& \multicolumn{1}{c}{\(\widetilde{\Gamma}(L,g)\)}\\
\midrule
\endhead

\multirow{5}{*}{1}
& \(L_1\cup L_2\)
& \(X^4+X\)
& \([24,4,12]\)
& \([24,4,12]\)\\

& \(L_1\cup L_2\cup L_3\)
& \(X^4+X\)
& \([36,12,9]\)
& \([36,11,10]\)\\

& \(L_1\cup L_2\cup L_3\cup L_4\)
& \(X^4+X\)
& \([48,24,9]\)
& \([48,23,10]\)\\

& \(L_1\cup\cdots\cup L_5\)
& \(X^4+X\)
& \([60,36,9]\)
& \([60,35,10]\)\\

& \(L_1\cup\cdots\cup L_5\)
& \(X^{12}+X^9+X^6+X^3+w^9\)
& \([60,4,27]\)
& \([60,3,34]\)\\
\midrule

\multirow{5}{*}{2}
& \(L_1\cup L_2\)
& \(X^4+X+w^{30}\)
& \([24,4,12]\)
& \([24,4,12]\)\\

& \(L_1\cup L_2\cup L_3\)
& \(X^4+X+w^{30}\)
& \([36,12,9]\)
& \([36,11,10]\)\\

& \(L_1\cup L_2\cup L_3\cup L_4\)
& \(X^4+X+w^{30}\)
& \([48,24,9]\)
& \([48,23,10]\)\\

& \(L_1\cup\cdots\cup L_5\)
& \(X^4+X+w^{30}\)
& \([60,36,9]\)
& \([60,35,10]\)\\

& \(L_1\cup\cdots\cup L_5\)
& \makecell[c]{\(X^{12}+X^9+w^{30}X^8+X^6+w^{60}X^4\)\\
\(+X^3+w^{30}X^2+w^{60}X+1\)}
& \([60,4,27]\)
& \([60,3,34]\)\\
\midrule

\multirow{5}{*}{3}
& \(L_1\cup L_2\)
& \(X^4+w^6X+w^{38}\)
& \([24,4,12]\)
& \([24,4,12]\)\\

& \(L_1\cup L_2\cup L_3\)
& \(X^4+w^6X+w^{38}\)
& \([36,12,9]\)
& \([36,11,10]\)\\

& \(L_1\cup L_2\cup L_3\cup L_4\)
& \(X^4+w^6X+w^{38}\)
& \([48,24,9]\)
& \([48,23,10]\)\\

& \(L_1\cup\cdots\cup L_5\)
& \(X^4+w^6X+w^{38}\)
& \([60,36,9]\)
& \([60,35,10]\)\\

& \(L_1\cup\cdots\cup L_5\)
& \makecell[c]{\(X^{12}+w^6X^9+w^{38}X^8+w^{12}X^6+w^{13}X^4\)\\
\(+w^{18}X^3+w^{50}X^2+w^{19}X+w^{15}\)}
& \([60,4,27]\)
& \([60,3,34]\)\\

\bottomrule
\end{longtable}

\endgroup

\end{example}
We next consider the case where \(P\) induces a non-affine transformation on \(\overline{\mathbb F}_{2^m}\).

\textbf{\textit{Case 2 (Non-affine case).}}
Let
\[
P=
\begin{pmatrix}
a & b\\
1 & d
\end{pmatrix}
\in PGL(2,2^m).
\]
The following theorem describes the orbit decomposition of the conjugate subgroup \(G=PH_0P^{-1}\) in this case.

\begin{proposition}\label{4}
	The action of \(G=PH_0P^{-1}\) on $\overline{\mathbb F}_{2^m}$ has the following orbit decomposition:
	\[
	\overline{\mathbb F}_{2^m}
	=
	O_0\cup O_1\cup \bigcup_{i=2}^{\frac{2^m+8}{12}}O_i,
	\]
	where
	$
	O_0=\{a\},
	O_1=
	\left\{
	\frac{b}{d},
	\frac{a+b}{1+d},
	\frac{a\omega+b}{\omega+d},
	\frac{a\omega^2+b}{\omega^2+d}
	\right\},
	$
	and
	$
	|O_i|=12,~2\le i\le \frac{2^m+8}{12}.
	$
 Moreover,
	$d\in \mathbb F_4$
	if and only if
	$\infty\in O_1.
	$
\end{proposition}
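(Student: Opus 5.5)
The plan is to transport the orbit decomposition of \(H_0\) computed in the proof of Proposition~\ref{orbit1} through the fractional linear map \(P\). Since \(G=PH_0P^{-1}\), the \(G\)-orbit of \(P(\zeta)\) is \(P(H_0(\zeta))\) for every \(\zeta\in\overline{\mathbb F}_{2^m}\). Hence the \(G\)-orbits are exactly the images under \(P\) of the \(H_0\)-orbits. As in Proposition~\ref{orbit1}, the \(H_0\)-orbits are \(\{\infty\}\), \(\{0,1,\omega,\omega^2\}\), and orbits of size \(12\) covering \(\mathbb F_{2^m}\setminus\mathbb F_4\). There are \((2^m-4)/12\) of the latter, and together with the two short orbits this gives \((2^m+8)/12\) orbits in total. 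Because \(P\) is a bijection of \(\overline{\mathbb F}_{2^m}\), the sizes are preserved, so it suffices to compute the images of the two short orbits.

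The images are computed directly from \(P(\zeta)=(a\zeta+b)/(\zeta+d)\). First, \(P(\infty)=a/1=a\), which gives \(O_0=\{a\}\). Next, \(P(0)=b/d\), \(P(1)=(a+b)/(1+d)\), \(P(\omega)=(a\omega+b)/(\omega+d)\) and \(P(\omega^2)=(a\omega^2+b)/(\omega^2+d)\), which gives \(O_1\). These quotients are read with the stated convention that \(P(-d)=P(d)=\infty\) when a denominator vanishes; in characteristic \(2\) we have \(-d=d\). Nonsingularity \(ad-b\ne0\) ensures that a numerator and its denominator never vanish simultaneously, so each listed expression is a well-defined point of \(\overline{\mathbb F}_{2^m}\).

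For the last assertion, note that \(\infty\in O_1\) if and only if \(P^{-1}(\infty)\in\{0,1,\omega,\omega^2\}=\mathbb F_4\). Since \(P(\zeta)=\infty\) exactly when \(\zeta+d=0\), that is, when \(\zeta=d\), we have \(P^{-1}(\infty)=d\). Thus \(\infty\in O_1\) if and only if \(d\in\mathbb F_4\).

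The only point needing care is the fixed-point-freeness of \(H_0\) on \(\mathbb F_{2^m}\setminus\mathbb F_4\), which gives the length-\(12\) orbits; I will take this from the proof of Proposition~\ref{orbit1}. If I wanted to check it directly, the argument would go as follows. A nonidentity map \(z\mapsto\alpha z+\beta\) with \(\alpha\in\mathbb F_4^*\) and \(\beta\in\mathbb F_4\) has a finite fixed point only if \(\alpha\ne1\), in which case the fixed point is \(\beta/(1+\alpha)\in\mathbb F_4\). If \(\alpha=1\) and \(\beta\ne0\), the map is a translation with no finite fixed point. So every point outside \(\mathbb F_4\cup\{\infty\}\) has trivial stabilizer, and its orbit has length \(12\) by the orbit-stabilizer theorem.
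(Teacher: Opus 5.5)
Your proposal is correct and follows the route the paper intends (its proof only says it is ``similar to Proposition~\ref{orbit1}''): compute the \(H_0\)-orbits \(\{\infty\}\), \(\mathbb F_4\), and the free orbits of length \(12\), then transport them by \(P\), reading off \(P(\infty)=a\), \(P(\mathbb F_4)=O_1\), and \(P^{-1}(\infty)=d\). There is one harmless arithmetic slip: \((2^m-4)/12\) long orbits plus the two short ones give \((2^m+20)/12\) orbits in total, not \((2^m+8)/12\); the number \((2^m+8)/12\) is the index of the last long orbit, which is what the statement actually asserts.
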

\begin{proof}
The proof is similar to Proposition \ref{orbit1}. So we omit it.
\end{proof}

In Case 2, a long orbit may contain $\infty$. If the selected
projective support does not contain $\infty$, then the construction yields
expurgated Goppa codes with \(A_4\) automorphism groups.
If the selected projective support contains $\infty$, then the construction
yields extended Goppa codes with \(A_4\) automorphism groups.

\begin{theorem}\label{2}
Let $\mathcal{L}=\bigcup_{i=2}^{s}O_i$, where $O_2,\ldots,O_s$,
$2\leq s\leq (2^m+8)/12$, are the orbits defined in
Proposition~\ref{4}. Let $\eta_1,\ldots,\eta_t$ be
elements in an extension field of $\mathbb{F}_{2^m}$, and let
$e_1,\ldots,e_t$ be positive integers. Assume that $G(\eta_j)$
does not contain $\infty$ and that
$G(\eta_j)\cap\mathcal{L}=\varnothing$ for $1\leq j\leq t$, and that
\[
g(X):=\prod_{j=1}^{t}
\left(\prod_{\alpha\in G(\eta_j)}(X-\alpha)\right)^{e_j}
\]
belongs to $\mathbb{F}_{2^m}[X]$. Then the following statements hold.

1) If $\infty\notin\mathcal{L}$, set $L=\mathcal{L}$. Then the expurgated Goppa code $\widetilde{\Gamma}(L,g)$ has an
automorphism group isomorphic to $A_4$.

2) If $\infty\in\mathcal{L}$, set
$L=\mathcal{L}\setminus\{\infty\}$. 
Then the extended Goppa code $\overline{\Gamma}(L,g)$ has an
automorphism group isomorphic to $A_4$.
\end{theorem}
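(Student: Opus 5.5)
The plan is to run the argument of Theorem~\ref{5} again, now tracking how a fractional linear transformation acts on the columns of the parity-check matrices $\widetilde H$ and $\overline H$ from Proposition~\ref{47}. These are exactly the matrices in which the point $\infty$ can be treated uniformly. Write $G=\langle A_P,B_P\rangle$ with $A_P=PAP^{-1}$ and $B_P=PBP^{-1}$. Since $\mathcal L$ is a union of $G$-orbits, each $M\in G$ permutes $\mathcal L$. As in Theorem~\ref{5}, the induced permutations $\psi_A,\psi_B$ on the $12(s-1)$ coordinates generate a group isomorphic to $A_4$. The action is free on long orbits, so the permutation representation is faithful. It therefore suffices to show that every $M\in G$ preserves the relevant code.

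\textbf{Homogeneous setup.} Put $r=\deg g$. For $\zeta\in\mathbb F_{2^m}$, identify the column $(\zeta^ig(\zeta)^{-1})_{0\le i\le r}$ with the linear functional $F\mapsto F(\zeta)/g(\zeta)$ on the space $V_r$ of polynomials of degree $\le r$. The functional attached to $\infty$ is $F\mapsto F_r/g_r$, the ratio of leading coefficients. This matches the last column of $\overline H$. For the code $\widetilde\Gamma$ or $\overline\Gamma$, the codewords $\mathbf c$ are exactly those with $\sum_{\zeta}c_\zeta F(\zeta)/g(\zeta)=0$ for all $F\in V_r$.

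Now take $M=\begin{pmatrix}a'&b'\\c'&d'\end{pmatrix}\in G$ and the substitution $F\mapsto F^M(X):=(c'X+d')^rF(M(X))$, which is a bijection of $V_r$. The roots of $g$ form a $G$-stable multiset (a union of $G$-orbits with multiplicities $e_j$) and avoid $\infty$. Hence $g^M$ has the same roots with the same multiplicities and degree exactly $r$, so $g^M=\gamma_Mg$ for some nonzero constant $\gamma_M$. Evaluating at a point $\zeta\ne -d'/c'$ gives
\[
\frac{F(M(\zeta))}{g(M(\zeta))}=\gamma_M^{-1}\frac{F^M(\zeta)}{g(\zeta)}.
\]
The same identity, with both sides read as leading-coefficient ratios, covers the cases $\zeta=-d'/c'$ (so $M(\zeta)=\infty$) and $\zeta=\infty$. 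This verification at the exceptional points is the step I expect to be the main obstacle. It requires checking that the homogenization at $\infty$ matches the degree-$r$ coefficient convention, and this works precisely because $\deg g^M=r$, which in turn uses $\infty\notin G(\eta_j)$.

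\textbf{Conclusion.} Given a codeword $\mathbf c$, define $\mathbf c^M$ by $(\mathbf c^M)_\zeta=c_{M(\zeta)}$. Then for all $F\in V_r$,
\[
\sum_\zeta c_{M(\zeta)}\frac{F^M(\zeta)}{g(\zeta)}=\gamma_M\sum_\zeta c_{M(\zeta)}\frac{F(M(\zeta))}{g(M(\zeta))}=\gamma_M\sum_{\zeta'} c_{\zeta'}\frac{F(\zeta')}{g(\zeta')}=0.
\]
Because $F\mapsto F^M$ is onto $V_r$, $\mathbf c^M$ is again a codeword.

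In case 1), $\infty\notin\mathcal L$, so the conditions are exactly $\widetilde H\mathbf c^T=0$. In case 2), the coordinate of $\infty$ carries the column of $\overline H$. Each row $F=X^i$ with $i<r$ gives zero at $\infty$, and $F=X^r$ gives $g(\infty)^{-1}$. One has to confirm that the $\overline H$ description agrees with the definition of $\overline\Gamma$. The last row, minus $\sum_i g_i\cdot(\text{row } i)$, gives $\sum_{j\le n}c_j+c_{n+1}=0$, using $g(\infty)=g_r$; so the two descriptions agree. Hence $G\cong A_4$ acts by permutation automorphisms on $\widetilde\Gamma(L,g)$ in case 1) and on $\overline\Gamma(L,g)$ in case 2).
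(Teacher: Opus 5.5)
Your proposal is correct and follows essentially the same route as the paper: both use the transformation law $(c'X+d')^r g(M(X))=\gamma_M g(X)$ on the degree-$\le r$ parity-check matrices $\widetilde H$ and $\overline H$, and your surjectivity of $F\mapsto F^M$ on $V_r$ does the work of the paper's row-space inclusion plus its $M^{-1}$ argument. Your explicit treatment of the exceptional points $M^{-1}(\infty)$ and $\infty$ fills in what the paper only calls ``the same calculation in projective coordinates.'' One harmless slip: the combination that yields the all-ones row is $\sum_{i=0}^{r} g_i\cdot(\text{row } i)$, not ``the last row minus'' that sum.
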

\begin{proof}
The construction of the induced permutation group is the same as in
the proof of Theorem~\ref{5}. Let
\[
G=\langle A_P,B_P\rangle,\quad
A_P=PAP^{-1},\quad B_P=PBP^{-1}.
\]
By ordering the elements of each selected long orbit as in the proof
of Theorem~\ref{5}, the actions of $A_P$ and $B_P$ on the selected
projective coordinate set induce permutations $\psi_A$ and $\psi_B$
satisfying
$\langle\psi_A,\psi_B\rangle\cong A_4.$

It remains to verify that these coordinate permutations preserve the
corresponding codes. There are two differences from the affine case.
First, the elements $A_P$ and $B_P$ need not be affine
transformations. Let $M$ be either $A_P$ or $B_P$, and choose a
representative
\[
M=
\begin{pmatrix}
a_M&b_M\\
c_M&d_M
\end{pmatrix}.
\]
Since the root set of $g(X)$, with multiplicities, is stable under
$G$, there exists $\gamma_M\in\mathbb{F}_{2^m}^*$ such that
\[
(c_MX+d_M)^r
g\left(\frac{a_MX+b_M}{c_MX+d_M}\right)
=\gamma_M g(X),
\quad r=\deg g.
\]

Second, since $c_M$ may be nonzero, the transformed parity-check rows
may involve polynomials of degree $r$. We therefore use the
parity-check matrix of the expurgated Goppa code:
\[
\widetilde{H}=
\begin{pmatrix}
g(\alpha_1)^{-1}
    & g(\alpha_2)^{-1}
    & \cdots
    & g(\alpha_n)^{-1}\\
\alpha_1g(\alpha_1)^{-1}
    & \alpha_2g(\alpha_2)^{-1}
    & \cdots
    & \alpha_ng(\alpha_n)^{-1}\\
\vdots&\vdots&\ddots&\vdots\\
\alpha_1^r g(\alpha_1)^{-1}
    & \alpha_2^r g(\alpha_2)^{-1}
    & \cdots
    & \alpha_n^r g(\alpha_n)^{-1}
\end{pmatrix}.
\]
After applying the coordinate permutation induced by $M$, the column
indexed by $\alpha_i$ becomes
\[
\frac{1}{g(M(\alpha_i))}
\begin{pmatrix}
1\\
M(\alpha_i)\\
\vdots\\
M(\alpha_i)^r
\end{pmatrix}
=
\frac{\gamma_M^{-1}}{g(\alpha_i)}
\begin{pmatrix}
(c_M\alpha_i+d_M)^r\\
(a_M\alpha_i+b_M)(c_M\alpha_i+d_M)^{r-1}\\
\vdots\\
(a_M\alpha_i+b_M)^r
\end{pmatrix}.
\]
Every entry in the vector on the right is a polynomial in $\alpha_i$
of degree at most $r$. Hence every row of the transformed
parity-check matrix is a linear combination of the rows of
$\widetilde{H}$. Applying the same argument to $M^{-1}$ gives the
reverse inclusion, so the two matrices have the same row space.
Therefore, $\psi_A$ and $\psi_B$ preserve
$\widetilde{\Gamma}(L,g)$.

If $\infty\in\mathcal{L}$, we use the parity-check matrix
$\overline{H}$ of the extended Goppa code and regard its last column
as the column indexed by $\infty$. The same calculation in
projective coordinates, including the column corresponding to
$\infty$, shows that the transformed matrix has the same row space
as $\overline{H}$. Hence $\psi_A$ and $\psi_B$ also preserve
$\overline{\Gamma}(L,g)$.

Thus, in each case, the automorphisms induced by $\psi_A$ and
$\psi_B$ generate an automorphism group isomorphic to $A_4$.
\end{proof}

\begin{example}
We give examples in Tables~\ref{biao3} and~\ref{biao4}, using the same notation as in Case 1.
\begingroup
\scriptsize
\setlength{\tabcolsep}{2.2pt}
\renewcommand{\arraystretch}{1.12}

\begin{longtable}{C{0.8cm} C{1.9cm} C{2.7cm} P{9.0cm}}
\caption{Orbit decompositions in the non-affine case over \(\mathbb F_{2^6}\)}
\label{biao3}\\
\toprule
\multicolumn{1}{c}{No.}
& \multicolumn{1}{c}{Matrix \(P\)}
& \multicolumn{1}{c}{Short orbits}
& \multicolumn{1}{c}{Long orbits of length \(12\)}\\
\midrule
\endfirsthead

\multirow{5}{*}{1}
& \multirow{5}{*}{\(\begin{pmatrix}w^2&w^5\\1&w\end{pmatrix}\)}
& \multirow{5}{=}{\makecell[l]{fixed point: \(w^2\)\\short orbit:\\ \((w,w^4,w^{22},w^{60})\)}}
& \(L_1=(0,w^{59},w^{35},w^{12},w^{61},w^{25},w^{58},w^{51},w^{49},w^{38},w^{40},w^{21})\)\\
&&& \(L_2=(1,w^{41},w^{26},w^{45},w^{52},w^{14},w^7,w^{53},w^{13},w^{54},w^{15},w^3)\)\\
&&& \(L_3=(w^5,w^9,w^{47},w^{10},\infty,w^{34},w^{24},w^{55},w^{17},w^{46},w^{37},w^{36})\)\\
&&& \(L_4=(w^6,w^{43},w^{28},w^{16},w^{62},w^{56},w^8,w^{57},w^{44},w^{31},w^{48},w^{29})\)\\
&&& \(L_5=(w^{11},w^{20},w^{18},w^{19},w^{30},w^{27},w^{23},w^{33},w^{32},w^{50},w^{42},w^{39})\)\\
\midrule

\multirow{5}{*}{2}
& \multirow{5}{*}{\(\begin{pmatrix}w^2&w^5\\1&1\end{pmatrix}\)}
& \multirow{5}{=}{\makecell[l]{fixed point: \(w^2\)\\short orbit:\\ \((w,w^5,w^{17},\infty)\)}}
& \(L_1=(0,w^3,w^6,w^{14},w^{18},w^{23},w^{44},w^{49},w^{50},w^{53},w^{61},w^{62})\)\\
&&& \(L_2=(1,w^8,w^{11},w^{13},w^{21},w^{25},w^{28},w^{30},w^{31},w^{32},w^{51},w^{52})\)\\
&&& \(L_3=(w^4,w^9,w^{10},w^{22},w^{24},w^{34},w^{36},w^{37},w^{46},w^{47},w^{55},w^{60})\)\\
&&& \(L_4=(w^7,w^{12},w^{19},w^{20},w^{26},w^{29},w^{40},w^{42},w^{54},w^{56},w^{57},w^{59})\)\\
&&& \(L_5=(w^{15},w^{16},w^{27},w^{33},w^{35},w^{38},w^{39},w^{41},w^{43},w^{45},w^{48},w^{58})\)\\
\bottomrule
\end{longtable}
\endgroup

\begingroup
\scriptsize
\setlength{\tabcolsep}{2.2pt}
\renewcommand{\arraystretch}{1.16}

\begin{longtable}{C{1cm} P{2.2cm} C{6.2cm} C{1.55cm} C{1.35cm} P{3.3cm}}
\caption{Parameters of Goppa codes and related codes in the non-affine
case over $\mathbb{F}_{2^6}$}
\label{biao4}\\

\toprule
\multicolumn{1}{c}{No.}
& \multicolumn{1}{c}{Support}
& \multicolumn{1}{c}{\(g(X)\)}
& \multicolumn{1}{c}{\(\Gamma(L,g)\)}
& \multicolumn{1}{c}{Related code}
& \multicolumn{1}{c}{Remark}\\
\midrule
\endfirsthead

\caption[]{Parameters of Goppa codes and related codes in the non-affine
case over $\mathbb{F}_{2^6}$ (continued)}\\
\toprule
\multicolumn{1}{c}{No.}
& \multicolumn{1}{c}{Support}
& \multicolumn{1}{c}{\(g(X)\)}
& \multicolumn{1}{c}{\(\Gamma(L,g)\)}
& \multicolumn{1}{c}{Related code}
& \multicolumn{1}{c}{Remark}\\
\midrule
\endhead
\multirow{4}{*}{1}
& \(L_1\cup L_2\)
& \(X^4+w^{38}X^3+w^{40}X^2+w^{42}X+w^{24}\)
& \([24,4,12]\)
& \([24,4,12]\)
& \\

& \(L_1\cup L_2\cup L_4\)
& \(X^4+w^{38}X^3+w^{40}X^2+w^{42}X+w^{24}\)
& \([24,4,12]\)
& \([24,4,12]\)
& \\

& \(L_1\cup L_2\cup L_3\cup L_4\)
& \(X^4+w^{38}X^3+w^{40}X^2+w^{42}X+w^{24}\)
& \([47,23,9]\)
& \([48,23,10]\)
& extended Goppa code with an $A_4$ automorphism group\\

& \(L_1\cup\cdots\cup L_5\)
& \(X^4+w^{38}X^3+w^{40}X^2+w^{42}X+w^{24}\)
& \([59,35,9]\)
& \([60,35,10]\)
& extended Goppa code with an $A_4$ automorphism group\\
\midrule

\multirow{2}{*}{2}
& \multirow{3}{=}{\(L_1\cup\cdots\cup L_5\)}
& \makecell[l]{\(X^{12}+w^{27}X^9+w^{50}X^8+w^9X^6+w^{26}X^4\)\\
\(\quad +w^{54}X^3+w^{34}X^2+w^{14}X+w^{51}\)}
& \multirow{3}{*}{\([60,3,34]\)}
& \multirow{3}{*}{\([60,3,34]\)}
& \multirow{2}{=}{}
\\
\cmidrule(lr){3-3}
&
& \makecell[l]{\(X^{12}+w^{36}X^9+w^{13}X^8+w^{18}X^6+w^{42}X^4\)\\\(\quad+X^3
 +w^{43}X^2+w^{23}X+w^{15}\)}
&
&
& \\
\bottomrule
\\[-0.5ex]
\multicolumn{6}{p{0.95\textwidth}}{\emph{Note.} Each code in the column ``Related code'' has an $A_4$ automorphism
subgroup. The column ``Related code'' gives the parameters of
$\widetilde{\Gamma}(L,g)$, unless the Remark column indicates an extended
Goppa code, in which case it gives the parameters of
$\overline{\Gamma}(L,g)$. For a row in which the related code is
$\widetilde{\Gamma}(L,g)$, if $\Gamma(L,g)$ and
$\widetilde{\Gamma}(L,g)$ have the same dimension, then they coincide;
in this case, both codes have the stated $A_4$ automorphism subgroup.}
\end{longtable}

\endgroup    
\end{example}
 \subsection{Parameters of Goppa Codes with \(A_4\) Automorphism Groups}
We now study the parameters of a representative family of binary Goppa codes with an \(A_4\) automorphism subgroup. The Goppa polynomials considered here are powers of \(g(X)=X^4+X\). 


Before studying the parameters of our \(A_4\)-invariant Goppa codes, we recall a
dimension result of M. van der Vlugt for binary Goppa codes.
\begin{proposition}\label{prop:vdv-dimension}\cite{VanDerVlugt1990}
Let \(g(X)=g_1(X)^2g_2(X)\in \mathbb F_{2^m}[X]\), where \(g_2(X)\) is square-free.
Let \(r_i=\deg g_i(X)\) for \(i=1,2\), and let \(t\) be the number of distinct zeros of \(g(X)\) over the algebraic closure of \(\mathbb F_{2^m}\).
Let  $ Z=\{\alpha\in \mathbb F_{2^m}: g(\alpha)=0\}$,
    $L=\{\alpha\in \mathbb F_{2^m}: g(\alpha)\ne 0\}.$
Then
\[
    \dim_{\mathbb F_2}\Gamma(L,g)\ge 2^m-|Z|-m(r_1+r_2).
\]
Moreover, if
\[
    -2+\deg g+t<\frac{2^m+1-|Z|}{2^{m/2}},
\]
then
\[
    \dim_{\mathbb F_2}\Gamma(L,g)=2^m-|Z|-m(r_1+r_2).
\]
\end{proposition}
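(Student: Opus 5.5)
The statement is van der Vlugt's result cited from \cite{VanDerVlugt1990}. If one wants to reprove it rather than just invoke it, I would use Delsarte's theorem together with an Artin--Schreier/Weil argument, as follows.

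\emph{Lower bound.} Use the standard fact that for binary Goppa codes \(\Gamma(L,g_1^2g_2)=\Gamma(L,g_1^2g_2^2)\). Then use Delsarte's theorem to write the dual as a trace code,
\[
\Gamma(L,g)^\perp=\{(\operatorname{Tr}(\varphi(\alpha)))_{\alpha\in L}:\varphi\in W\},
\]
where \(W\) is the \(\mathbb F_{2^m}\)-space of rational functions \(a(X)/(g_1g_2)^2(X)\) with \(\deg a<\deg (g_1g_2)^2\), i.e.\ rational functions vanishing at \(\infty\) with poles of bounded order only at the zeros of \(g\).

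\emph{Reduction to a space of dimension \(r_1+r_2\).} Since \(\operatorname{Tr}(\psi^2)=\operatorname{Tr}(\psi)\), and Frobenius is bijective on \(\mathbb F_{2^m}\), every even-order polar part at a zero \(\beta\) of \(g\) can be replaced by a lower-order one without changing the trace vector. This leaves a complement \(W'\) consisting of functions whose principal parts at each zero involve only odd orders. The pole bound coming from the exponents in \(g_1^2g_2\) (orders \(1,3,\dots\)) shows \(\dim_{\mathbb F_{2^m}}W'=r_1+r_2\). The map \(\operatorname{Tr}:W'\to\mathbb F_2^{n}\) is \(\mathbb F_2\)-linear from a space of \(\mathbb F_2\)-dimension \(m(r_1+r_2)\), so \(\dim\Gamma^\perp\le m(r_1+r_2)\). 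With \(n=2^m-|Z|\) this gives the lower bound.

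\emph{Equality, and the main obstacle.} Equality is equivalent to injectivity of this trace map. Suppose \(0\ne\varphi\in W'\) with \(\operatorname{Tr}(\varphi(\alpha))=0\) for all \(\alpha\in L\).
\begin{enumerate}
\item Since \(\varphi\) has only odd-order poles and vanishes at \(\infty\), it is not of the form \(\psi^2+\psi+c\). Hence the Artin--Schreier curve \(y^2+y=\varphi(x)\) is absolutely irreducible.
\item Each \(\alpha\in L\) gives two rational points, so the curve has at least \(2(2^m-|Z|)\) rational points.
\item The Riemann--Hurwitz formula for Artin--Schreier covers bounds the genus: summing \((\text{pole order}+1)/2\) over the poles gives \(2\gamma\le \deg g+t-2\).
\item The Weil bound then gives \(2(2^m-|Z|)\le 2^m+1+2\gamma\,2^{m/2}\). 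Rearranging, with care about the points above \(Z\) and \(\infty\), this contradicts the hypothesis \(-2+\deg g+t<(2^m+1-|Z|)/2^{m/2}\).
\end{enumerate}
The delicate part is this last step: getting the exact genus count and the point count at the ramified places so that the inequality matches the stated hypothesis precisely.
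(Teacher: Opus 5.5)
The paper does not prove this proposition. It is quoted from \cite{VanDerVlugt1990} and used as a black box, so there is no internal argument to compare yours with. Your route is the standard way such true-dimension results are proved: apply Delsarte's theorem, normalize principal parts to odd orders modulo $\{\psi^2+\psi\}$, and exclude a nonzero kernel element of the trace map by the Weil bound on $y^2+y=\varphi(x)$. The lower bound and your steps 1--3 are sound. For the genus estimate, write $e_\beta$ for the multiplicity of a root $\beta$ of $g$ and $\nu_\beta$ for the pole order of $\varphi$ at $\beta$. Then $\nu_\beta\le e_\beta$ when $e_\beta$ is odd and $\nu_\beta\le e_\beta-1$ when $e_\beta$ is even, so $2\gamma=-2+\sum_{\beta}(\nu_\beta+1)\le \deg g+t-2$. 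The detour through $(g_1g_2)^2$ is harmless; applying Delsarte directly to $g_1^2g_2$ gives the same $W'$.

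The argument as written does not reach the stated threshold in step 4. The inequality $2(2^m-|Z|)\le 2^m+1+2\gamma\,2^{m/2}$ yields a contradiction only when $\deg g+t-2<(2^m-1-2|Z|)/2^{m/2}$. That is strictly stronger than the hypothesis of the proposition. You flagged this step yourself; closing it needs three observations, and the third is absent from your sketch.

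\begin{enumerate}
\item Since $\varphi(\infty)=0$, the polynomial $T^2+T$ splits, so $x=\infty$ contributes two rational places.
\item Each $\beta\in Z$ at which $\varphi$ actually has a pole has odd pole order. It is therefore totally ramified and contributes one rational place.
\item An element of $W'$ may have zero principal part at some $\beta\in Z$. For such $\beta$ no point is guaranteed, but the term $\nu_\beta+1$ drops out of the genus sum. This saves at least $e_\beta+1\ge 2$.
\end{enumerate}

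Let $Z_{np}\subseteq Z$ be the set of rational zeros where $\varphi$ has no pole, and let $N$ be the number of rational places. Then
\[
N\ge 2(2^m-|Z|)+2+|Z|-|Z_{np}|, \qquad 2\gamma\le \deg g+t-2-2|Z_{np}|.
\]
Combining these with $N\le 2^m+1+2\gamma\,2^{m/2}$ gives
\[
2^m+1-|Z|\le(\deg g+t-2)\,2^{m/2}-|Z_{np}|\bigl(2^{m/2+1}-1\bigr)\le(\deg g+t-2)\,2^{m/2},
\]
which contradicts the hypothesis $-2+\deg g+t<(2^m+1-|Z|)/2^{m/2}$. With this accounting your proof is complete.
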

We first determine the exact parameters of a representative Goppa code arising from our \(A_4\)-invariant construction.
\begin{theorem}\label{main}
Let \(m\ge 6\) be even, let \(g(X)=X^4+X\), and let \(L=\{\alpha\in\mathbb F_{2^m}:g(\alpha)\ne0\}=\mathbb F_{2^m}\setminus\mathbb F_4\). Then the binary Goppa code \(\Gamma(L,g)\) has an automorphism subgroup isomorphic to \(A_4\) and has parameters \([2^m-4,\ 2^m-4m-4,\ 9]\).
\end{theorem}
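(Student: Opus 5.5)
The proof splits into three independent parts: the automorphism group, the dimension, and the minimum distance. Each follows from an earlier result once we check its hypotheses for \(g(X)=X^4+X\).

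\emph{Automorphism group.} We use the affine case with \(P=I\), so \(G=H_0\). By Proposition~\ref{orbit1}, the orbits of \(H_0\) on \(\overline{\mathbb F}_{2^m}\) are \(\{\infty\}\), \(\mathbb F_4=\{0,1,\omega,\omega^2\}\), and \((2^m-4)/12\) long orbits of length \(12\). Their union is \(\mathbb F_{2^m}\setminus\mathbb F_4=L\). The polynomial
\[
X^4+X=\prod_{\beta\in\mathbb F_4}(X-\beta)
\]
is exactly \(\prod_{\alpha\in G(0)}(X-\alpha)\), where \(G(0)=\mathbb F_4\) is disjoint from \(L\). So \(g\) has the shape required in Theorem~\ref{5} with \(t=1\), \(\eta_1=0\), \(e_1=1\), and \(s=(2^m+8)/12\). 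Theorem~\ref{5} then shows that \(\Gamma(L,g)\) has an automorphism subgroup isomorphic to \(A_4\). We can also check this directly: the maps \(z\mapsto \alpha z+\beta\) with \(\alpha\in\mathbb F_4^*\), \(\beta\in\mathbb F_4\) satisfy \(g(\alpha z+\beta)=\alpha g(z)\), because \(X^4+X\) is \(\mathbb F_4\)-linear and \(\alpha^4=\alpha\). Hence each such map multiplies the rows of \(H\) by a column-independent scalar.

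\emph{Length and dimension.} The length is \(n=|L|=2^m-4\), since \(Z=\mathbb F_4\subseteq\mathbb F_{2^m}\) when \(m\) is even. For the dimension we apply Proposition~\ref{prop:vdv-dimension} with \(g_1=1\) and \(g_2=g\), which is square-free. Then \(r_1=0\), \(r_2=4\), \(t=4\), and \(|Z|=4\). The lower bound reads \(\dim\ge 2^m-4-4m\). The equality condition becomes
\[
-2+4+4<\frac{2^m-3}{2^{m/2}},\qquad\text{that is,}\qquad 6\cdot 2^{m/2}<2^m-3.
\]
With \(\mu=2^{m/2}\ge 8\) for \(m\ge6\), this says \(\mu^2-6\mu-3>0\), which holds since \(64-48-3=13>0\) and the left side increases in \(\mu\). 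Hence \(k=2^m-4m-4\).

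\emph{Minimum distance.} This is Theorem~\ref{thm:x4cx-distance} with \(c=1\in\mathbb F_{2^m}^*\): the support \(\{\alpha:g(\alpha)\ne0\}\) is exactly our \(L\), so \(d=9\). Here \(c=1\in T_0\) is the cubic case, which is consistent with the length \(2^m-4\) in the remark following that theorem. This step depends on the exponential-sum count \(N_0-(2^m-3)>0\) carried out in that theorem's proof, which is the only delicate ingredient. Since it is already established, it is enough to confirm that \(m\ge 6\) even matches its hypotheses, and it does. Combining the three parts gives the parameters \([2^m-4,\,2^m-4m-4,\,9]\).
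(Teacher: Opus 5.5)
Your proposal is correct and follows the paper's proof step for step. You use Theorem~\ref{5} with $P=I$, $t=1$, $\eta_1=0$, $e_1=1$ for the $A_4$ subgroup, Proposition~\ref{prop:vdv-dimension} with $g_1=1$, $g_2=g$ and the equality condition $6<(2^m-3)/2^{m/2}$ for $k=2^m-4m-4$, and Theorem~\ref{thm:x4cx-distance} with $c=1$ for $d=9$, exactly as the paper does.

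One wording point in your optional direct check. Since $g(\alpha z+\beta)=\alpha g(z)$, the substitution sends the row $z^j/g(z)$ to $\alpha^{-1}(\alpha z+\beta)^j/g(z)$. That is an invertible triangular recombination of the rows of $H$, not a column-independent rescaling. It still preserves the row space, so your conclusion stands.
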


\begin{proof}
 Since \(L=\mathbb F_{2^m}\setminus\mathbb F_4\), the length is \(n=|L|=2^m-4\). 
 
 We next explain why this code has an \(A_4\) automorphism subgroup. In the notation of Proposition~\ref{orbit1}, take \(a=1\) and \(b=0\). Then \(G=H_0\cong A_4\), and the orbit decomposition is
\[
\overline{\mathbb F}_{2^m}
=
O_0\cup O_1\cup \bigcup_{i=2}^{(2^m+8)/12}O_i,
\]
where \(O_0=\{\infty\}\), \(O_1=\mathbb F_4\), and the remaining orbits \(O_i\) have length \(12\). Therefore
\[
L=\mathbb F_{2^m}\setminus\mathbb F_4
=
\bigcup_{i=2}^{(2^m+8)/12}O_i
\]
is a union of long \(G\)-orbits.

Now apply Theorem~\ref{5} with one zero orbit, namely \(t=1\), \(\eta_1=0\), and \(e_1=1\). The \(G\)-orbit of \(\eta_1=0\) is precisely \(\mathbb F_4\). Hence
\[
\prod_{\alpha\in G(\eta_1)}(X-\alpha)
=
\prod_{\alpha\in\mathbb F_4}(X-\alpha)
=
X^4+X
=
g(X).
\]
Since this orbit is disjoint from \(L\), all the hypotheses of Theorem~\ref{5} are satisfied. Thus, the Goppa code \(\Gamma(L,g)\) has an automorphism subgroup isomorphic to \(A_4\).


It remains to determine the dimension and the minimum distance. Since \(g(X)\) is square-free, we apply Proposition~\ref{prop:vdv-dimension} with \(g_1(X)=1\), \(g_2(X)=g(X)\), \(r_1=0\), \(r_2=4\), \(Z=\mathbb F_4\), and \(t=4\). The equality condition becomes \(6<(2^m-3)/2^{m/2}\), which holds for every even \(m\ge6\). Thus \(\dim_{\mathbb F_2}\Gamma(L,g)=2^m-4-4m\). Moreover, this Goppa code belongs to the infinite family obtained from the fiber-type construction in Section~\ref{sec3}; hence Theorem~\ref{thm:x4cx-distance} directly gives \(d=9\).
\end{proof}
The next examples illustrate the parameter formula in Theorem~\ref{main} for small values of \(m\).
\begin{example}
The following examples were computed and verified by Magma, and they agree with Theorem~\ref{main}. For \(g(X)=X^4+X\) and \(L=\mathbb F_{2^m}\setminus\mathbb F_4\), we obtain
\[
m=6:\ [60,36,9],\qquad
m=8:\ [252,220,9],\qquad
m=10:\ [1020,980,9].
\]
\end{example}
Next, we consider powers of the same Goppa polynomial. The following lemma shows that, under a simple derivative condition, two consecutive powers define the same binary Goppa code.
\begin{lemma}\label{lem:g-r-equality}
Let \(L=\{\alpha_1,\ldots,\alpha_n\}\subseteq \mathbb F_{2^m}\), and let \(g(X)\in \mathbb F_{2^m}[X]\) satisfy \(g(\alpha_i)\ne0\) for \(1\le i\le n\). If \(g'(X)=\lambda\in\mathbb F_{2^m}^*\), then for every odd integer \(r\ge1\),
$\Gamma(L,g^r)=\Gamma(L,g^{r+1}).$
\end{lemma}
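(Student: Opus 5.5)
The inclusion \(\Gamma(L,g^{r+1})\subseteq\Gamma(L,g^r)\) is immediate, since \(g^r\mid g^{r+1}\) and so any \(R_{\mathbf c}\equiv0\pmod{g^{r+1}}\) is also \(\equiv0\pmod{g^r}\). The work is the reverse inclusion. Take \(\mathbf c\in\Gamma(L,g^r)\) with support \(S\subseteq L\), and set \(\sigma(X)=\prod_{\beta\in S}(X-\beta)\), so that \(R_{\mathbf c}=\sigma'/\sigma\). Because every \(g(\alpha_i)\ne0\), \(\sigma\) is coprime to \(g\). Hence \(\mathbf c\in\Gamma(L,g^k)\) if and only if \(g^k\mid\sigma'\), the same reduction used in the proofs of Theorems~\ref{d1} and \ref{d2}. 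It therefore suffices to show that \(g^r\mid\sigma'\) forces \(g^{r+1}\mid\sigma'\) when \(r\) is odd.

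The key step uses characteristic \(2\): \(\sigma'(X)\) contains only even-degree monomials, so \(\sigma'=\tau^2\) for some \(\tau\in\mathbb F_{2^m}[X]\), since coefficients in \(\mathbb F_{2^m}\) have square roots. Now \(g'=\lambda\ne0\) makes \(g\) separable, so \(g\) factors over the algebraic closure into distinct linear factors \(X-\theta\). For each root \(\theta\), \(v_\theta(\sigma')=2v_\theta(\tau)\) is even, and the hypothesis gives \(v_\theta(\sigma')\ge r\). Since \(r\) is odd, evenness forces \(v_\theta(\sigma')\ge r+1\). Squarefreeness of \(g\) then gives \(g^{r+1}\mid\sigma'\), hence \(\mathbf c\in\Gamma(L,g^{r+1})\).

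The only delicate point is justifying that \(g\) is squarefree and that divisibility can be checked root by root. A common root of \(g\) and \(g'\) would require \(\lambda=0\), which is excluded. So \(g=\mathrm{lc}(g)\prod(X-\theta)\) with distinct \(\theta\), and \(g^{r+1}\mid\sigma'\) is equivalent to \(v_\theta(\sigma')\ge r+1\) for every \(\theta\). The argument needs no further calculation beyond this, and it is the familiar mechanism behind \(\Gamma(L,g)=\Gamma(L,g^2)\) from \cite{2-21}.
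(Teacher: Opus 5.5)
Your proof is correct, but it follows a different route from the paper's. The paper never passes to $\sigma$ or to the algebraic closure. It writes $R_{\mathbf c}=g^r u$ in the localization where the $X-\alpha_i$ are invertible, and uses the binary identity $R_{\mathbf c}'=R_{\mathbf c}^2$ to get $R_{\mathbf c}'\equiv 0\pmod{g^{2r}}$. It then compares this with the product-rule expansion $R_{\mathbf c}'=g^{r-1}(\lambda u+gu')$, which holds because $r$ is odd. This forces $\lambda u\equiv 0\pmod g$, hence $g\mid u$, so the hypothesis $g'=\lambda$ is used directly as a cancellation.

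Your argument uses the same characteristic-$2$ phenomenon in another form: $\sigma''=0$, so $\sigma'$ is a perfect square. The hypothesis $g'=\lambda$ enters only to make $g$ squarefree, and the conclusion comes from a parity count of $v_\theta(\sigma')$ at each simple root $\theta$.

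What your route buys is transparency about what is actually needed. It shows at once that the lemma holds for every separable (equivalently, since $\mathbb F_{2^m}$ is perfect, squarefree) $g$, and it makes the role of the oddness of $r$ visible. The paper's route is a purely formal congruence computation inside the localized ring over $\mathbb F_{2^m}$, with no factorization over the algebraic closure. It also extends to separable $g$ by replacing $\lambda\ne0$ with $\gcd(g,g')=1$, but that is less apparent as written.
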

\begin{proof}
The inclusion \(\Gamma(L,g^{r+1})\subseteq\Gamma(L,g^r)\) is immediate. Conversely, let \(c=(c_1,\ldots,c_n)\in\Gamma(L,g^r)\), and put \(R_c(X)=\sum_{i=1}^n c_i/(X-\alpha_i)\). Then \(R_c(X)\equiv0\pmod {g^r}\), so we may write \(R_c(X)=g(X)^r u(X)\) in the localization in which all \(X-\alpha_i\) are invertible. Since the code is binary, \(R_c'(X)=\sum_{i=1}^n c_i/(X-\alpha_i)^2=R_c(X)^2\). Hence \(R_c'(X)\equiv0\pmod {g^{2r}}\), and in particular \(R_c'(X)\equiv0\pmod {g^{r+1}}\).

On the other hand, as \(r\) is odd and the characteristic is two, differentiating \(R_c(X)=g(X)^r u(X)\) gives \(R_c'(X)=g(X)^{r-1}(g'(X)u(X)+g(X)u'(X))=g(X)^{r-1}(\lambda u(X)+g(X)u'(X))\). Since \(R_c'(X)\equiv0\pmod {g^{r+1}}\), we have \(\lambda u(X)+g(X)u'(X)\equiv0\pmod {g^2}\). Reducing this congruence modulo \(g(X)\) gives \(u(X)\equiv0\pmod {g(X)}\), because \(\lambda\ne0\). Therefore \(R_c(X)\equiv0\pmod {g^{r+1}}\), and hence \(c\in\Gamma(L,g^{r+1})\). This proves the reverse inclusion.
\end{proof}
Based on the preceding lemma, we can also determine the dimension of another family of Goppa codes with an \(A_4\) automorphism subgroup arising from our construction.
\begin{theorem}\label{main2}
Let \(m\ge6\) be even, let \(g(X)=X^4+X\), and let \(L=\mathbb F_{2^m}\setminus\mathbb F_4\). Then \(\Gamma(L,g^3)=\Gamma(L,g^4)\). Moreover, this code has an automorphism subgroup isomorphic to \(A_4\), and its dimension \(k\) satisfies \(k\ge2^m-8m-4\). In particular, when \(m\ge8\), \(k=2^m-8m-4\).
\end{theorem}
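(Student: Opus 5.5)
The plan has three steps: prove the equality of codes, then the automorphism group, then the dimension bounds.

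\emph{Equality of codes.} Apply Lemma~\ref{lem:g-r-equality} with \(r=3\). Since \(g(X)=X^4+X\), we have \(g'(X)=4X^3+1=1\) in characteristic two, so \(g'\) is the nonzero constant \(\lambda=1\). Moreover \(g(\alpha)\ne0\) for every \(\alpha\in L=\mathbb F_{2^m}\setminus\mathbb F_4\). The lemma therefore gives \(\Gamma(L,g^3)=\Gamma(L,g^4)\) directly.

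\emph{Automorphism group.} As in the proof of Theorem~\ref{main}, take \(a=1,b=0\), so \(G=H_0\cong A_4\). Then \(L\) is the union of all long orbits, and \(G(0)=\mathbb F_4\) is disjoint from \(L\). Apply Theorem~\ref{5} with \(t=1\), \(\eta_1=0\) and \(e_1=3\). The resulting Goppa polynomial is \(\prod_{\alpha\in\mathbb F_4}(X-\alpha)^3=g^3\), which lies in \(\mathbb F_{2^m}[X]\). Hence \(\Gamma(L,g^3)\) has an automorphism subgroup isomorphic to \(A_4\). One could equally use \(e_1=4\), which is consistent with the equality of codes.

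\emph{Dimension.} Apply Proposition~\ref{prop:vdv-dimension} to \(g^4=(g^2)^2\cdot 1\). Here \(g_1=g^2\), \(g_2=1\), \(r_1=8\), \(r_2=0\), \(Z=\mathbb F_4\) and the number of distinct zeros is \(4\). The lower bound is then \(\dim\Gamma(L,g^4)\ge 2^m-4-8m\), which transfers to \(\Gamma(L,g^3)\). (Applying the proposition to \(g^3=g^2\cdot g\) would only give \(2^m-4-6m\)? No: there \(r_1=4\), \(r_2=4\), so \(m(r_1+r_2)=8m\) as well, and either choice works.)

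For equality, the condition in Proposition~\ref{prop:vdv-dimension} is \(-2+\deg g^4+4<(2^m-3)/2^{m/2}\), that is, \(18<(2^m-3)/2^{m/2}\); using \(g^3\) instead gives \(14<\cdots\). With \(\mu=2^{m/2}\) this reads \(\mu-3/\mu>18\) (or \(>14\)), which holds as soon as \(\mu\ge 32\). That gives \(m\ge10\), not \(m\ge8\): at \(m=8\) we have \(\mu=16\) and \(16-3/16<18\).

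The main obstacle is closing \(m=8\). I would first apply the proposition to \(g^3\): the bound \(14<253/16\approx15.8\) holds, giving \(k=2^8-8\cdot8-4\) exactly for \(m=8\). For \(m\ge10\), either form works. So the key choice is to use the representation \(g^3=g_1^2g_2\) with \(g_1=g_2=g\), so that \(\deg=12\), \(t=4\), and the threshold is \(14<(2^m-3)/2^{m/2}\). This holds for all even \(m\ge8\), and then \(k=2^m-8m-4\). For \(m=6\) only the lower bound is claimed, which needs no condition.
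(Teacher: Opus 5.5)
Your proposal is correct and, once the detour through $g^4=(g^2)^2$ is discarded, it is exactly the paper's proof: Lemma~\ref{lem:g-r-equality} with $r=3$ (since $g'=1$), $A_4$-invariance from the orbit $G(0)=\mathbb F_4$ (your explicit use of Theorem~\ref{5} with $e_1=3$ spells out what the paper states briefly), and Proposition~\ref{prop:vdv-dimension} applied to $g^3=g^2\cdot g$ with $r_1=r_2=4$ and $t=4$, which gives $k\ge 2^m-8m-4$ with equality whenever $14<(2^m-3)/2^{m/2}$, i.e.\ for all even $m\ge 8$. In a write-up, drop the $g^4$ attempt, whose threshold $18$ only covers $m\ge 10$, and go straight to the $g^3$ decomposition.
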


\begin{proof}
Since \(g'(X)=1\), Lemma~\ref{lem:g-r-equality} with \(r=3\) gives \(\Gamma(L,g^3)=\Gamma(L,g^4)\). By the construction in Section~4, the support \(L=\mathbb F_{2^m}\setminus\mathbb F_4\) is a union of long orbits of the standard \(A_4\)-subgroup, and the zero set of \(g(X)\) is \(\mathbb F_4\). Hence, the code has an automorphism subgroup isomorphic to \(A_4\).

It remains to determine the dimension. Apply Proposition~\ref{prop:vdv-dimension} to \(g(X)^3=g_1(X)^2g_2(X)\), where \(g_1(X)=g_2(X)=g(X)\). Then \(r_1=r_2=4\), \(Z=\mathbb F_4\), and \(t=4\). Therefore \(k\ge2^m-4-8m\). The equality condition is \(14<(2^m-3)/2^{m/2}\), which holds for every even \(m\ge8\). Thus \(k=2^m-8m-4\) when \(m\ge8\).
\end{proof}
Then we provide some examples to illustrate Theorem \ref{main2}.
\begin{example}
Magma computations verify \(\Gamma(L,g^3)=\Gamma(L,g^4)\) and give
\[
m=6:\ [60,14,18],\qquad
m=8:\ [252,188,17],\qquad
m=10:\ [1020,940,17].
\]
These results are consistent with Theorem~\ref{main2}.
\end{example}
\subsection{Construction with \(A_5\) Automorphism Groups}

To obtain a matrix representation of \(A_5\), we first state the following conjugacy result for subgroups isomorphic to \(A_5\) in \(PGL(2,2^m)\).

\begin{lemma}\cite{Suzuki1982}\label{6}
Let \(PGL(2,2^m)\) contain a subgroup isomorphic to \(A_5\). Then all subgroups of \(PGL(2,2^m)\) isomorphic to \(A_5\) are conjugate to each other.
\end{lemma}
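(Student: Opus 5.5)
The idea is to normalize a Sylow $2$-subgroup of a given copy of $A_5$, recover the standard $A_4$-subgroup $H_0$ inside it, and then show that $H$ must stabilize $\overline{\mathbb F}_4=\mathbb F_4\cup\{\infty\}$. Let $H\le PGL(2,2^m)$ with $H\cong A_5$. We will show that $H$ is conjugate to the standard subgroup $PGL(2,4)$, which proves that any two such subgroups are conjugate.

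\textbf{Step 1 (fix a point with the Klein four-group).} In characteristic $2$, every involution of $PGL(2,2^m)$ is conjugate to $z\mapsto z+1$. Hence it has exactly one fixed point on $\overline{\mathbb F}_{2^m}$. If $u,v$ are commuting involutions and $p$ is the fixed point of $u$, then $u(v(p))=v(u(p))=v(p)$. So $v(p)$ is fixed by $u$, which forces $v(p)=p$. Therefore a Sylow $2$-subgroup $V\cong \mathbb Z/2\times\mathbb Z/2$ of $H$ has a unique common fixed point. After conjugating, this point is $\infty$, and then $V=\{z\mapsto z+\beta:\beta\in W\}$ for an additive subgroup $W\le\mathbb F_{2^m}$ of order $4$.

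\textbf{Step 2 (identify $N_H(V)$ with $H_0$).} We have $N_H(V)\cong A_4$. Its elements permute the fixed points of $V$, and $\infty$ is the only one, so $N_H(V)$ fixes $\infty$ and consists of affine maps. An element of order $3$ in it has the form $z\mapsto \alpha z+\beta$ with $\alpha$ a primitive cube root of unity. Conjugating by a translation makes $\beta=0$, and replacing the element by its inverse if necessary gives $\alpha=\omega$. Normalizing $V$ then means $\omega W=W$, so $W$ is an $\mathbb F_4$-subspace of dimension $1$, i.e. $W=\gamma\mathbb F_4$. Conjugating by $z\mapsto \gamma^{-1}z$ gives $W=\mathbb F_4$. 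Now $N_H(V)$ contains $A$ and $B$, so $N_H(V)=H_0$.

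\textbf{Step 3 ($H$ stabilizes $\overline{\mathbb F}_4$).} Let $S=H_\infty$. Then $H_0\le S$, and $S$ consists of affine maps, so $S$ is solvable. Since $A_5$ is not solvable, $S\ne H$. Because $A_4$ is maximal in $A_5$, we get $S=H_0$, so the $H$-orbit of $\infty$ has size $5$. This orbit is $H_0$-stable and contains $\infty$. By Proposition~\ref{orbit1} with $P=I$, the $H_0$-orbits are $\{\infty\}$, $\mathbb F_4$, and orbits of length $12$. The only way to get size $5$ is $\{\infty\}\cup\mathbb F_4$. Hence $H$ stabilizes $\overline{\mathbb F}_4$.

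\textbf{Step 4 (conclude).} A fractional linear map is determined by the images of $0,1,\infty$. If it sends these three points into $\overline{\mathbb F}_4$, then it is represented by a matrix over $\mathbb F_4$. Therefore $H\le PGL(2,4)$, and since $|H|=60=|PGL(2,4)|$, we get $H=PGL(2,4)$ after the conjugations above. Consequently every subgroup isomorphic to $A_5$ is conjugate to $PGL(2,4)$.

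The step needing the most care is Step 2, in particular checking that the conjugations by translation and scaling keep $V$ fixing $\infty$ and land exactly on $H_0$. The fixed-point argument of Step 1 is what makes that normalization possible.
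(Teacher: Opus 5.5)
Your proof is correct, and it takes a genuinely different route from the paper. The paper gives no argument and simply quotes the result from Suzuki, i.e.\ from Dickson's classification of the subgroups of $PSL(2,q)$. You instead prove it from first principles on the projective line, in five moves:
\begin{enumerate}
\item The unique common fixed point of a Klein four-subgroup $V$ lets you conjugate $V$ to the translations by an additive subgroup $W$.
\item An element of order $3$ in $N_H(V)\cong A_4$ forces $\omega W=W$. After a translation and a scaling, $N_H(V)$ becomes the standard $\langle A,B\rangle$.
\item Non-solvability of $A_5$, applied to $H_\infty$ inside the affine group, together with maximality of $A_4$ in $A_5$, gives $|H(\infty)|=5$.
\item Proposition~\ref{orbit1} identifies this orbit as $\overline{\mathbb F}_4$. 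That proposition is proved independently of this lemma, so there is no circularity.
\item Sharp $3$-transitivity then yields $H=PGL(2,4)$.
\end{enumerate}

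The citation buys brevity, and it places the lemma inside the full subgroup classification, which also supplies the existence criterion ($m$ even). Your route buys a self-contained, checkable argument with three extra benefits:
\begin{itemize}
\item It directly gives the stronger form the paper actually uses in the next theorem: every $A_5$ is conjugate to $\langle A,B,C\rangle=PGL(2,4)$.
\item Steps~1--2, applied to the normal Klein four-subgroup of an arbitrary $A_4$, also reprove Lemma~\ref{7}.
\item The primitive cube root of unity that appears shows $\mathbb F_4\subseteq\mathbb F_{2^m}$ for free.
\end{itemize}

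The one claim you leave unjustified is that an involution has exactly one fixed point, and it needs only a line. A non-scalar $M$ with $M^2$ scalar has trace $0$, so $a=d$. The fixed-point equation then reduces to $cz^2=b$. This has exactly one solution in $\overline{\mathbb F}_{2^m}$ (namely $\infty$ when $c=0$), because squaring is bijective in characteristic $2$.
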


By this lemma, it remains to choose a standard representative of the conjugacy class of \(A_5\)-subgroups in \(PGL(2,2^m)\). Let \(\omega\in\mathbb F_4\) be a primitive element satisfying \(\omega^2+\omega+1=0\). Define
\[
A=
\begin{pmatrix}
\omega & 0\\
0 & \omega^2
\end{pmatrix},
\quad
B=
\begin{pmatrix}
1 & 1\\
0 & 1
\end{pmatrix},
\quad
C=
\begin{pmatrix}
0 & 1\\
1 & 0
\end{pmatrix}.
\]

\begin{theorem}
Define \(H_0=\langle A,B,C\rangle\le PGL(2,2^m)\). Then
$H_0=PGL(2,4)\cong A_5.$
Moreover,
\[
\{H\le PGL(2,2^m):H\cong A_5\}
=
\{PH_0P^{-1}:P\in PGL(2,2^m)\}.
\]
\end{theorem}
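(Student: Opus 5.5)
The plan has three parts: show \(H_0\subseteq PGL(2,4)\), show \(H_0\) is all of \(PGL(2,4)\), and identify this group with \(A_5\). The conjugacy statement then follows from Lemma~\ref{6}.

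\textbf{Containment.} The matrices \(A,B,C\) have entries in \(\mathbb F_4\). Hence \(H_0\) lies in the image of \(PGL(2,4)\) inside \(PGL(2,2^m)\), which is defined because \(m\) is even, so \(\mathbb F_4\subseteq\mathbb F_{2^m}\). This image is faithful, since a scalar matrix over \(\mathbb F_{2^m}\) with entries in \(\mathbb F_4\) is a scalar over \(\mathbb F_4\). Next, \(|GL(2,4)|=(16-1)(16-4)=180\), and dividing by the \(3\) scalars gives \(|PGL(2,4)|=60\).

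\textbf{Equality \(H_0=PGL(2,4)\).} By the earlier proposition, \(\langle A,B\rangle\cong A_4\) has order \(12\) and consists of the affine maps \(z\mapsto\alpha z+\beta\) with \(\alpha\in\mathbb F_4^*\) and \(\beta\in\mathbb F_4\). The map \(C\) acts as \(z\mapsto 1/z\), which sends \(\infty\) to \(0\). So \(C\notin\langle A,B\rangle\), and \(|H_0|\) is a multiple of \(12\) strictly larger than \(12\) that divides \(60\). This leaves \(|H_0|\in\{60\}\) as soon as \(24\) and \(36\) are excluded; both fail to divide \(60\) anyway, so \(|H_0|=60\).

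A cleaner alternative gives the same conclusion via transitivity. \(H_0\) acts on the \(5\) points of \(\overline{\mathbb F}_4=\mathbb F_4\cup\{\infty\}\). The affine maps are transitive on \(\mathbb F_4\), and \(C\) moves \(\infty\) into \(\mathbb F_4\), so \(H_0\) is transitive on \(5\) points. Hence \(5\mid |H_0|\). Together with \(12\mid|H_0|\) and \(|H_0|\le 60\), this forces \(H_0=PGL(2,4)\).

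\textbf{Identification with \(A_5\).} The action of \(PGL(2,4)\) on the \(5\) points of \(\overline{\mathbb F}_4\) is faithful. Indeed, a Möbius map fixing \(3\) points is the identity, so the kernel is trivial. This gives an embedding \(PGL(2,4)\hookrightarrow S_5\) with image of order \(60\), hence index \(2\). The only index-\(2\) subgroup of \(S_5\) is \(A_5\), so the image is \(A_5\).

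To avoid quoting the index-\(2\) fact, we can argue directly that every generator acts as an even permutation. \(A\) is \(z\mapsto\omega^2 z\) up to scalars, a \(3\)-cycle on \(\mathbb F_4^*\) fixing \(0\) and \(\infty\). \(B\) is \(z\mapsto z+1\), a product of two transpositions fixing \(\infty\). \(C\) swaps \(0\leftrightarrow\infty\) and \(\omega\leftrightarrow\omega^2\) and fixes \(1\), again a product of two transpositions. All three are even, so the image lies in \(A_5\), and comparing orders gives equality.

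Finally, since \(H_0\cong A_5\) exists in \(PGL(2,2^m)\), Lemma~\ref{6} says every subgroup isomorphic to \(A_5\) is conjugate to \(H_0\). Conversely, every conjugate \(PH_0P^{-1}\) is isomorphic to \(A_5\). This proves the set equality.

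The main points to check carefully are faithfulness of the embedding \(PGL(2,4)\to PGL(2,2^m)\) and the order count. Neither should be a serious obstacle; the delicate bookkeeping is only in confirming the transitivity and parity claims above.
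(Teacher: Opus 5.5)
Your proof is correct and follows essentially the same route as the paper. Both arguments use the action on the five points of $\overline{\mathbb F}_4$: the order-$12$ affine subgroup $\langle A,B\rangle$, the size-$5$ orbit of $\infty$ and $|PGL(2,4)|=60$ give $H_0=PGL(2,4)$; the faithful action then embeds it into $S_5$ with index $2$, and Lemma~\ref{6} finishes. The differences are minor: you use Lagrange divisibility where the paper computes the stabilizer of $\infty$ exactly, and your generator-parity check together with the explicit faithfulness of $PGL(2,4)\hookrightarrow PGL(2,2^m)$ makes your version slightly more self-contained than the paper's citation that $A_5$ is the unique normal subgroup of order $60$ in $S_5$.
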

\begin{proof}
We first prove that \(H_0=PGL(2,4)\cong A_5\). Consider the natural action of \(H_0\) on \(\overline{\mathbb F}_4=\mathbb F_4\cup\{\infty\}\). Since the entries of \(A,B,C\) all lie in \(\mathbb F_4\), the set \(\overline{\mathbb F}_4\) is invariant under \(H_0\). The three generators act as \(A:z\mapsto \omega^2z\), \(B:z\mapsto z+1\), and \(C:z\mapsto 1/z\).

We first determine the elements of \(H_0\) fixing \(\infty\). Clearly \(A\) and \(B\) fix \(\infty\), so every element of \(\langle A,B\rangle\) fixes \(\infty\). Moreover, \(ABA^{-1}:z\mapsto z+\omega^2\) and \(A^2BA^{-2}:z\mapsto z+\omega\). Thus \(\langle A,B\rangle\) contains all translations \(z\mapsto z+\beta\), where \(\beta\in\mathbb F_4\). Together with the cyclic group generated by \(A\), we have
\[
\langle A,B\rangle
=
\{z\mapsto \alpha z+\beta:\alpha\in\mathbb F_4^*,\ \beta\in\mathbb F_4\}.
\]
Hence \(|\langle A,B\rangle|=|\mathbb F_4^*||\mathbb F_4|=12\). On the other hand, the elements of \(PGL(2,4)\) fixing \(\infty\) are precisely the affine transformations \(z\mapsto \alpha z+\beta\), where \(\alpha\in\mathbb F_4^*\) and \(\beta\in\mathbb F_4\). Since \(H_0\le PGL(2,4)\), it follows that the elements of \(H_0\) fixing \(\infty\) are exactly the elements of \(\langle A,B\rangle\).

Next, since \(C(\infty)=0\), the points \(0\) and \(\infty\) lie in the same \(H_0\)-orbit. Since \(\langle A,B\rangle\) contains all translations of \(\mathbb F_4\), it acts transitively on \(\mathbb F_4\). Therefore the \(H_0\)-orbit of \(\infty\) is \(\overline{\mathbb F}_4\), which has size \(5\). By the orbit-stabilizer theorem, \(|H_0|=5\cdot12=60\).

Since \(H_0\le PGL(2,4)\) and \(|PGL(2,4)|=4(4^2-1)=60\), we obtain \(H_0=PGL(2,4)\). The action of \(PGL(2,4)\) on \(\overline{\mathbb F}_4\) gives an injective homomorphism \(PGL(2,4)\to S_5\), because a nonidentity fractional linear transformation cannot fix three distinct points. Hence \(PGL(2,4)\) is isomorphic to a subgroup of \(S_5\) of order \(60\). Such a subgroup has index \(2\), and hence is normal in \(S_5\). Since the unique normal subgroup of \(S_5\) of order \(60\) is \(A_5\) \cite{GrouppropsS5}, we have \(PGL(2,4)\cong A_5\). Thus \(H_0=PGL(2,4)\cong A_5\).

By Lemma~\ref{6}, every subgroup \(H\le PGL(2,2^m)\) with \(H\cong A_5\) has the form \(H=PH_0P^{-1}\) for some \(P\in PGL(2,2^m)\). This completes the proof.
\end{proof}
We now use the above representation to construct Goppa codes and related codes with \(A_5\) automorphism groups.
Let \(H_0=\langle A,B,C\rangle\cong A_5\) be the standard subgroup of \(PGL(2,2^m)\) defined above. For
\[
P=
\begin{pmatrix}
a&b\\
c&d
\end{pmatrix}
\in PGL(2,2^m),
\]
set
\[
G=PH_0P^{-1}\le PGL(2,2^m).
\]
The orbit structure of \(G\) can be described uniformly for every choice of \(P\).

\begin{theorem}\label{thm:A5-orbits}
The action of \(G=PH_0P^{-1}\) on \(\overline{\mathbb F}_{2^m}=\mathbb F_{2^m}\cup\{\infty\}\) has the following orbit decomposition.

Put
\[
O_0=
\left\{
\frac{a}{c},
\frac{b}{d},
\frac{a+b}{c+d},
\frac{a\omega+b}{c\omega+d},
\frac{a\omega^2+b}{c\omega^2+d}
\right\},
\]
where a quotient with zero denominator is interpreted as \(\infty\).

If \(4\nmid m\), then
\[
\overline{\mathbb F}_{2^m}
=
O_0\cup \bigcup_{i=1}^{(2^m-4)/60}O_i,
\]
where \(|O_i|=60\) for \(i\ge1\).

If \(4\mid m\), then
\[
\overline{\mathbb F}_{2^m}
=
O_0\cup O_1\cup \bigcup_{i=2}^{1+(2^m-16)/60}O_i,
\]
where \(|O_1|=12\) and \(|O_i|=60\) for \(i>1\).

Moreover, \(\infty\in O_0\) if and only if either \(c=0\) or \(c\ne0\) and \(d/c\in\mathbb F_4\). If \(4\mid m\), then \(\infty\in O_1\) if and only if \(c\ne0\) and \(d/c\in\mathbb F_{16}\setminus\mathbb F_4\). 
\end{theorem}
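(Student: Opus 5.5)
The plan is to reduce everything to the standard group \(H_0=PGL(2,4)\) acting on \(\overline{\mathbb F}_{2^m}\). Conjugation by \(P\) transports orbits: \(G(P\zeta)=P(H_0(\zeta))\). So it suffices to find the orbit decomposition of \(H_0\) and then apply the fractional linear map \(P:z\mapsto (az+b)/(cz+d)\). Applied to \(\overline{\mathbb F}_4=\{\infty,0,1,\omega,\omega^2\}\), this map gives exactly the five listed elements of \(O_0\), with \(P(\infty)=a/c\).

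\emph{Point stabilizers in \(H_0\).} We use orbit--stabilizer, so the main task is computing stabilizers.

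\begin{itemize}
\item The orbit of \(\infty\) is \(\overline{\mathbb F}_4\), as shown in the preceding proof.
\item For \(\zeta\notin\overline{\mathbb F}_4\), every nonidentity \(M\in H_0\) is an involution or has order \(3\) or \(5\).
\item Involutions of \(PGL(2,4)\) are conjugate to \(z\mapsto z+1\). Each has a unique fixed point, which lies in \(\overline{\mathbb F}_4\).
\item Elements of order \(3\) are conjugate to \(z\mapsto\omega z\), whose fixed points \(0,\infty\) lie in \(\overline{\mathbb F}_4\).
\item Elements of order \(5\) act on \(\overline{\mathbb F}_4\) as a \(5\)-cycle, so they fix no point of \(\overline{\mathbb F}_4\). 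Their fixed points are the roots of the quadratic \(c'z^2+(a'+d')z+b'=0\), which is irreducible over \(\mathbb F_4\). These roots lie in \(\mathbb F_{16}\setminus\mathbb F_4\).
\end{itemize}

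\emph{Counting the orbits.} If \(4\nmid m\), then \(\mathbb F_{16}\not\subseteq\mathbb F_{2^m}\), so every \(\zeta\notin\overline{\mathbb F}_4\) has trivial stabilizer and its orbit has length \(60\). The number of such orbits is \((2^m+1-5)/60\).

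If \(4\mid m\), then \(\mathbb F_{16}\setminus\mathbb F_4\) has \(12\) points. The group \(H_0\) has six Sylow \(5\)-subgroups, each of order \(5\) and each fixing two points. Two distinct Sylow \(5\)-subgroups cannot share a fixed point, since the stabilizer would then contain two subgroups of order \(5\). Hence these \(12\) points are permuted transitively, and each has a stabilizer of order exactly \(5\). To verify this, note that \(H_0\) maps \(\mathbb F_{16}\setminus\mathbb F_4\) to itself, and conjugation is transitive on Sylow subgroups. The two fixed points of a single \(5\)-element are swapped by its normalizer, which is \(D_{10}\). This gives one orbit \(O_1\) of size \(60/5=12\). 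All remaining points have trivial stabilizer, which accounts for the \((2^m+1-5-12)/60\) long orbits.

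\emph{Locating \(\infty\).} The point \(\infty\) lies in \(G(\,P^{-1}(\infty)\,)\) after transport, so we compute \(P^{-1}(\infty)\). If \(c=0\), then \(P^{-1}(\infty)=\infty\in\overline{\mathbb F}_4\). If \(c\ne0\), then \(P^{-1}(\infty)=-d/c=d/c\) in characteristic \(2\). Therefore \(\infty\in O_0\) if and only if \(c=0\) or \(d/c\in\mathbb F_4\), and \(\infty\in O_1\) if and only if \(d/c\in\mathbb F_{16}\setminus\mathbb F_4\).

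\emph{Main obstacle.} The hardest step is the stabilizer analysis: classifying the conjugacy classes of \(PGL(2,4)\) and showing that the fixed points of order-\(5\) elements form a single orbit of size \(12\). The approach is to count conjugacy classes (\(15\) involutions, \(20\) elements of order \(3\), \(24\) of order \(5\)) and check that each class has fixed points of the stated kind.
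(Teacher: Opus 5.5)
Your proposal is correct and follows essentially the same route as the paper. Both transport the $H_0$-orbits by $P$, use the conjugacy classes of $A_5$ to show that only elements of order $5$ can fix points outside $\overline{\mathbb F}_4$ and that their fixed points lie in $\mathbb F_{16}\setminus\mathbb F_4$, split on whether $4\mid m$, and locate $\infty$ through $P^{-1}(\infty)$. The differences are local: you get the order-$5$ fixed points directly from the irreducibility of the fixed-point quadratic over $\mathbb F_4$, while the paper uses the explicit element $z\mapsto 1/(z+\omega)$ plus Sylow conjugacy; and your Sylow-counting and normalizer argument for transitivity on the $12$ points can be replaced by the paper's shorter observation that one such point has stabilizer of order exactly $5$, hence an orbit of size $12$ inside the invariant $12$-element set.
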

\begin{proof}
 We first consider the action of \(H_0\) on $\overline{\mathbb F}_{2^m}$.  Since the entries of \(A,B,C\) lie in \(\mathbb F_4\), the set \(\overline{\mathbb F}_4=\mathbb F_4\cup\{\infty\}\) is invariant under \(H_0\). The generators act as \(A:z\mapsto \omega^2z\), \(B:z\mapsto z+1\), and \(C:z\mapsto 1/z\). Since \(C(\infty)=0\), and since \(\langle A,B\rangle\) contains all affine transformations \(z\mapsto \alpha z+\beta\), where \(\alpha\in\mathbb F_4^*\) and \(\beta\in\mathbb F_4\), we have \(H_0(\infty)=\overline{\mathbb F}_4\). Thus, \(\overline{\mathbb F}_4\) is an orbit of length \(5\).

We next analyze the stabilizers of the points outside this orbit. Since \(H_0\cong A_5\), every nonidentity element of \(H_0\) has order \(2\), \(3\), or \(5\). The element \(B:z\mapsto z+1\) fixes only \(\infty\). Since all involutions in \(A_5\) are conjugate, every element of order \(2\) in \(H_0\) fixes only a point of \(\overline{\mathbb F}_4\). Similarly, \(A:z\mapsto \omega^2z\) fixes exactly \(0\) and \(\infty\). Since all elements of order \(3\) in \(A_5\) are conjugate, every element of order \(3\) in \(H_0\) has all its fixed points in \(\overline{\mathbb F}_4\).

It remains to consider elements of order \(5\). Let \(T_\omega:z\mapsto z+\omega\), and set \(D=CT_\omega\). Then \(D\in H_0\) and \(D:z\mapsto 1/(z+\omega)\). On \(\overline{\mathbb F}_4\), this element acts as the cycle \((\infty,0,\omega^2,1,\omega)\), and hence \(D\) has order \(5\). Its fixed points satisfy \(z=1/(z+\omega)\), or equivalently \(z^2+\omega z+1=0\). This polynomial has no root in \(\mathbb F_4\), and its two roots lie in \(\mathbb F_{16}\setminus\mathbb F_4\). All nonidentity elements of a cyclic subgroup of order \(5\) have the same two fixed points. Since all Sylow \(5\)-subgroups of \(H_0\) are conjugate, and since \(H_0\) preserves both \(\overline{\mathbb F}_{16}\) and \(\overline{\mathbb F}_4\), the fixed points of any element of order \(5\) in \(H_0\) lie in \(\overline{\mathbb F}_{16}\setminus\overline{\mathbb F}_4\).

We now distinguish two cases.

\begin{enumerate}
\item If \(4\nmid m\), then \(\mathbb F_{2^m}\cap\mathbb F_{16}=\mathbb F_4\). Hence no point of \(\overline{\mathbb F}_{2^m}\setminus\overline{\mathbb F}_4\) can be fixed by an element of order \(5\). By the preceding discussion, such a point also cannot be fixed by an element of order \(2\) or \(3\). Therefore, its stabilizer in \(H_0\) is trivial, and every remaining orbit has length \(|H_0|=60\). Since \(|\overline{\mathbb F}_{2^m}|=2^m+1\), after removing the orbit \(\overline{\mathbb F}_4\) there remain \(2^m-4\) points. Thus, the number of remaining orbits is \((2^m-4)/60\).

\item If \(4\mid m\), then \(\mathbb F_{16}\subseteq\mathbb F_{2^m}\). Put \(O_1'=\overline{\mathbb F}_{16}\setminus\overline{\mathbb F}_4\). Choose a root \(\alpha_0\in\mathbb F_{16}\setminus\mathbb F_4\) of \(z^2+\omega z+1\). Then \(D(\alpha_0)=\alpha_0\), so \(\langle D\rangle\le (H_0)_{\alpha_0}\). Moreover, no element of order \(2\) or \(3\) fixes \(\alpha_0\). Since \(|(H_0)_{\alpha_0}|\) divides \(60\), is divisible by \(5\), and is divisible by neither \(2\) nor \(3\), we have \(|(H_0)_{\alpha_0}|=5\). The orbit-stabilizer theorem then gives \(|H_0(\alpha_0)|=60/5=12\). Since \(|O_1'|=|\overline{\mathbb F}_{16}|-|\overline{\mathbb F}_4|=17-5=12\), we have \(H_0(\alpha_0)=O_1'\). For any point outside \(\overline{\mathbb F}_4\cup O_1'\), the stabilizer is trivial, and hence every remaining orbit has length \(60\). The number of such orbits is \((2^m+1-5-12)/60=(2^m-16)/60\).
\end{enumerate}

Finally, the orbit decomposition for \(G=PH_0P^{-1}\) is obtained by applying \(P\) to the orbit decomposition of \(H_0\). In particular,
\[
P(\overline{\mathbb F}_4)
=
\left\{
\frac{a}{c},
\frac{b}{d},
\frac{a+b}{c+d},
\frac{a\omega+b}{c\omega+d},
\frac{a\omega^2+b}{c\omega^2+d}
\right\},
\]
which gives the stated orbit \(O_0\). When \(4\mid m\), the unique orbit of length \(12\) is carried to the orbit \(O_1\), while all remaining orbit lengths are preserved. Moreover,
\[
P^{-1}(\infty)
=
\begin{cases}
\infty,&c=0,\\
d/c,&c\ne0.
\end{cases}
\]
The stated conditions for the orbit containing \(\infty\) now follow from the orbit decomposition of \(H_0\).
\end{proof}
Using unions of orbits in this decomposition as projective supports, we obtain expurgated and extended Goppa codes with \(A_5\) automorphism groups, according as the selected projective support does not or does contain \(\infty\).
\begin{theorem}\label{thm:A5-construction}
Let \(\mathcal L\) be a nonempty union of orbits of length \(60\)
in the above decomposition. Let $\eta_1,\ldots,\eta_t$ be elements in an extension field of $\mathbb{F}_{2^m}$, and let $e_1,\ldots,e_t$ be positive integers. Assume that \(G(\eta_j)\) does not contain \(\infty\) and that \(G(\eta_j)\cap\mathcal{L}=\varnothing\) for \(1\leq j\leq t\), and that
\[
g(X):=\prod_{j=1}^t
\left(\prod_{\alpha\in G(\eta_j)}(X-\alpha)\right)^{e_j}
\]
belongs to \(\mathbb{F}_{2^m}[X]\).
Then the following statements hold.
\begin{enumerate}
\item If $\infty\notin\mathcal{L}$, set $L=\mathcal{L}$. Then the
      expurgated Goppa code $\widetilde{\Gamma}(L,g)$ has an automorphism group isomorphic to $A_5$.
\item If $\infty\in\mathcal{L}$, set
      $L=\mathcal{L}\setminus\{\infty\}$. Then the extended Goppa code $\overline{\Gamma}(L,g)$ has an automorphism group isomorphic to $A_5$.
\end{enumerate}
\end{theorem}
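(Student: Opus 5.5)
The plan follows the proof of Theorem~\ref{2} closely, replacing the two generators \(A_P,B_P\) by the three generators \(A_P=PAP^{-1}\), \(B_P=PBP^{-1}\), \(C_P=PCP^{-1}\) of \(G=PH_0P^{-1}\cong A_5\).

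\emph{Step 1: a faithful \(A_5\) action on the coordinates.} Each orbit \(O\subseteq\mathcal L\) has length \(60=|G|\), so by Theorem~\ref{thm:A5-orbits} the action of \(G\) on \(O\) is regular. Fix a base point \(\zeta_O\in O\). The bijection \(M\mapsto M(\zeta_O)\) identifies \(O\) with \(G\), and under this identification \(G\) acts by left multiplication. The induced permutations \(\psi_A,\psi_B,\psi_C\) of the coordinate set \(\mathcal L\) therefore generate a group that is a homomorphic image of \(G\). Restricting to a single orbit already gives a faithful action, so this group is isomorphic to \(G\cong A_5\). Here one only needs that \(\mathcal L\) is nonempty.

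\emph{Step 2: the generators preserve the codes.} Let \(M\in\{A_P,B_P,C_P\}\) with representative matrix \(\begin{pmatrix}a_M&b_M\\c_M&d_M\end{pmatrix}\). By construction, the multiset of roots of \(g\) is a union of \(G\)-orbits with constant multiplicities \(e_j\), and no \(G(\eta_j)\) contains \(\infty\). Hence \(M^{-1}\) maps the root multiset onto itself, and no root is sent to \(\infty\). Comparing roots and leading terms gives
\[
(c_MX+d_M)^r g\!\left(\frac{a_MX+b_M}{c_MX+d_M}\right)=\gamma_M\,g(X),\qquad \gamma_M\in\mathbb F_{2^m}^*,\ r=\deg g.
\]
The degree-\(r\) term does not drop, precisely because \(\infty\) is not a root image. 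Since \(g\) has no roots in \(\mathcal L\), \(g(M(\alpha))\neq 0\) for every \(\alpha\in L\).

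Now substitute into the columns of \(\widetilde H\). The column indexed by \(M(\alpha_i)\) equals \(\gamma_M^{-1}g(\alpha_i)^{-1}\) times a vector whose entries are polynomials in \(\alpha_i\) of degree at most \(r\). Every row of the permuted matrix is therefore an \(\mathbb F_{2^m}\)-combination of rows of \(\widetilde H\). Applying the same argument to \(M^{-1}\) shows the two matrices have equal row spaces, so \(\widetilde\Gamma(L,g)\) is preserved when \(\infty\notin\mathcal L\).

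\emph{Step 3: the case \(\infty\in\mathcal L\).} This is the main obstacle: \(M\) moves \(\infty\) into \(\mathbb F_{2^m}\) and brings points of \(\mathbb F_{2^m}\) to \(\infty\). The approach is to work in homogeneous coordinates. Index each column of \(\overline H\) by a projective point \((x:y)\) and take the column to be
\[
G(x,y)^{-1}\,(y^r,\,xy^{r-1},\,\dots,\,x^r)^T,\qquad G(x,y)=y^r g(x/y).
\]
Then the point \((1:0)\) gives \(g_r^{-1}(0,\dots,0,1)^T\), which is exactly the last column of \(\overline H\). For \(y=1\) the column is the usual \(g(\alpha)^{-1}(1,\alpha,\dots,\alpha^r)^T\).

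The column for a point is well defined up to the scalar \(\lambda^r/\lambda^r=1\) when \((x,y)\) is replaced by \((\lambda x,\lambda y)\), so it is a genuine function of the projective point. The identity from Step 2 reads \(G(M(x,y))=\gamma_M G(x,y)\), with \(M\) acting linearly on \((x,y)\). Consequently the permuted column is \(\gamma_M^{-1}G(x,y)^{-1}\) times the vector of degree-\(r\) binary forms in \(M(x,y)\). These forms are linear combinations of \(y^r,xy^{r-1},\dots,x^r\) with coefficients independent of the point. The row-space argument of Step 2 then goes through verbatim, including the column indexed by \(\infty\), so \(\overline\Gamma(L,g)\) is preserved.

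Combining Steps 1–3, \(\langle\psi_A,\psi_B,\psi_C\rangle\cong A_5\) is an automorphism group of the stated code in each case.
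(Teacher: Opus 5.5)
Your proposal is correct and follows essentially the paper's route. The paper omits this proof as analogous to Theorem~\ref{2}, whose argument is the one you give with the third generator $C_P$ added: a faithful action on the length-$60$ orbits, the invariance identity $(c_MX+d_M)^r g(M(X))=\gamma_M g(X)$, and a row-space comparison for $\widetilde H$ and, in projective coordinates, $\overline H$. Your regularity argument for faithfulness and your homogeneous-coordinate treatment of the column indexed by $\infty$ simply spell out what the paper leaves implicit.
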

\begin{proof}
The proof is analogous to that of Theorem~\ref{2} and is omitted.
\end{proof}
Since $A_4\leq A_5$, an expurgated or extended Goppa code with an $A_5$
automorphism group can be regarded as a special case of the
$A_4$-construction. The following proposition explains the relation between
long $A_4$-orbits and long $A_5$-orbits.


\begin{proposition}\label{8}
Let \(G\cong A_5\) act on a set \(\Omega\), and let \(H\le G\) satisfy \(H\cong A_4\) and \([G:H]=5\). Choose an element \(R\in G\) of order \(5\). If \(\xi\in\Omega\) satisfies \(|G(\xi)|=60\), then
\[
G(\xi)
=
H(\xi)\sqcup RH(\xi)\sqcup R^2H(\xi)\sqcup R^3H(\xi)\sqcup R^4H(\xi).
\]  
\end{proposition}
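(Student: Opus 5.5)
The plan is to reduce everything to a coset decomposition of $G$ and then transport it to the orbit through the orbit map, which is a bijection because the stabilizer is trivial. First, since $|G(\xi)|=60=|G|$, the orbit-stabilizer theorem gives $|G_\xi|=1$. Hence the map $G\to G(\xi)$, $x\mapsto x(\xi)$, is a bijection, and for any subset $Y\subseteq G$ we have $|Y(\xi)|=|Y|$. Moreover, disjoint subsets of $G$ have disjoint images.

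Second, I would show that $\{1,R,R^2,R^3,R^4\}$ is a complete set of representatives of the left cosets of $H$ in $G$. By Lagrange, $A_4$ has no element of order $5$. Since $\langle R\rangle$ has prime order $5$, this gives $\langle R\rangle\cap H=\{1\}$. If $R^iH=R^jH$ with $0\le i<j\le4$, then $R^{j-i}\in H$, which is a nontrivial element of $\langle R\rangle$, a contradiction. So the five cosets $R^iH$ are distinct. Because $[G:H]=5$, they exhaust $G$, and we get $G=\bigsqcup_{i=0}^4 R^iH$.

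Third, I apply the bijection from the first step:
\[
G(\xi)=\bigcup_{i=0}^4 (R^iH)(\xi)=\bigcup_{i=0}^4 R^i\bigl(H(\xi)\bigr).
\]
The union is disjoint because the cosets are disjoint and $x\mapsto x(\xi)$ is injective. A cardinality check confirms this: each piece has size $|H|=12$, and $5\cdot12=60$. I read the notation $R^kH(\xi)$ in the statement as $R^k(H(\xi))$, which equals $(R^kH)(\xi)$.

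None of these steps is really an obstacle. The only point needing care is the distinctness of the cosets $R^iH$, which comes from $5\nmid 12$. The disjointness of the union depends essentially on the stabilizer being trivial, and that is exactly the hypothesis $|G(\xi)|=60$.
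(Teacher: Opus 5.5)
Your proof is correct and follows essentially the same route as the paper: the cosets $R^iH$ are distinct because $H\cong A_4$ has no element of order $5$, they exhaust $G$ since $[G:H]=5$, and the translated $H$-orbits are disjoint because orbit--stabilizer gives $G_\xi=\{1\}$. Your explicit observation that $\langle R\rangle\cap H=\{1\}$, so $R^{j-i}\in H$ is impossible, spells out the step the paper states in one line.
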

\begin{proof}
Since \(H\cong A_4\), the subgroup \(H\) contains no element of order \(5\). Hence the left cosets \(R^iH~(i=0,1,2,3,4)\) are pairwise distinct. Since \([G:H]=5\), they give the full left-coset decomposition
\[
    G=H\sqcup RH\sqcup R^2H\sqcup R^3H\sqcup R^4H .
\]
It follows that
\[
    G(\xi)=H(\xi)\cup RH(\xi)\cup R^2H(\xi)\cup R^3H(\xi)\cup R^4H(\xi).
\]

It remains to show that this union is disjoint. Since \(|G(\xi)|=60=|G|\), the orbit-stabilizer theorem gives \(G_\xi=\{1\}\). In particular, \(H_\xi=\{1\}\), and hence \(|H(\xi)|=|H|=12\). Moreover, for each \(0\le i\le4\), the set \(R^iH(\xi)\) also has size \(12\).

Suppose that \(R^ih_1(\xi)=R^jh_2(\xi)\) for some \(h_1,h_2\in H\). Since \(G_\xi=\{1\}\), we have \(R^ih_1=R^jh_2\). Thus, the two left cosets \(R^iH\) and \(R^jH\) intersect, and hence they are equal. By the pairwise distinctness of the five left cosets, this implies \(i=j\). Therefore, the five sets above are pairwise disjoint. The proposition follows.
\end{proof}
\begin{example}
Take \(P=I_2\), so that \(G=H_0\), and let
\(L=\mathbb{F}_{2^8}\setminus\mathbb{F}_{16}\) and
\(g(X)=X^{12}+X^9+X^6+X^3+1\). Since
\((X^4+X)g(X)=X^{16}+X\), the zero set of \(g(X)\) is
\(\mathbb{F}_{16}\setminus\mathbb{F}_4\).

For \(\eta\in\mathbb{F}_{16}\setminus\mathbb{F}_4\), Theorem~\ref{thm:A5-orbits}
gives \(H_0(\eta)=\mathbb{F}_{16}\setminus\mathbb{F}_4\), which has
length \(12\) and does not contain \(\infty\). Moreover,
\(L=O_2\cup O_3\cup O_4\cup O_5\), where \(O_2,\ldots,O_5\) are the
\(H_0\)-orbits of length \(60\). Thus,
\(H_0(\eta)\cap L=\varnothing\), and
\(g(X)=\prod_{\alpha\in H_0(\eta)}(X-\alpha)\) belongs to
\(\mathbb{F}_{2^8}[X]\). Therefore, Theorem~\ref{thm:A5-construction} implies that
\(\widetilde{\Gamma}(L,g)\) has an automorphism group isomorphic to
\(A_5\).

Let \(H=\langle A,B\rangle\cong A_4\). By Proposition~\ref{8}, each
\(O_i\), \(2\leq i\leq 5\), is a disjoint union of five long
\(H\)-orbits. Consequently, \(L\) is also a disjoint union of twenty
long \(A_4\)-orbits.
\end{example}
\section{Conclusion}~\label{sec5}
In this paper, we studied the minimum distance of binary separable Goppa codes and the construction of binary Goppa codes with prescribed automorphism groups. We established two criteria for Goppa codes defined by power-composite polynomials \(g(X)=f(X^t)\) and composite polynomials \(g(X)=A(X)h(\phi(X))\) to attain their designed distances, and derived several infinite families with determined minimum distance. We also constructed binary Goppa codes, expurgated Goppa codes, and extended Goppa codes with \(A_4\) or \(A_5\) automorphism groups. These constructions naturally yield binary quasi-cyclic Goppa codes and related codes, and for a representative \(A_4\)-invariant family, we determined the parameters by applying the minimum-distance criteria developed above.

Several problems remain open. For example, for the second minimum-distance criterion, concerning Goppa polynomials of the form
\(g(X)=A(X)h(\phi(X))\), it would be interesting to find further simple explicit families beyond \(g(X)=X^4+cX\). We will investigate this problem further in our future work.

\section*{Acknowledgment}
Part of this work has been submitted to the conference Sequences and Their Applications (SETA) 2026.
    \bibliographystyle{plain}
    \bibliography{ref.bib}


 \end{document}